\bmdefine\taub{\tau}
\bmdefine\mub{\mu}
\bmdefine\lab{\lambda}
\bmdefine\varsigmab{\varsigma}
 \numberwithin{equation}{section}
\newcommand{\R}{\mathbb{R}}
\newcommand{\N}{\mathbb{N}}
\newcommand{\1}{\mathbf{1}}
\newcommand{\Q}{\Qu}
\newcommand{\dx}{\mathrm{d}}
\renewcommand{\P}{\textsf{\upshape P}}
\newcommand{\Qu}{\textsf{\upshape Q}}
\newcommand{\filt}[1]{\mathfrak{#1}}
\newcommand{\sigalg}[1]{\mathscr{#1}}
\newcommand{\lc}{[\![}
\newtheorem{thm}{Theorem}[section]
\newtheorem{lem}[thm]{Lemma}
\newtheorem{prop}[thm]{Proposition}
\newtheorem{cor}[thm]{Corollary}
\theoremstyle{definition}
\newtheorem{defn}[thm]{Definition}
\theoremstyle{remark}
\newtheorem{example}[thm]{Example}
\theoremstyle{remark}
\newtheorem{rem}[thm]{Remark}
\title{
Volatility and Arbitrage
\thanks{We are grateful to Adrian Banner,  David Hobson, Blanka Horv\'ath, Antoine Jacquier, Vassilios Papathanakos, Josef Teichmann, Minghan Yan, and Alexander Vervuurt  for helpful comments. I.K.~acknowledges   support from  the National Science Foundation under Grant   NSF-DMS-14-05210.   J.R.~acknowledges   generous   support  from  the Oxford-Man Institute of Quantitative Finance, University of Oxford.}
 }
\author{  
\textsc{E. Robert Fernholz}  \thanks{
  \textsc{Intech} Investment Management,  One Palmer Square, Suite 441, Princeton, NJ 08542      (E-mail:    {\it bob@bobfernholz.com}). 
  }  
 \and
\textsc{Ioannis Karatzas}                \thanks{ 
Department of Mathematics, Columbia University (E-mail:    {\it ik1@columbia.edu}); and \textsc{Intech} Investment Management,  One Palmer Square, Suite 441, Princeton, NJ 08542      (E-mail:    {\it ikaratzas@intechjanus.com}).
  }  
 \and
\textsc{Johannes Ruf}                \thanks{ 
Department of Mathematics, University College London, Gower Street, London WC1E 6BT, United Kingdom (E-mail:    {\it j.ruf@ucl.ac.uk}).
          }
                                      }
\begin{document}

\maketitle

\begin{abstract} \noindent
 The capitalization-weighted total relative variation $\sum_{i=1}^d \int_0^\cdot   \mu_i (t)  \dx \langle \log \mu_i \rangle (t)$  in an equity market consisting of a fixed number $d$ of assets  with capitalization weights $\mu_i (\cdot)$  is an observable  and nondecreasing function of time. If this observable of the market is not just nondecreasing, but actually grows at a rate which is  bounded away from zero, then strong arbitrage can be constructed relative to the market  over sufficiently long time horizons.  It has been an open issue for more than ten years, whether such strong outperformance of the market is possible also over  arbitrary time horizons under the stated  condition. We show that this is not possible in general, thus settling this long-open question. We also show that, under appropriate additional conditions, outperformance over any time horizon indeed becomes possible, and exhibit   investment strategies that effect it. 
\end{abstract}

\smallskip
\noindent{\it Keywords and Phrases:} Trading strategies, functional generation, relative arbitrage, short-time arbitrage,   support of diffusions, diffusions on manifolds, nondegeneracy.

\smallskip
\noindent{\it AMS 2000 Subject Classifications:} 60G44, 60H05, 60H30, 91G10.

\input amssym.def
\input amssym

\section{Introduction and summary}

At least since \cite{Fe}, it has been known that  volatility in a stock market can generate arbitrage, or at least {\em relative arbitrage} between a specified portfolio and the market portfolio. However, the questions of exactly what level of  volatility is required, and how long it might take, for this arbitrage to be realized,  have never been fully answered. Here we hope to shed some light on these questions and come to an understanding about what might represent adequate volatility, and over which time-frame relative arbitrage might be achieved.

A common condition regarding market volatility,  sometimes known as {\em strict nondegeneracy}, is the requirement that the eigenvalues of the market covariation matrix be bounded away from zero. It was shown in \cite{Fe} that strict nondegeneracy, coupled with market {\em diversity}, the condition that the largest relative market weight be bounded away from one, will produce relative arbitrage with respect to the market over   sufficiently long  time horizons. Later,  \cite{FKK} showed that these   two conditions lead to relative arbitrage over arbitrarily short time horizons. Market diversity is actually a rather mild condition, one that would be satisfied in any market with even a semblance of anti-trust regulation. However, strict nondegeneracy is a much stronger condition, and   probably not amenable to statistical verification in any realistic market setting. While it might be reasonable to assume that the market covariation matrix is {\em nonsingular}, it would seem rather courageous to make strong assumptions regarding the  behavior over time of the smallest eigenvalue of a random $d\times d$ matrix, where $d \in \N$ is usually a large integer, standing for the number of stocks in an equity market. 

Accordingly,  it is preferable to avoid the use of strict nondegeneracy as a characterization of adequate volatility, and  
consider instead measures based on aggregated relative variations. The most important such measure is the so-called {\em cumulative excess growth} $ \Gamma^{\bm H} (\cdot)$. This measure is based on the weighted average of the variances of the logarithmic market weights. The weights  used are exactly the market weights; more precisely, we have
\begin{equation}\label{eq:1.1}
\Gamma^{\bm H} (\cdot) := 
\frac{1}{2}   \sum_{i=1}^d \int_0^{ \cdot}  \mu_i (t)   \, \mathrm{d}    \big \langle \log \mu_i    \big  \rangle (t) .
\end{equation}
This quantity will be discussed at some length below. Here, $\mu_i(t)$ represents the market weight of the $i$-th stock at time $t \geq 0$, for each $i = 1, \cdots, d$. \cite{FKVolStabMarkets} show that if the slope of $\Gamma^{\bm H} (\cdot) $ is bounded away from zero, then relative arbitrage with respect to the market  will exist over a long enough time period. We shall see  in Section~\ref{S:6} that this condition does not necessarily imply relative arbitrage over an arbitrarily short period of time. However,  not all is lost: in Section~\ref{S:5} we shall see that under {\it  additional} assumptions, such as when there are only two stocks $(d=2)$ or when the market weights satisfy  appropriate  time-homogeneity properties, relative arbitrage {\it does} exist  over arbitrary time horizons.  Other  sufficient conditions are also provided.
We remark  also that \cite{Pal:exponentially} recently derived  sufficient conditions for large markets that yield asymptotic short-term arbitrage.

\smallskip
\noindent
{\it Preview:} The structure of the paper is as follows: Sections~\ref{S:2}, \ref{S:3}, and \ref{S:4} introduce the basic definitions, including the concept of generating functions. Introduced by \cite{F_generating, Fe} and developed in \cite{FK_survey} and \cite{Karatzas:Ruf:2016}, these functions are useful in creating trading strategies that produce arbitrage relative to the market. Section~\ref{S:5} establishes conditions under which relative arbitrage can be shown to exist over arbitrary time horizons.   Section~\ref{S:6} constructs examples of markets which have adequate volatility but  no arbitrage --- indeed, the price processes in this examples are all martingales. Section~\ref{S:7} summarizes the results of this paper and discusses some open questions.

\smallskip
 One final note before we begin, and this regards {\em classical arbitrage} as compared to {\em relative arbitrage}. Classical arbitrage is measured versus cash, which in our context can be considered to be a constant, positive process. The relative arbitrage we present here is generally versus the market portfolio, and this means that the market portfolio replaces cash as the benchmark against which the relative arbitrage is measured. All types of arbitrage have a boundedness restriction to prevent ``doubling'' strategies. For classical arbitrage, the value of the trading strategy that creates the arbitrage must be bounded from below relative to cash. In relative arbitrage, this bound is relative to the market. In general, there is no bound on market value, so one bound does not yield the other one, and vice versa. Hence, these two types of arbitrage can be incompatible.

\section{The market model and trading strategies} 
 \label{S:2}

We fix a probability space $(\Omega, \sigalg{F}, \P)$, endowed  with a right-continuous filtration $ \filt{F} = (\sigalg{F} (t) )_{t \geq 0 }$. For simplicity, we take  $\sigalg{F}(0) = \{ \emptyset, \Omega \}$, mod.~$\P$. All processes to be  encountered  will be adapted to this filtration. On this filtered probability space and for some $d \in \N \setminus \{1\}$, we consider 
a continuous $d$--dimensional semimartingale $\mu(\cdot) = (\mu_1(\cdot), \cdots, \mu_d(\cdot))'$ taking values in the   lateral face of the unit simplex
\begin{equation*}
{\bm \Delta}^d := \bigg\{ \big(x_1, \cdots, x_d \big)'  \in [0, 1]^d\,:\, \sum_{i=1}^d x_i =1 \bigg\}  \subset \mathbb H^d  ,
\end{equation*}
 where ${\mathbb{H}}^d$ denotes the hyperplane   
\begin{equation}
\label{eq:160510.1}
{\mathbb H}^d := \bigg\{ \big(x_1, \cdots, x_d \big)'  \in \R^d\,:\, \sum_{i=1}^d x_i =1 \bigg\}. 
\end{equation}
We assume that $\mu(0) \in \bm \Delta^d_+$, where we set
\begin{align} \label{eq:160716.1}
	\bm \Delta^d_+ := \bm \Delta^d \cap (0,1)^d.
\end{align}
We interpret $\mu_i(t)$ as the relative weight, in terms of capitalization in the market, of company $i = 1, \cdots, d$ at time $t \geq 0$.  An individual company's weight is allowed to become zero, but we insist that $ \sum_{i=1}^d \mu_i (t) = 1$ must hold for all $t \geq 0$.

 In this spirit, it is useful   to think of the generic market weight process $\mu_i (\cdot)$ as the ratio
\begin{equation}
\label{eq: 2.1a}
\mu_i (\cdot)    := \frac{S_i (\cdot)}{ \Sigma (\cdot)}, \qquad i=1, \cdots, d;\qquad \Sigma (\cdot) :=  S_1 (\cdot) + \cdots + S_d (\cdot) >0. 
\end{equation}
Here $S_i (\cdot)$ is a continuous nonnegative semimartingale  for each $i = 1, \cdots, d$, representing the capitalization (stock-price, multiplied by the number of shares outstanding) of the $i$-th company; whereas the  process $\Sigma (\cdot)$, assumed to be strictly positive, stands of the total capitalization of the entire market.

For later reference, let us introduce the stopping times
\begin{equation}
\label{eq: 6}
\mathscr D := \mathscr D_1 \wedge \cdots \wedge \mathscr D_d, \qquad \mathscr D_i : = \inf \big\{ t \ge 0 :\, \mu_i (t) = 0 \big\}.
\end{equation}
To avoid notational inconveniences below, we assume that $\mu_i(\mathscr D_i +t) = 0$ holds for all $i = 1, \cdots, d$ and $t \geq 0$; in other words, zero is an absorbing state for any of the market weights.

One of our results, Theorem~\ref{T:5.2} below, needs the following notion.
\begin{defn}[Deflator]  \label{D:deflator}
A strictly positive process $Z(\cdot)$ is called {\it deflator} for the vector semimartingale $\mu (\cdot)$ of relative market weights, if the product 
$Z(\cdot) \mu_i (\cdot)$ is a local martingale  for every $i=1, \cdots, d$.  
\end{defn}
 Except when explicitly stated otherwise, the results below will hold  independently of whether the market model admits  a deflator, or not.

We  now consider   a predictable process 
$
\vartheta ( \cdot) = \big( \vartheta_1 ( \cdot) , \cdots, \vartheta_d ( \cdot) \big)^\prime
$
with values in $ \R^d$, and interpret   $ \vartheta_i ( t)$ as the number of shares held at time $t \geq 0$ in the stock of company $ i=1, \cdots, d$.  Then  the total {\it value}, or ``wealth'',  of this investment, measured in terms of the total market capitalization,
 is
\begin{equation*}
V^\vartheta (\cdot):= \sum_{i=1}^d  \vartheta_i ( \cdot) \mu_i (\cdot).
\end{equation*}

\begin{defn} [Trading strategies]
\label{def: TS}
Suppose that the $ \R^d$--valued, predictable process $ \vartheta (\cdot)$ is integrable with respect to the continuous semimartingale $\mu( \cdot) $. We shall say that $ \vartheta (\cdot)$ is a  {\it trading strategy} if it satisfies the so-called ``self-financibility'' condition
\begin{equation}
\label{eq: SF}
V^\vartheta (T) =V^\vartheta (0) +  \int_0^T \sum_{i=1}^d  \vartheta_i ( t) \mathrm{d} \mu_i (t), \qquad T \geq 0.
\end{equation}
We call a trading strategy $\vartheta(\cdot)$ {\it long-only,} if it never sells any stock short: i.e., if $  \vartheta_i ( \cdot) \geq 0$   holds   for all $i = 1, \cdots, d$. 
   \end{defn}

The vector stochastic integral in  \eqref{eq: SF} gives the {\it gains-from-trade} realized over   $[0,T]$. The self-financibility requirement of \eqref{eq: SF} posits that these ``gains'' account for the entire change in the value generated by the trading strategy $ \vartheta (\cdot)$ between the start $ t=0$ and the end $ t=T$ of the   interval $[0,T]$.

The following observation, a result of straightforward computation,  is needed in  the proof of Theorem~\ref{T:5.1}.
\begin{rem}[Concatenation of trading strategies]
\label{R:2.4}
Suppose we are given a real number $b \in \R$, a stopping time $\tau$, and a trading strategy $ \varphi (\cdot) $. We then form a new process $\psi (\cdot) = \big( \psi_1 (\cdot), \cdots,  \psi_d (\cdot) \big)'$, again integrable with respect to $\mu(\cdot)$ and with components
\begin{equation} \label{eq:160717.3}
\psi_i (\cdot) := b  + \big( \varphi_i (\cdot)- V^\varphi (\tau) \big) \1_{\lc\tau, \infty\lc} (\cdot), \qquad i = 1, \cdots, d.
\end{equation}
Then the process $\psi (\cdot)$ is a trading strategy itself, and its associated wealth process $V^\psi (\cdot) $ is given by 
\begin{equation*}
V^\psi (\cdot) = b  + \big(V^\varphi  ( \cdot) - V^\varphi (\tau) \big) \1_{\lc\tau, \infty\lc} (\cdot).\end{equation*}
\end{rem}

\section{Functional generation of trading strategies} 
 \label{S:3}

There is a special class of trading strategies, for which the representation of \eqref{eq: SF} takes an exceptionally simple and explicit form; in particular, one in which stochastic integrals disappear entirely from the right-hand side of \eqref{eq: SF}.  
In order to present this class of trading strategies, we start with a {\it regular function:}   a continuous mapping $\bm G : \bm \Delta^d \to \R$ that satisfies a generalized It\^o rule. 
By this, we mean that the process $\bm G(\mu(\cdot))$ can be written as the sum 
\begin{equation}
\label{eq: 3.0}
\bm G(\mu (\cdot)) = \bm G(\mu (0))+ \int_0^\cdot \sum_{i=1}^d D_i  \bm G(\mu ( t))\, \mathrm{d} \mu_i (t) - \Gamma^{\bm G} (\cdot)
\end{equation}
of a constant initial condition $\bm G(\mu(0))$, of a stochastic integral with respect to $\mu(\cdot)$ of another measurable function $D\bm G: \bm \Delta^d \to \R^d$ evaluated at $\mu(\cdot)$,  and of a process $-\Gamma^{\bm G}(\cdot)$ which has finite variation on compact time-intervals.  The precise definition can be found in  \cite{Karatzas:Ruf:2016}. We  stress here that the notion of regular function is relative to a given   market weight process $\mu(\cdot)$; a  function $\bm G$ might be regular with respect so some such process, but not with respect to another.

In this paper, we shall only consider regular functions $\bm G$ that can be extended to   twice continuously differentiable functions in a neighborhood of the set $\bm \Delta^d_+$  in \eqref{eq:160716.1}. From now on, every regular function $\bm G$   we encounter will be  supposed to have this smoothness property. Then we may assume that $D{\bm G}(x)$ is the gradient of $\bm G$ evaluated at $x$, at least for all $x \in \bm \Delta^d_+ $. Moreover, the  finite-variation process $\Gamma^{\bm G}(\cdot)$  will then be  given, on the stochastic interval $\lc0, \mathscr D\lc$ and in the notion of \eqref{eq: 6}, by the expression 
\begin{equation}
\label{eq: Gamma}
  \Gamma^{\bm G} (\cdot)= -
\frac{1}{2}  \sum_{i=1}^d \sum_{j=1}^d \int_0^{ \cdot}    D^2_{i,j}
{\bm G}\big(\mu (t)  \big)   \,  \mathrm{d} \big \langle \mu_i, \mu_j  \big \rangle (t).
     \end{equation}

There are two ways in which a regular function $\bm G$ can generate a trading strategy.
\begin{enumerate}
	\item {\it Additive generation:} The vector process $\bm \varphi^{\bm G} (\cdot) = \big( \bm \varphi_1^{\bm G} (\cdot), \cdots,  \bm \varphi_d^{\bm G} (\cdot) \big)'$ with components
\begin{equation}
\label{eq:3.2}
{\bm \varphi}^{\bm G}_i ( \cdot) := D_i {\bm G}\big(\mu (\cdot) \big)+ \Gamma^{\bm G} (\cdot) +    {\bm G}\big(\mu (\cdot) \big)   -  \sum_{j=1}^d  \mu_j (\cdot) D_j {\bm G}\big(\mu (\cdot) \big)  , \qquad i=1, \cdots, d,
\end{equation}
is a trading strategy, and is said to be {\it  additively  generated by $\bm G$.}
The  wealth process $ V^{{\bm \varphi}^{\bm G}} (\cdot ) $ associated to this trading strategy has the extremely simple form
\begin{equation}
\label{eq: valphiG}
V^{{\bm \varphi}^{\bm G}} (\cdot ) = \bm {\bm G}(\mu (\cdot) ) + \Gamma^{\bm G} (\cdot).
\end{equation}
That is, $V^{{\bm \varphi}^{\bm G}} (\cdot ) $ can be represented as the sum of a completely observed and ``controlled" term $\bm G(\mu ( \cdot))$, plus a ``cumulative  earnings" term $\Gamma^{\bm G} (\cdot) $. It is important to note that this expression is completely devoid of stochastic integrals.
	\item  {\it Multiplicative generation:}  The second way to generate a trading strategy requires the process $1/\bm G(\mu(\cdot))$ to be locally bounded.  This assumption allows us to define the process  
 \begin{equation}
\label{eq: 3.7a}
 Z^{\bm G} (\cdot) := \bm G  \big( \mu (\cdot)  \big)     \exp \left(    \int_0^{\cdot} \frac{\mathrm{d} \Gamma^{\bm G}   (t)    }{  \bm G  \big( \mu (t)\big)   }\right) > 0.
  \end{equation} 
  Then the vector process $\bm \psi^{\bm G} (\cdot) = \big( \bm \psi_1^{\bm G} (\cdot), \cdots,  \bm \psi_d^{\bm G} (\cdot) \big)'$ with components
\begin{equation}
\label{eq: 3.7b}
\bm \psi^{\bm G}_i (\cdot) :=  Z^{\bm G}  (\cdot   ) \bigg(  1 + \frac{1   
}{\bm  G \big( \mu ( \cdot) \big) } \bigg( D_i 
\bm G \big( \mu ( \cdot) \big)  - \sum_{j=1}^d D_j   
\bm G \big( \mu (  \cdot) \big)  \mu_j ( \cdot)  \bigg)   \bigg), \qquad i=1, \cdots, d
\end{equation} 
is a trading strategy, and is said to be {\it multiplicatively generated by} $\bm G$. The value $ V^{ {\bm \psi}^{\bm G}}     (\cdot)$  this strategy generates,  is given by the  process   of \eqref{eq: 3.7a}, namely: $ V^{ {\bm \psi}^{\bm G}}  (\cdot)= Z^{\bm G}   (\cdot)$. 
\end{enumerate}

If the regular function $\bm G$ is concave, then it can be checked that both strategies ${\bm \varphi}^{\bm G}  ( \cdot)  $  and  ${\bm \psi}^{\bm G}  ( \cdot)  $ are long-only. Moreover, in this case, the process   $\Gamma^{\bm G} (\cdot)$   in \eqref{eq: 3.0} is actually nondecreasing.  More generally, we introduce the following notion.

\begin{defn}[Lyapunov function]  
A regular function $\bm G$ is called    {\it Lyapunov function}, if the 
process  $\Gamma^{\bm G} (\cdot)$  in \eqref{eq: 3.0}  is nondecreasing.  
\end{defn}

 For a Lyapunov function $\bm G$ the process $\Gamma^{\bm G} (\cdot)$ has    the significance of an aggregated {\it measure of cumulative   volatility} in the market; the Hessian matrix-valued process $-D^2 {\bm G} (\mu (\cdot))$, which effects the aggregation,  acts then as a sort of ``local curvature" on the covariation matrix of the semimartingale $\mu (\cdot),$ to give us this cumulative measure of volatility.

The theory and applications of functional  generation were developed by  \citet{F_generating, Fe}; see also \cite{Karatzas:Ruf:2016}. All   claims made in this section are proved in these references, and several  examples of functionally generated trading strategies are   discussed. We  introduce now  four regular functions which will be important here.

\begin{enumerate}
	\item The entropy function of statistical mechanics and information theory
\begin{equation}
\label{eq: 3.13}
\bm H := - \sum_{i=1}^d x_i  \log  x_i  , \qquad x \in {\bm \Delta}^d
\end{equation}
with the convention $0 \times \log \infty = 0$
	is a particularly important 
	regular function. 	Note that $\bm H$ is concave, thus also a Lyapunov function, and takes values in $[0,  \log d]$. It generates additively the long-only {\it entropy-weighted trading strategy}
\begin{equation}
\label{eq: 3.14}
{\bm \varphi}^{\bm H}_i(\cdot) =\left(  \log \left( \frac{1} {\mu_i ( \cdot) }\right) + \Gamma^{\bm H} ( \cdot)  \right) \1_{\{  \mu_i(\cdot)>0\}},   \qquad i=1, \cdots, d.
\end{equation}
Here
 \begin{equation}
 \label{eq: 3.1}
\Gamma^{\bm H} (\cdot) = 
\frac{1}{2}   \sum_{i=1}^d \int_0^{ \cdot} \frac{    \mathrm{d}    \big \langle  \mu_i   \big  \rangle  (t)}{\mu_i (t)}  =
\frac{1}{2}   \sum_{i=1}^d \int_0^{ \cdot}  \mu_i (t)   \, \mathrm{d}    \big \langle \log  \mu_i     \big  \rangle (t) \qquad \text{on $\lc0, \mathscr D\lc$}
\end{equation}
denotes the cumulative earnings   of the strategy ${\bm \varphi}^{\bm H} (\cdot)$,  as well as the ${\bm H}-$aggregated   measure of cumulative   volatility in the market  in the manner of \eqref{eq: 3.0}.   This nondecreasing, trace-like process $\Gamma^{\bm H} (\cdot)$ has already been encountered in \eqref{eq:1.1}; it is   called the {\it cumulative excess growth} of the market in Stochastic Portfolio Theory, and plays an important role there. 

The process $\Gamma^{\bm H} (\cdot)$ measures the market's cumulative total relative variation   -- stock-by-stock, then averaged according to each stock's   weight. As such, it offers a gauge of the market's ``intrinsic volatility.''  Figure~1   uses the monthly stock database of the Center for Research in Securities Prices (CRSP) at the University of Chicago, to plot  the quantity 
of \eqref{eq: 3.14} over the 80-year period 1926-2005.

\begin{figure}[!htbp] 
				\centering
				\includegraphics[width=0.70 \textwidth]{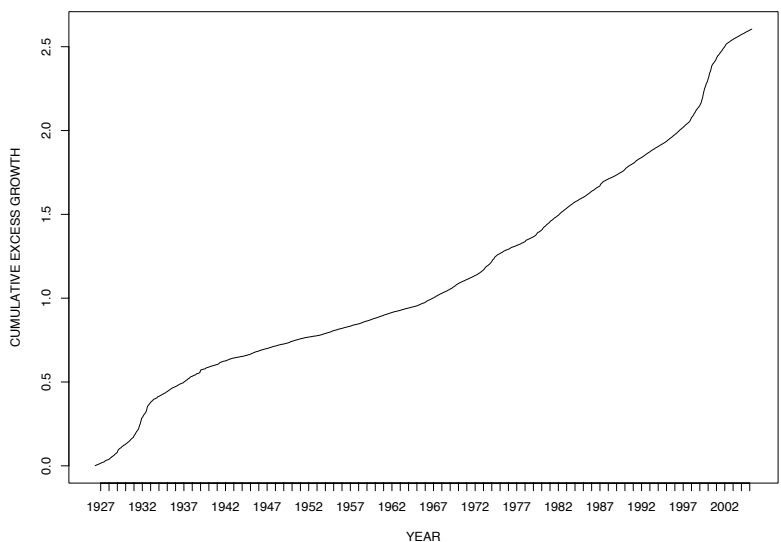}\
				\caption{{\small Cumulative intrinsic variation  
				 $\Gamma^{\bm H} (\cdot) $ 
				 for the U.S. Market, 1926-2005.}}
			\end{figure}

	\item The quadratic 
\begin{align} 
\label{eq:Q}
	\bm Q(x) := 1 - \sum_{i = 1}^d x_i^2, \qquad x \in \bm \Delta^d,
\end{align}
takes values in $[0, 1-1/d]$, and is also a concave regular function. It is mathematically very convenient to work with $\bm Q$, so  this function will play a major role when constructing specific counterexamples in Section~\ref{S:6}.  
The corresponding aggregated   measure of cumulative   volatility is   given by the nondecreasing  trace process
\begin{align}\label{eq:GQ}
	\Gamma^{\bm Q} (\cdot) = \sum_{i=1}^d \langle \mu_i\rangle(\cdot);
\end{align}
we note also that the difference  $ 2 \Gamma^{\bm H} (\cdot) - \Gamma^{\bm Q}(\cdot)$ is nondecreasing, where $ \Gamma^{\bm H}(\cdot)$ is given in \eqref{eq: 3.1}.
\item We shall also have a close  look at the concave, geometric mean  function 
\begin{align} \label{eq:R}
\bm R(x) := \bigg(  \prod_{i=1}^d x_i \bigg)^{1/d}, \qquad x \in  \bm \Delta^d.
\end{align}

\item Finally, for $q \geq 1$, we shall use, in one of the proofs, the power 
 	\begin{align} \label{eq:P}
		\bm F(x) := x_1^q, \qquad x \in \bm \Delta^d.
	\end{align}
		Note that $\bm F$ is, for $q>1$, a {\it convex} rather than concave function, as was the case in the other three examples. It is nevertheless   regular, so it can still be used as a generating function.  Indeed, if $1/\mu_1(\cdot)$ is locally bounded away from zero, the multiplicatively generated strategy $\bm \psi(\cdot)$ of \eqref{eq: 3.7b} exists. More precisely, with the process of \eqref{eq: 3.7a} given now by
\begin{align}  \label{eq:160717.2}
Z^{\bf F} (\cdot) =  \big( \mu_1 (\cdot) 
\big)^{q }  \exp \left( - \frac{1}{2} q\big( q-1 \big) \int_0^{\cdot} \big( \mu_1 (t) 
\big)^{-2}  \dx \langle \mu_1 \rangle  (t)\right),
	\end{align}
the expression in \eqref{eq: 3.7b} can be written here as
\begin{align}  \label{eq:160717.1}
\bm \psi_1^{\bf F}  (\cdot) = \left( \frac{q}{\mu_1(\cdot)} +1-q \right)   Z^{\bf F}  (\cdot); \qquad 
\bm \psi_i^{\bf F}  (\cdot) = \big( 1-q \big)  Z^{\bf F}  (\cdot), \quad i=2, \cdots, d.
\end{align}
\end{enumerate}

\section{Relative arbitrage, and an old question} 
 \label{S:4}

We introduce now the important notion of relative arbitrage with respect to the market. 
\begin{defn}[Relative arbitrage]
\label{D:4.1}
Given a real constant $T>0$, we say that a trading strategy $\vartheta (\cdot) $ is a {\it relative arbitrage with respect to the market} over the time horizon $[0,T]$ if $V^\vartheta (0) = 1$, $V^\vartheta(\cdot) \geq 0$, and
\begin{equation*}
\P \big( V^\vartheta (T) \ge 1 \big) =1, \qquad \P \big( V^\vartheta (T) >1 \big) >0.
\end{equation*}
If in fact $\P \big( V^\vartheta (T) > 1 \big) =1$ holds,   this  relative arbitrage is called {\it strong}.
\end{defn}

\begin{rem}[Equivalent martingale measure]\label{R:4.2}
Fix a real number $T > 0$. Then no relative arbitrage is possible over the time horizon $[0,T]$  with respect to a market whose relative weights $ \mu_1 (\cdot \wedge T), \cdots, \mu_d (\cdot \wedge T)$ are martingales under some equivalent probability measure $\Q_T \sim \P$  defined on $\sigalg F (T)$. 

Suppose now that no relative arbitrage is  possible over the  time horizon $[0,T]$, with respect to a market with relative weights $ \mu (\cdot)$. Provided that a deflator for $\mu (\cdot)$ exists, an  equivalent probability measure $\Q_T \sim \P$ then exists on $\sigalg{F} (T)$, under which the relative weights $\mu_1 (\cdot \wedge T), \cdots, \mu_d (\cdot \wedge T)$ are martingales;
see \cite{DS_fundamental} or \cite{KK}.
\end{rem}

 Since the   process $\mu(\cdot)$   expresses   
 the market portfolio, the arbitrage of Definition~\ref{D:4.1} can be interpreted as   {\it relative arbitrage with respect to the market.}   The question of whether a given market portfolio can be ``outperformed" as in   Definition~\ref{D:4.1}, is of great theoretical and practical importance -- particularly   given the proliferation of index-type mutual funds that try to track  and possibly outperform a specific  benchmark  market portfolio (or ``index"). To wit:    
 {\it Under what conditions is there relative arbitrage with respect to a specific market portfolio? over which time horizons? if it exists, can such    relative arbitrage  be strong?}

 Functionally-generated trading strategies are ideal for answering such questions, thanks to the representations of \eqref{eq: Gamma} and \eqref{eq: valphiG}  which describe   their performance relative to the market  in a pathwise manner, devoid of stochastic integration. The following result is taken from  \cite{Karatzas:Ruf:2016}; its lineage goes back to  \cite{Fe} and  to \cite{FKVolStabMarkets}. In our present context, it is a straightforward consequence of the representation  \eqref{eq: valphiG}. 
 
\begin{thm}[Strong relative arbitrage over sufficiently long  time horizons]
 \label{T:4.3}
Suppose that $\bm G: \bm \Delta^d \rightarrow [0,\infty)$ is a Lyapunov function with $\bm G (\mu (0))>0$. Suppose, moreover, that there is a  real number  $ T_* >0$  with  the property 
\begin{equation}   
 \label{eq:160131.3}
\P \big( \Gamma^{\bm G} (T_*) > \bm G(\mu(0)) \big) = 1 .
\end{equation}
  Then the trading strategy $ {\bm \varphi}^{\bm G_*}  ( \cdot) = \big(   {\bm \varphi}^{\bm G_*}_1  ( \cdot), \cdots,   {\bm \varphi}^{\bm G_*}_d  ( \cdot) \big)'$, generated additively in the manner of \eqref{eq:3.2} by the function $\bm G_* := \bm G / \bm G (\mu (0))$, is strong 
 relative arbitrage with respect to the market  over any time horizon $[0,T] $ with $T \geq T_*$.   
 \end{thm}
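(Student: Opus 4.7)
The plan is to read off everything from the pathwise wealth representation \eqref{eq: valphiG}, applied to the normalized generating function $\bm G_*=\bm G/\bm G(\mu(0))$. Since $\bm G$ is a Lyapunov function, so is every positive scalar multiple of it, and the corresponding finite-variation process simply rescales: $\Gamma^{\bm G_*}(\cdot)=\Gamma^{\bm G}(\cdot)/\bm G(\mu(0))$. Moreover, $\bm G_*$ maps $\bm\Delta^d$ into $[0,\infty)$, and $\bm G_*(\mu(0))=1$.

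First I would record the identity
\begin{equation*}
V^{\bm\varphi^{\bm G_*}}(\cdot)=\bm G_*(\mu(\cdot))+\Gamma^{\bm G_*}(\cdot),
\end{equation*}
which follows directly from \eqref{eq: valphiG}, and use it to check the initial condition: at $t=0$ we have $\Gamma^{\bm G_*}(0)=0$ and $\bm G_*(\mu(0))=1$, so $V^{\bm\varphi^{\bm G_*}}(0)=1$. Nonnegativity of the wealth is equally transparent, because $\bm G_*\geq 0$ on $\bm\Delta^d$ and $\Gamma^{\bm G_*}(\cdot)\geq 0$ by the Lyapunov property (monotone nondecreasing and started at zero).

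Next I would establish the strong outperformance for $T\geq T_*$. From the representation, using $\bm G_*(\mu(T))\geq 0$ and the monotonicity of $\Gamma^{\bm G_*}$,
\begin{equation*}
V^{\bm\varphi^{\bm G_*}}(T)\;\geq\;\Gamma^{\bm G_*}(T)\;\geq\;\Gamma^{\bm G_*}(T_*)\;=\;\frac{\Gamma^{\bm G}(T_*)}{\bm G(\mu(0))}.
\end{equation*}
Hypothesis \eqref{eq:160131.3} then gives $\Gamma^{\bm G}(T_*)>\bm G(\mu(0))$ almost surely, hence $V^{\bm\varphi^{\bm G_*}}(T)>1$ almost surely, which is precisely the definition of strong relative arbitrage from Definition~\ref{D:4.1}.

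There is essentially no obstacle here: all the analytic content sits in the functional-generation identity \eqref{eq: valphiG} and in the Lyapunov property, both stated and proved in the references cited before the theorem. The only items requiring a sentence are (i) noting that Lyapunov functions and their associated $\Gamma$-processes scale linearly under multiplication by the positive constant $1/\bm G(\mu(0))$, so that $\bm G_*$ is again a Lyapunov function and generates a bona fide trading strategy via \eqref{eq:3.2}, and (ii) observing that nothing in the argument depends on the specific $T\geq T_*$, so a single strategy $\bm\varphi^{\bm G_*}(\cdot)$ simultaneously realizes strong relative arbitrage over every such horizon.
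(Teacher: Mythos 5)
Your proof is correct and follows exactly the route the paper intends: the paper itself offers no separate argument but states that the result "is a straightforward consequence of the representation \eqref{eq: valphiG}," which is precisely what you spell out, together with the linear scaling $\Gamma^{\bm G_*}(\cdot)=\Gamma^{\bm G}(\cdot)/\bm G(\mu(0))$ and the monotonicity of $\Gamma^{\bm G_*}(\cdot)$. Nothing is missing.
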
 
 
  The following  is a direct  corollary of  Theorem~\ref{T:4.3}.
 \begin{cor}[Slope bounded from below]  \label{C:4.4}
Suppose that $\bm G: \bm \Delta^d \rightarrow [0,\infty)$ is a regular function with $\bm G (\mu (0))>0$ such that
\begin{equation}
\label{eq: 4.3}
\P \left( \text{the mapping} ~ [0,\infty) \ni t \mapsto \Gamma^{\bm G}(t)- \eta t \,\,\,\text{is nondecreasing} \right)= 1, \quad \text{for some  $\eta >0$}.
\end{equation}
Then the trading strategy ${\bm \varphi}^{\bm G_*}(\cdot)$ of Theorem~\ref{T:4.3} is strong relative arbitrage with respect to the market, over any time horizon $[0,T]$ with
\begin{equation}
\label{eq: 4.4}
T > \frac{ {\bm G}(\mu (0) )}{\eta}.
\end{equation}
\end{cor}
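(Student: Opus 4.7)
The plan is to simply verify the hypothesis \eqref{eq:160131.3} of Theorem~\ref{T:4.3} from the slope condition \eqref{eq: 4.3}, and then invoke that theorem directly.

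First I would observe that \eqref{eq: 4.3} with $\eta>0$ immediately forces $\bm G$ to be a Lyapunov function: if $t \mapsto \Gamma^{\bm G}(t) - \eta t$ is nondecreasing $\P$-almost surely, then in particular $\Gamma^{\bm G}(t) - \Gamma^{\bm G}(s) \geq \eta (t-s) \geq 0$ for all $0 \leq s \leq t$, so the process $\Gamma^{\bm G}(\cdot)$ is itself nondecreasing. Moreover, evaluating the identity \eqref{eq: 3.0} at $t=0$ yields $\Gamma^{\bm G}(0)=0$, so the monotonicity of $t \mapsto \Gamma^{\bm G}(t)-\eta t$ gives the pathwise lower bound
\begin{equation*}
\Gamma^{\bm G}(t) \;\geq\; \eta\, t \qquad \text{for all } t \geq 0, \quad \P\text{-a.s.}
\end{equation*}

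Now fix any $T$ satisfying \eqref{eq: 4.4}, i.e.\ $\eta T > \bm G(\mu(0))$, and choose $T_* \in \big( \bm G(\mu(0))/\eta,\, T \big]$. Then the lower bound above yields
\begin{equation*}
\P\big(\Gamma^{\bm G}(T_*) \geq \eta T_* > \bm G(\mu(0)) \big) \;=\; 1,
\end{equation*}
which is exactly the hypothesis \eqref{eq:160131.3} of Theorem~\ref{T:4.3}. Applying that theorem to the Lyapunov function $\bm G$ produces the additively generated strategy ${\bm \varphi}^{\bm G_*}(\cdot)$ with $\bm G_* = \bm G/\bm G(\mu(0))$, and guarantees that it is a strong relative arbitrage with respect to the market over $[0,T_*]$, hence also over the larger horizon $[0,T]$ (strong relative arbitrage at time $T_*$ with value process bounded from below by zero persists to any $T \geq T_*$, since after $T_*$ one can simply hold the wealth in the market portfolio via the concatenation device of Remark~\ref{R:2.4}, keeping the strict inequality $V^{{\bm \varphi}^{\bm G_*}}(T) > 1$ almost surely).

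There is no real obstacle here — the argument is a one-line consequence of Theorem~\ref{T:4.3}. The only minor subtlety worth flagging is the last step: Theorem~\ref{T:4.3} is stated for all $T \geq T_*$, so strictly speaking no further concatenation argument is needed; it suffices to take $T_* = T$ directly, in which case $\eta T > \bm G(\mu(0))$ immediately delivers \eqref{eq:160131.3}, and the conclusion of Theorem~\ref{T:4.3} applies verbatim.
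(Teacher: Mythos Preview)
Your proposal is correct and is exactly the verification the paper has in mind: the corollary is stated there as ``a direct corollary of Theorem~\ref{T:4.3}'' with no further proof, and your argument simply spells out why \eqref{eq: 4.3} implies both that $\bm G$ is a Lyapunov function and that \eqref{eq:160131.3} holds for any $T_*>\bm G(\mu(0))/\eta$. As you yourself note at the end, the detour via an intermediate $T_*$ and concatenation is unnecessary, since Theorem~\ref{T:4.3} already gives the conclusion for all $T\geq T_*$; taking $T_*=T$ is the cleanest route.
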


The assertion of Corollary~\ref{C:4.4}  appears already  in \cite{FKVolStabMarkets} for the entropy function $\bm H$ of \eqref{eq: 3.13}.  In this case, the condition of \eqref{eq: 4.3} posits 
that the cumulative relative variation $\Gamma^{\bm H} (\cdot) $ as in \eqref{eq: 3.1} is not just increasing, but actually dominates a straight line with positive slope. 

   This assumption can be read most instructively in conjunction with the plot of  Figure~1.  Under it, Corollary~\ref{C:4.4} guarantees  the existence of relative arbitrage with respect to the market  over any time horizon $[0,T]$ of duration $T > \bm H(\mu(0)) / \eta$.

\begin{rem}
  \label{R:old}
The following question was posed in  \cite{FKVolStabMarkets}, and was asked again in \cite{Banner_Fernholz}.  
{\it Assume that \eqref{eq: 4.3} holds with $\bm G = \bm H$, given in \eqref{eq: 3.13}. 
Is then relative  arbitrage with respect to the market  possible over every time horizon $[0, T]$  with arbitrary length $T>0$?}

In  Section~\ref{S:5}  we shall present results which guarantee, under appropriate {\it additional}  conditions,   affirmative answers to this question.
Then in   Section~\ref{S:6} we shall construct market models illustrating that, in general, the answer to the above question is negative. This settles an issue which had remained open for more than ten years.
 \end{rem}

\section{Existence of short-term relative arbitrage} 
 \label{S:5}

Given a {Lyapunov} function $\bm G: \bm \Delta^d \to [0, \infty)$   with $\bm G (\mu (0))>0$,    Theorem~\ref{T:4.3}  provides the condition $\P\big( \Gamma^{\bm G} (T ) > \bm G (\mu(0))   \big)=1$      on the length $T > 0 $ of the time horizon $[0,T]$  as  sufficient for the existence of strong relative arbitrage with respect to the market  over this time horizon. 

In this section, we study conditions under which relative arbitrage exists on the time horizon $[0,T]$, for any real number $T > 0$. Subsection~\ref{SS:5.1} discusses conditions that guarantee the existence of strong short-term relative arbitrage. The conditions of Subsection~\ref{SS:5.2} guarantee only the existence of  short-term relative arbitrage, not necessarily strong.

\subsection{Existence of strong short-term relative arbitrage} \label{SS:5.1}
 The following result greatly extends and simplifies  the results in Section~8 of \citet{FKK} and in Section~8 in \cite{FK_survey}. 

 \begin{thm}[One asset with sufficient variation]  
 \label{P:160715.1}
 In a market as in Section~\ref{S:2}, with relative weight processes $\mu_1 (\cdot), \cdots, \mu_d (\cdot)$, suppose there exists a constant $\eta>0$ such that $\langle \mu_1 \rangle (t) \geq \eta t$ holds on the stochastic interval   $\lc 0, \mathscr{D}^*\lc$  with
	\begin{equation*}
\mathscr D^{*} :=  \inf \left\{ t \ge 0 :\, \mu_1 (t) \leq \frac{\mu_1(0)}{2} \right\}.
\end{equation*}
Then, given any real number $T>0$ there exists a long-only trading strategy $\varphi (\cdot)$ which is strong relative arbitrage with respect to the market over the time horizon $[ 0, T]$. 
 \end{thm}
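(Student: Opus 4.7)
My plan is to exhibit the desired long-only strong relative arbitrage by combining functional generation with the concatenation device of Remark~\ref{R:2.4}. Write $m := \mu_1(0) \in (0,1)$. I will start from the additively generated strategy $\bm\varphi^{\bm G}$ associated with the concave Lyapunov function $\bm G(x) := 1 - (x_1 - m)^2$; this strategy is long-only and, by \eqref{eq: valphiG}, its value process equals $V(t) = 1 - (\mu_1(t) - m)^2 + \langle \mu_1 \rangle (t)$ with $V(0)=1$.

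Next I will introduce an upper threshold $A \in (m,1)$, to be chosen depending on $T,\eta,m$, and the stopping time $\tau := T \wedge \mathscr D^* \wedge \inf\{t\geq 0:\mu_1(t) = A\}$. On $\lc 0,\tau \rc$ the process $\mu_1$ is confined to $(m/2, A)$, so $\tau \leq \mathscr D^*$ and the theorem's hypothesis gives the pathwise bound $\langle \mu_1 \rangle (\tau) \geq \eta \tau$. By Remark~\ref{R:2.4} I will then concatenate $\bm\varphi^{\bm G}$ on $\lc 0,\tau\rc$ with the ``hold-the-market-portfolio'' strategy thereafter, thereby freezing the value at $V(\tau)$; the resulting long-only strategy $\varphi$ satisfies $V^{\varphi}(T) = V(\tau)$.

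The proof thus reduces to checking $V(\tau) > 1$ almost surely. Splitting according to the three exclusive cases $\{\tau = T\}$, $\{\tau = \mathscr D^*\}$ and $\{\tau = \inf\{t:\mu_1(t)=A\}\}$, I find on the first event that $V(T) - 1 \geq \eta T - \max\bigl\{(A-m)^2,\, m^2/4\bigr\}$, which is positive once $A$ is chosen close enough to $m$; on each of the latter two events, $V(\tau) - 1$ equals $\langle \mu_1 \rangle (\tau)$ minus a fixed quadratic penalty, namely $m^2/4$ when $\mu_1(\tau)=m/2$ and $(A-m)^2$ when $\mu_1(\tau)=A$.

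The main obstacle is that both $\mathscr D^*$ and the hitting time $\tau^+ := \inf\{t:\mu_1(t)=A\}$ can be arbitrarily small on some paths, in which case the bound $\langle \mu_1 \rangle (\tau) \geq \eta\tau$ need not dominate the quadratic penalty. To handle this, I plan to refine $\bm G$ by a small linear perturbation $\beta \cdot (x_1 - m)$ (keeping concavity and long-only-ness of the associated additive strategy), and to iterate Remark~\ref{R:2.4} adaptively at $\tau$: on $\{\tau = \tau^+\}$ I would switch to a ``hold-stock-$1$'' continuation so that the first-order gain $\beta(A-m)$ beats the quadratic loss $(A-m)^2$ once $A-m$ is small, while on $\{\tau = \mathscr D^*\}$ I would switch to a ``hold the remaining basket'' continuation to exploit the symmetric first-order effect in the downward direction. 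The key observation making this feasible is that the hypothesis excludes $\mu_1$ from being a bounded martingale (a bounded martingale starting at $m$ cannot almost surely reach $m/2 < m$ in finite time), so $\mu_1$ must carry a nontrivial drift that can be harvested by the adaptive construction. Tuning $A-m$, $\beta$, and the horizon $T$ simultaneously against $m$ and $\eta$ so that all three cases deliver a strictly positive pathwise gain will be the principal technical work.
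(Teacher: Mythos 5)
Your plan runs into a genuine gap at precisely the point you flag. The concave quadratic $\bm G(x)=1-(x_1-m)^2$ with additive generation produces a gain $\Gamma^{\bm G}(\tau)=\langle\mu_1\rangle(\tau)\geq\eta\tau$ that is only \emph{linear} in the elapsed time, whereas the penalty $(\mu_1(\tau)-m)^2$ at a boundary hit is a fixed positive constant ($m^2/4$ or $(A-m)^2$). Since $\tau$ can be arbitrarily small on a set of positive probability, no tuning of $A$ can make the linear gain dominate pathwise, and strong arbitrage requires a pathwise bound. The linear perturbation $\beta(x_1-m)$ cannot rescue this: to beat the penalty $m^2/4$ at $\mu_1=m/2$ you need $\beta<0$ with $|\beta|>m/2$, but then the first-order term makes the upper hit at $A$ \emph{worse}, not better, so the two directions cannot be fixed simultaneously. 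The adaptive ``hold-stock-1 / hold-remaining-basket'' switch does not close the gap either, because it would need to harvest a drift; the theorem's hypothesis is a pathwise lower bound on $\langle\mu_1\rangle$, not a drift condition, and after you switch there is no pathwise control on where $\mu_1$ goes. (Your observation that a bounded martingale cannot both a.s.\ hit $m/2$ and have $\langle\mu_1\rangle(\infty)=\infty$ is correct, but it only shows no equivalent martingale measure exists; it does not by itself produce an explicit long-only strong relative arbitrage, which must beat the market on every path.)

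The paper's proof uses a fundamentally different generating function: the \emph{convex} power $\bm F(x)=x_1^q$ of \eqref{eq:P} with \emph{multiplicative} generation, applied to the stopped process $\nu(\cdot)=\mu(\cdot\wedge\mathscr D^*)$. By \eqref{eq:160717.2}, $Z^{\bm F}(T)=\nu_1(T)^q\exp\bigl(-\tfrac{1}{2}q(q-1)\int_0^T\nu_1^{-2}\,\dx\langle\nu_1\rangle\bigr)$, and the long-only strategy $\varphi_i(\cdot):=1+m^q-\bm\psi^{\bm F}_i(\cdot)$ (long-only because $\bm\psi^{\bm F}_i\leq 1$) has value $V^\varphi=1+m^q-Z^{\bm F}$. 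The decisive structural advantage over your quadratic is the \emph{tunable exponent $q$}: on $\{\mathscr D^*\leq T\}$, $\nu_1(T)=m/2$ forces $Z^{\bm F}(T)\leq(m/2)^q<m^q$, a pathwise gain independent of how small $\mathscr D^*$ is; and on $\{\mathscr D^*>T\}$, using $\nu_1\leq 1$ and $\langle\nu_1\rangle(T)\geq\eta T$ pathwise, $Z^{\bm F}(T)\leq\exp\bigl(-\tfrac{\eta}{2}q(q-1)T\bigr)<m^q$ once $q$ is chosen large enough. Because the exponential decay is \emph{geometric} in $q$, one can beat any prescribed $T>0$ by taking $q$ large, which is exactly the degree of freedom your additive quadratic construction lacks. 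If you wish to salvage a version of your plan, you would need to replace the quadratic by a generating function whose cumulative-earnings or value-decay is exponential in $\langle\mu_1\rangle$ with an adjustable rate; the power $x_1^q$ with multiplicative generation is the paper's way of achieving precisely that.
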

 
 Proposition~\ref{C:160610.1}  below is a direct consequence of Theorem~\ref{P:160715.1}; its proof requires, however, the technical observation made in Lemma~\ref{L:160711}. Proposition~\ref{C:160610.1} shows that, in the case $d=2,$ the condition in \eqref{eq: 4.3} yields,  for {\it every} given time horizon, the existence of a long-only trading strategy  which is  strong relative arbitrage with respect to the market over this  time horizon.  
 
 \begin{proof}[Proof of Theorem~\ref{P:160715.1}]
 	Let us fix a real number $T >0$ and consider the market with weights  $\nu(\cdot) := \mu(\cdot  \wedge \mathscr D^{*})$.
	It suffices to prove the existence of a long-only trading strategy $\varphi (\cdot)$ which is strong relative arbitrage with respect to the new market with weights $\nu (\cdot)$   over the time horizon $[ 0, T]$. For then the long-only trading strategy $\varphi (\cdot \wedge \mathscr{D}^*)$ is strong relative arbitrage with respect to original market with weights $\mu (\cdot)$   over the time horizon $[ 0, T]$.

	For some number $q > 1$ to be determined presently, we recall the  regular  function $\bm F$ of \eqref{eq:P}. Since $1/\nu_1(\cdot)$ is locally bounded, $\bm F$  generates multiplicatively, for the market with weight process $\nu(\cdot)$, the strategy $\psi^{\bm F}(\cdot)$ given by \eqref{eq:160717.1}, with $\mu(\cdot)$ replaced by $\nu(\cdot)$. We note   ${\bm \psi}^{\bm F}_i ( \cdot) \leq 1$   for  $i = 1, \cdots, d,$ 
and that $V^{{\bm \psi}^{\bm F}}(\cdot) = Z^{\bm F}(\cdot)$ is given as in \eqref{eq:160717.2}. We   introduce now the long-only trading strategy
\begin{align*}
	\varphi_i(\cdot) = 1 + \big(\nu_1(0)\big)^q - {\bm \psi}^{\bm F}_i(\cdot), \qquad i = 1, \cdots, d
\end{align*}
with associated wealth process 
\begin{align*}
	V^\varphi(\cdot) = 1 + \big(\nu_1(0)\big)^q- Z^{\bm F}(\cdot).
\end{align*}
In particular,  we note $V^\varphi(0) = 1$ and $V^\varphi(\cdot)  \geq 0$. On the event  $\{\mathscr D^{*}\leq T\}$ we have 
\begin{align*}
	V^\varphi(T) \geq  1 + \big(\nu_1(0)\big)^q  -\big(\nu_1(T)\big)^q=  1 + \big(\nu_1(0)\big)^q  - \left(\frac{\nu_1(0)}{2}\right)^q > 1;
\end{align*}
whereas, on the event $\{\mathscr D^{*} > T\}$ we have
\begin{align*}
		V^\varphi(\cdot) \geq 1 + \big(\nu_1(0)\big)^q -  \exp\left(-\frac{1}{2} q(q-1)  \langle \nu_1 \rangle (T)\right) \geq 1 + \big(\nu_1(0)\big)^q -  \left(\exp\left(-\frac{\eta}{2} (q-1)   T\right) \right)^q > 1,
\end{align*}
provided we choose $q = q(T)$ large enough such that 
\begin{align*}
	 \exp\left(-\frac{\eta}{2}  \big(q-1\big) T\right) <  \nu_1(0) = \mu_1(0).
\end{align*}
This shows $\P\big( V^\varphi (T) > 1 \big) =1$, as claimed. 
 \end{proof}

\begin{rem}[Dependence  of $\varphi(\cdot)$ on the length of the time horizon]
	The trading strategy $\varphi(\cdot)$ of Theorem~\ref{P:160715.1} is constructed in a very explicit and ``model-free" manner, but {\it does} depend on the length of the time horizon  over which it effects arbitrage with respect to the market.  
\end{rem}

The following two remarks recall two alternative ways to obtain the existence of strong relative arbitrage opportunities over arbitrary time horizons.

 \begin{rem}[Smallest asset with sufficient variation]  \label{R:160718.1}
In the spirit of Theorem~\ref{P:160715.1}, \cite{Banner_Fernholz} also prove the existence of a strong relative arbitrage over arbitrary time horizons. However, they do not assume that   one fixed asset  contributes to the overall market volatility -- but rather   that it is always the smallest stock that has sufficient variation.  The strategy they construct is again ``model-free'', but does depend  on the length of the time horizon.  
\end{rem}

\begin{rem}[Completeness and arbitrage imply strong arbitrage]  \label{R:completeness}
	If the underlying market model allows for a deflator (recall Definition~\ref{D:deflator}) and is complete (any contingent claim can be replicated),  then the existence of an arbitrage opportunity implies the existence of a strong one; see Theorem~8 in \cite{Ruf_ots}. 	However, this strong arbitrage usually will depend on the model and on the length of the time horizon.
\end{rem}

\subsubsection*{Diversity and strict nondegeneracy}

\begin{defn}[Diversity]
\label{def: 2.1}
We say that a market with relative weight processes  $\mu_1 (\cdot), \cdots,  \mu_d (\cdot)$ is {\it diverse}  if 
\begin{equation}
\label{eq: div}
\mathbb{P} \bigg( \sup_{t \in [0,\infty)}   \max_{1 \le i \le n} \mu_i (t) \le 1 -\delta \bigg) = 1  \qquad \text{holds for some   constant $\delta \in (0,1)$.}
\end{equation}
\end{defn}

Diversity posits that no company can come close to dominating the entire market capitalization. It is a typical characteristic of large, deep and liquid equity markets. 

Let us assume now that the continuous semimartingales $S_1 (\cdot), \cdots, S_d (\cdot)$ in \eqref{eq: 2.1a}, to wit, the capitalization processes of the various assets in the market, have covariations of the form 
\begin{equation}
\label{eq: 4.5}
\langle S_i, S_j \rangle (\cdot) = \int_0^{ \cdot} S_i (t) S_j (t) A_{i, j} (t) \, \dx t, \qquad  i,j =1, \cdots, d\end{equation}
for suitable progressively measurable processes $A_{i, j} (\cdot)$.

\begin{cor}[Diversity and strict nondegeneracy]
\label{cor: 5.11}
Suppose that a market as in \eqref{eq: 2.1a} is diverse, that \eqref{eq: 4.5} holds, and that the asset covariation matrix-valued process  $A  (\cdot) =( A_{i ,j} (\cdot))_{1 \le i, j \le d}$ satisfies the strict nondegeneracy condition 
\begin{equation}
\label{eq: 4.6}
 \xi' A (t,\omega)  \xi  \ge \lambda \, || \xi ||^2, \qquad \text{for all } \,~\xi \in \R^d, \,\,  (t,\omega) \in [ 0, \infty) \times \Omega
\end{equation} 
 for some   $\lambda > 0$. Then, given any real number $T>0$, there exists a long-only trading strategy $\varphi (\cdot)$ which is strong relative arbitrage with respect to the market over the time horizon $[ 0, T]$.
\end{cor}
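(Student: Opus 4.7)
The plan is to deduce the hypothesis of Theorem~\ref{P:160715.1} from diversity and strict nondegeneracy, applied (say) to the first asset; once that is done, Theorem~\ref{P:160715.1} furnishes the claimed strong relative arbitrage over $[0,T]$ for any $T > 0$. Concretely, I would aim to show that on the stochastic interval $\lc 0, \mathscr D^{*}\lc$ with $\mathscr D^{*} = \inf\{t \ge 0 : \mu_1(t) \le \mu_1(0)/2\}$, the variation $\langle \mu_1\rangle$ grows at least linearly in time, i.e.\ $\dx\langle \mu_1\rangle(t)/\dx t \ge \eta$ for some deterministic $\eta > 0$.

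The first step is a pathwise identity for the variation of a market weight. Starting from $\log \mu_i = \log S_i - \log \Sigma$ together with \eqref{eq: 2.1a} and \eqref{eq: 4.5}, a short computation with It\^o's formula yields
\begin{equation*}
\frac{\dx \langle \log \mu_i\rangle(t)}{\dx t} \;=\; A_{ii}(t) - 2\sum_{j=1}^d \mu_j(t)\, A_{ij}(t) + \sum_{j,k=1}^d \mu_j(t)\mu_k(t)\, A_{jk}(t) \;=\; \big(e_i - \mu(t)\big)' A(t) \big(e_i - \mu(t)\big),
\end{equation*}
where $e_i$ is the $i$th standard basis vector of $\R^d$. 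Combined with $\dx\langle \mu_i\rangle = \mu_i^2 \,\dx\langle \log\mu_i\rangle$, this gives the pathwise formula
\begin{equation*}
\frac{\dx \langle \mu_i\rangle(t)}{\dx t} \;=\; \mu_i(t)^2 \,\big(e_i - \mu(t)\big)' A(t) \big(e_i - \mu(t)\big), \qquad i = 1,\dots, d.
\end{equation*}

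Now I would take $i=1$ and bound this expression from below. Since $e_1 - \mu \in \R^d$ and $\|e_1 - \mu\|^2 = (1-\mu_1)^2 + \sum_{j \ge 2}\mu_j^2 \ge (1-\mu_1)^2$, the strict nondegeneracy condition \eqref{eq: 4.6} yields
\begin{equation*}
\big(e_1 - \mu\big)' A \big(e_1 - \mu\big) \;\ge\; \lambda\, (1-\mu_1)^2.
\end{equation*}
Diversity \eqref{eq: div} guarantees $1 - \mu_1(t) \ge \delta$ almost surely, and on $\lc 0,\mathscr D^{*}\lc$ we have $\mu_1(t) > \mu_1(0)/2$. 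Putting everything together,
\begin{equation*}
\frac{\dx \langle \mu_1\rangle(t)}{\dx t} \;\ge\; \lambda\,\delta^{2}\left(\frac{\mu_1(0)}{2}\right)^{\!2} \;=:\; \eta \;>\; 0
\end{equation*}
on $\lc 0,\mathscr D^{*}\lc$, hence $\langle \mu_1\rangle(t) \ge \eta\, t$ there. Theorem~\ref{P:160715.1} then delivers the required long-only strong relative arbitrage on $[0,T]$.

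The only subtle point, and what I would flag as the main conceptual obstacle, is recognizing that the correct quantity to lower-bound is the quadratic form $(e_1 - \mu)'A(e_1 - \mu)$ rather than a diagonal entry of $A$: the vector $e_1 - \mu$ lies in the null-sum hyperplane ${\mathbb H}^d - e_1$, but strict nondegeneracy was imposed on all of $\R^d$, so the bound $\lambda\|e_1-\mu\|^2$ is legitimate. Diversity is precisely what prevents $e_1 - \mu$ from degenerating to zero (which would happen near the simplex vertex $e_1$), and this is where the hypothesis \eqref{eq: div} is essential; without it the linear-growth lower bound would break down.
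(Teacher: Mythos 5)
Your proof is correct and follows essentially the same route as the paper: both reduce the claim to Theorem~\ref{P:160715.1} by showing that $\langle\mu_1\rangle$ grows at least linearly on $\lc 0,\mathscr D^*\lc$, arriving at exactly the same constant $\eta = \lambda\,(\delta\,\mu_1(0)/2)^2$. The only difference is that the paper invokes inequality (3.11) of \cite{FK_survey} for the lower bound $\dx\langle\mu_1\rangle(t)/\dx t \ge \lambda\,\mu_1(t)^2(1-\max_i\mu_i(t))^2$, whereas you derive the pathwise identity $\dx\langle\mu_i\rangle(t)/\dx t = \mu_i(t)^2\,(e_i-\mu(t))'A(t)(e_i-\mu(t))$ directly from \eqref{eq: 2.1a} and \eqref{eq: 4.5} via It\^o's formula and then bound the quadratic form by $\lambda\,(1-\mu_1)^2$; this is a marginally sharper intermediate bound, but the two coincide once diversity is applied.
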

\begin{proof}
With the help of (3.11) in \cite{FK_survey}, the conditions in \eqref{eq: div}  and \eqref{eq: 4.6}, namely, diversity and  strict nondegeneracy, lead  to the lower  bound
$$
\big \langle \mu_1 \big \rangle (T \wedge \mathscr D^{*} )  \ge   \lambda \int_0^{T \wedge \mathscr D^{*}  } \big( \mu_1 (t) \big)^2  \Big( 1 - \max_{1 \le i \le n} \mu_i (t) \Big)^2  \dx t   \ge    \eta  \big( T \wedge \mathscr D^{*} \big), \qquad T \geq 0 
$$
for $\eta := \lambda   \big( \delta  \mu_1 (0)  / 2\big)^2$. The claim is now a direct consequence of Theorem~\ref{P:160715.1}. 
\end{proof}

\subsection{Existence of short-term relative arbitrage, not necessarily strong}  
 \label{SS:5.2}

In this subsection we provide three more criteria that guarantee the existence of relative arbitrage with respect to the market over arbitrary time horizons. The first criterion is a condition on the time-homogeneity of the support of the market weight vector $\mu(\cdot)$. The second criterion uses failure of diversity. The third criterion is a condition on the nondegeneracy of the covariation process of $\mu(\cdot)$.

\subsubsection{Time-homogeneous support}  

Let us recall the condition in \eqref{eq:160131.3}. There, the threshold ${\bm G}(\mu(0)) $  ``is at its lowest",   when the initial market-weight configuration $\mu (0)$   is at a site   where  $\bm G$  attains, or is very close to, its smallest value on  ${\bm \Delta}^d$;  these are the most propitious  sites   from which relative arbitrage can be launched. 

 The   following result assumes that the essential infimum of the continuous semimartingale ${\bm G}(\mu (\cdot))$  remains constant over time. In particular, that sites in the state space which are  close to this infimum -- and thus ``propitious" for  launching relative arbitrage -- can be visited ``quickly" (i.e.,  over any given time horizon)  with positive probability. The result shows that relative arbitrage with respect to the market can then be realized  over {\it any} time horizon $[0,T]$ of arbitrary length $T>0$.

 \begin{thm}[A time-homogeneity condition on the support]   
 \label{T:5.1}
 Suppose    that  for  a given   generating  function $\bm G$ and appropriate real constants $ \eta >0$ and $ h \ge   0 $,   the condition  in \eqref{eq: 4.3} is satisfied, along with  the   lower bound 
\begin{equation}
 \label{10c}
 \P  \big(    {\bm G}(\mu ( t) )  \ge  h, \quad t \geq 0 \big) =1
 \end{equation}
and the     ``time homogeneous support" property 
\begin{equation}
 \label{10b}
\P \Big(    {\bm G}\big(\mu (  t)\big)\in \big[h, h+\varepsilon  \big), \,\,\,\text{for some } t \in [0,T] \Big)  > 0 , \qquad  \text{for all } T >0,\, \varepsilon > 0.
\end{equation}
Then    arbitrage relative to the market exists    over the time horizon $[0, T]$, for every real number $T >0$. 
\end{thm}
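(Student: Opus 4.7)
The plan is to use the concatenation device of Remark~\ref{R:2.4} to ``wait and strike'': hold the market portfolio (whose relative value is identically $1$) until a stopping time $\sigma$ at which $\bm G(\mu(\sigma))$ has come close to its lower bound $h$, and only then engage the additively generated strategy $\varphi^{\bm G}$. After $\sigma$, the effective ``initial value'' of $\bm G$ is close to $h$, so in the spirit of Corollary~\ref{C:4.4} one only needs $\Gamma^{\bm G}$ to grow by about $h$ over $[\sigma,T]$ in order to outperform the market. Condition \eqref{10b} guarantees that such a $\sigma$ arrives by any prescribed deadline with positive probability, while \eqref{10c} ensures $\bm G(\mu(T))\ge h$ so that nothing ``comes back to haunt us'' at the terminal time.

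Concretely, given $T>0$, I would fix $\varepsilon \in (0,\min(1,\eta T/2))$, set $T_0 := T-2\varepsilon/\eta > 0$, and define the stopping time
\begin{equation*}
\sigma \,:=\, \inf\big\{t\in[0,T_0] :\, \bm G(\mu(t))\le h+\varepsilon\big\}, \qquad \inf\emptyset :=\infty.
\end{equation*}
Applying \eqref{10b} with time horizon $T_0$ yields $\P(\sigma<\infty)>0$. With $\varphi := \varphi^{\bm G}$ from \eqref{eq:3.2}, so that $V^\varphi(\cdot) = \bm G(\mu(\cdot)) + \Gamma^{\bm G}(\cdot)$ by \eqref{eq: valphiG}, the concatenation of Remark~\ref{R:2.4} with $b=1$, $\tau=\sigma$, and this $\varphi$ produces a trading strategy $\psi$ whose value satisfies
\begin{equation*}
V^\psi(t) \,=\, 1 + \big(\bm G(\mu(t))-\bm G(\mu(\sigma))+\Gamma^{\bm G}(t)-\Gamma^{\bm G}(\sigma)\big)\,\1_{\{\sigma\le t\}}.
\end{equation*}

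It then remains to verify the requirements of Definition~\ref{D:4.1}. Continuity of $\bm G$ and the defining property of $\sigma$ give $\bm G(\mu(\sigma))\le h+\varepsilon$ on $\{\sigma<\infty\}$; combined with \eqref{10c} and the monotonicity of $\Gamma^{\bm G}$, this yields $V^\psi(\cdot)\ge 1-\varepsilon\ge 0$ throughout. At $t=T$, $V^\psi(T)=1$ on $\{\sigma=\infty\}$; while on $\{\sigma<\infty\}$, the bound $\sigma\le T_0$ together with \eqref{eq: 4.3} gives $\Gamma^{\bm G}(T)-\Gamma^{\bm G}(\sigma)\ge\eta(T-T_0)=2\varepsilon$, whence
\begin{equation*}
V^\psi(T) \,\ge\, 1 + h - (h+\varepsilon) + 2\varepsilon \,=\, 1 + \varepsilon \,>\, 1.
\end{equation*}
Hence $V^\psi(T)\ge 1$ almost surely and $\P(V^\psi(T)>1)\ge\P(\sigma<\infty)>0$, which establishes the desired relative arbitrage over $[0,T]$.

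The main delicate point is precisely the choice $T_0 = T-2\varepsilon/\eta$ strictly less than $T$ in the definition of $\sigma$. This cutoff enforces a buffer $T-\sigma\ge 2\varepsilon/\eta$ on $\{\sigma<\infty\}$, during which $\Gamma^{\bm G}$ accrues at least $2\varepsilon$ by \eqref{eq: 4.3} -- more than enough to absorb the possible drop $\bm G(\mu(\sigma))-\bm G(\mu(T))\le\varepsilon$. Without such a cutoff, the ``bad'' event $\{T_0<\sigma\le T\}$ could yield $V^\psi(T)<1$, violating the almost-sure lower bound in Definition~\ref{D:4.1}. This is also why Theorem~\ref{T:5.1} claims only (not necessarily strong) relative arbitrage: outperformance occurs on the positive-probability event $\{\sigma<\infty\}$, but the market is merely matched on its complement.
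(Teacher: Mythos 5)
Your argument is correct and follows essentially the same ``wait-and-switch'' strategy as the paper's proof: hold the market until $\bm G(\mu(\cdot))$ comes within $\varepsilon$ of its essential infimum $h$ (guaranteed to happen before a strict cutoff $T_0 < T$ with positive probability by \eqref{10b}), then concatenate via Remark~\ref{R:2.4} to the additively generated strategy, leaving a time buffer during which $\Gamma^{\bm G}$ accrues at least $2\varepsilon$ by \eqref{eq: 4.3}. The paper instead works with the normalized function $\bm G^\star = (\bm G - h)\cdot 3/(\eta T)$ and cutoff $T/2$, but this is a cosmetic reparametrization; your direct bookkeeping with $\bm G$, threshold $h+\varepsilon$, and cutoff $T - 2\varepsilon/\eta$ is equivalent and equally valid.
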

 The basic argument in the proof of Theorem~\ref{T:5.1}  is quite simple to describe: Given a time horizon $[0,T]$, the condition in \eqref{10b} declares that the  vector process $\mu (\cdot)$ of relative market weights will visit before time $T/ 2$, with positive probability,  sites which are ``very propitious'' for arbitrage relative to the market. The moment this happens, it makes good sense to abandon the market  in favour of the trading strategy $\bm \varphi^\star(\cdot)$ generated by the function $\bm G^\star = c (\bm G-h)$ in the manner of \eqref{eq:3.2}, for some appropriately chosen constant $c>0$.
 The challenge then is to  show that such a strategy does  not    underperform the market, and   has a positive probability of outperforming it strictly.

\begin{proof}[Proof of Theorem~\ref{T:5.1}]
For an arbitrary but fixed real number $T>0$ 
we introduce the regular function
\begin{align*}
	\bm G^\star := (\bm G-h)  \frac{3}{\eta T},
\end{align*}
and denote
\begin{align*}
 \Gamma^\star (\cdot) := \Gamma^{\bm G^\star} (\cdot)= \frac{3}{\eta T} \Gamma^{\bm G } (\cdot).
\end{align*}
We also  introduce the stopping time
$$
\tau := \inf \left\{  t \in \left[0, \frac{T}{2}\right] :\,  {\bm G}\big(\mu (t) \big) < h + \frac{\eta T}{3}    \right\}  
$$
with the usual convention that the infimum of the empty set is equal to infinity, and note  that \eqref{10b} implies
\begin{align}\label{eq:170617.4}
\P \left( \tau \le \frac{T}{2} \right)> 0.
\end{align}
We let $\bm \varphi^{\star} (\cdot):=  \bm \varphi^{\bm G^\star} (\cdot)$ denote the trading strategy generated by the function $\bm G^\star$ in the manner of \eqref{eq:3.2}, and introduce the trading strategy $\bm \varphi (\cdot)$ which follows the market portfolio up to the stopping time $  \tau$, then switches for the remainder of the time horizon $[0,T]$ to the trading strategy $\bm \varphi^{\star} (\cdot)$; namely,  
\begin{align*}
	 {\bm \varphi}_i(\cdot) := 1+ \big(\bm \varphi^\star_i(\cdot) - {\bm G}^\star(\mu(\tau)) - \Gamma^\star(\tau)\big) \1_{\lc \tau, \infty\lc}, \qquad i = 1, \cdots, d
\end{align*}
in the ``self-financed'' manner of \eqref{eq:160717.3}.
 According to \eqref{eq:170617.4}, this switching occurs with positive probability. As we saw in Remark~\ref{R:2.4},   the value that results form this concatenation is given by 
\begin{align*}
	 V^{{\bm \varphi}} (t) &= 1 +  \big( {\bm G}^\star(\mu(t))   + \Gamma^\star(t)    - {\bm G}^\star(\mu(\tau)) - \Gamma^\star(\tau) \big)  \1_{\lc \tau, \infty\lc}(t) \\
	 	&\geq \1_{\lc 0, \tau\lc}(t) +  \frac{3}{\eta T} \left(\Gamma^{\bm G}(t)  - \Gamma^{\bm G}(\tau) \right) \1_{\lc \tau, \infty\lc}(t) \\
		&\geq  \1_{\lc 0, \tau\lc}(t) +  \frac{3 (t - \tau)}{T}  \1_{\lc \tau, \infty\lc}(t), \qquad t \geq 0.
\end{align*}
Here we have used the comparisons ${\bm G}^\star(\mu(\cdot)) \geq 0$ and ${\bm G}^\star(\mu(\tau)) \leq 1$ in the first inequality, and \eqref{eq: 4.3} in the second inequality.
 
Now it clear from this last display  that $ V^{\bm \varphi} (\cdot) \geq 0$ holds, and that $V^{\bm \varphi} (T) \ge 3/2$ holds on  the event  $\{   \tau \le T / 2 \}$; it is also clear  that $V^{\bm \varphi} (T) =1$ holds on   $\{   \tau > T / 2 \} = \{   \tau = \infty \}$. Since   $\{   \tau \le T / 2 \}$ has positive probability on account of  \eqref{eq:170617.4}, it follows that the trading strategy $\bm \varphi (\cdot)$ is relative arbitrage with respect to the market over the time horizon $[0,T]$.
\end{proof}
 
\begin{rem}[On the type of arbitrage]
There is nothing in the above argument  to suggest that the probability in \eqref{eq:170617.4}, which is argued there to be positive, is actually equal  to 1. Thus, the relative arbitrage constructed in Theorem~\ref{T:5.1} need not be strong. 
It should also be noted that the trading strategy which implements this arbitrage depends  on the length $T$ of the time horizon $[0,T]$ -- in marked contrast to the strategy of Theorem~\ref{T:4.3}, which does not, as long as    $T \geq T_*$.  
\end{rem}

\subsubsection{Failure of diversity}  

Theorem~\ref{T:5.1}    has the following corollary.  Taken together, Corollaries~\ref{cor: 5.11} and \ref{cor: 5.3} illustrate that   both diversity, {\it and} its failure, can lead to arbitrage over arbitrary time horizons -- under appropriate additional conditions in each case. For the statement, we  let  $\,\mathfrak{e}_1, \cdots, \mathfrak{e}_d\,$ denote the extremal points (unit vectors) of the lateral face $\bm \Delta^d$ of the unit simplex.

\begin{cor}[Failure of diversity]
\label{cor: 5.3}
Suppose that diversity     fails for a market with   relative weights $\mu (\cdot)$, in the sense that 
\begin{equation*}
\P\bigg( \sup_{t \in [0,T)}   \max_{1 \le i \le d} \mu_i (t) > 1 -\delta \bigg) >0 \qquad \hbox{holds for every } ~(T, \delta) \in (0, \infty) \times (0,1).
\end{equation*}
 Suppose also that, for some generating  function $\bm G$ which satisfies
 \begin{equation*}
 \bm G(\mathfrak e_i) =  \min_{x \in \bm \Delta^d} \bm G(x), \qquad \text{for each }~~i=1, \cdots, d, 
\end{equation*} 
the condition in \eqref{eq: 4.3} holds for some real constant  $\eta >0$. Relative arbitrage with respect to the market   exists  then over the  timehorizon $[0,T]$,   for every real number     $T>0$.
\end{cor}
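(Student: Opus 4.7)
The plan is to deduce Corollary~\ref{cor: 5.3} as a direct application of Theorem~\ref{T:5.1}, with the choice $h := \min_{x \in \bm \Delta^d} \bm G(x)$. Since $\bm G$ is a nonnegative generating function on the compact simplex $\bm \Delta^d$, this minimum is attained and nonnegative, so the constant $h \ge 0$ required by Theorem~\ref{T:5.1} is legitimate. The slope condition \eqref{eq: 4.3} is given outright, so the only real work is to verify \eqref{10c} and the time-homogeneous support property \eqref{10b}.

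Condition \eqref{10c} is essentially automatic: because $\mu(t) \in \bm \Delta^d$ for every $t \ge 0$ and $\bm G \ge h$ pointwise on $\bm \Delta^d$, we have $\bm G(\mu(t)) \ge h$ almost surely for all $t \ge 0$. For \eqref{10b}, I would exploit the continuity of $\bm G$ on the compact set $\bm \Delta^d$ together with the hypothesis $\bm G(\mathfrak{e}_i) = h$ for each $i = 1, \dots, d$. Given $\varepsilon > 0$, I would argue by a standard compactness argument that there exists $\delta = \delta(\varepsilon) \in (0,1)$ with the property that $\max_{1\le i\le d} x_i > 1 - \delta$ forces $\bm G(x) < h + \varepsilon$: indeed, if not, one could extract a sequence $(x^{(n)}) \subset \bm \Delta^d$ with $\max_i x^{(n)}_i \to 1$ but $\bm G(x^{(n)}) \ge h + \varepsilon$; any subsequential limit would then be an extremal point $\mathfrak{e}_i$ with $\bm G(\mathfrak{e}_i) \ge h + \varepsilon$, contradicting the hypothesis on $\bm G$.

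With this $\delta$ in hand, the failure of diversity yields
\[
\P\bigg( \sup_{t \in [0,T)}   \max_{1 \le i \le d} \mu_i (t) > 1 -\delta \bigg) >0
\]
for every $T > 0$. On the event in question, by continuity of $\mu(\cdot)$ and of $\bm G$, there exists some $t \in [0,T]$ with $\max_i \mu_i(t) > 1 - \delta$, and hence with $\bm G(\mu(t)) \in [h, h + \varepsilon)$. This is exactly \eqref{10b}. Theorem~\ref{T:5.1} now applies and delivers relative arbitrage with respect to the market over $[0,T]$, for every $T > 0$.

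The only step that requires any care is the continuity/compactness argument passing from failure of diversity to \eqref{10b}; but this is essentially a soft fact, using that the extremal points $\mathfrak{e}_1, \dots, \mathfrak{e}_d$ are the \emph{only} places where $\bm G$ attains its minimum value $h$ as a limit of configurations with $\max_i x_i \to 1$, which is immediate from continuity and the hypothesis $\bm G(\mathfrak{e}_i) = h$. So no serious obstacle is expected: the corollary really is just a matter of checking that the hypotheses of Theorem~\ref{T:5.1} are met.
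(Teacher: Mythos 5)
Your proposal is correct and follows exactly the route the paper intends: the paper simply asserts that Corollary~\ref{cor: 5.3} is a consequence of Theorem~\ref{T:5.1} without supplying details, and your choice $h := \min_{x \in \bm\Delta^d}\bm G(x)$ together with the compactness argument translating failure of diversity into the ``time-homogeneous support'' condition~\eqref{10b} is the natural and correct way to verify the hypotheses of that theorem.
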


\subsubsection{Strict nondegeneracy}  

Theorem~\ref{T:5.1} has     an important consequence, Theorem~\ref{T:5.2} below; this  establishes the existence of relative arbitrage with respect to the market, under the ``sufficient intrinsic volatility" condition of \eqref{eq: 4.3} and under additional nondegeneracy conditions. 

In order to prepare the ground for this result, let us recall the trace process $\Gamma^{\bm Q} (\cdot  )$ of \eqref{eq:GQ}.    From Proposition~II.2.9 of \cite{JacodS} we have the representation
	\begin{align}  
	\label{eq:160524.1}
		\big \langle \mu_i, \mu_j \big \rangle(\cdot) = \int_0^\cdot \alpha_{i,j}(t) \, \dx \Gamma^{\bm Q}(t), \qquad 1 \leq i,j \leq d
	\end{align}
for some symmetric and nonnegative-definite matrix-valued process $\alpha  (\cdot) = ( \alpha_{i,j} (\cdot) )_{i,j = 1, \cdots, d}$, whose entries are progressively measurable and satisfy $\sum_{j=1}^d \alpha_{i,j} (\cdot) \equiv 0$ for every $i=1, \cdots, d$.  Furthermore, thanks to the Kunita-Watanabe inequality (Proposition~3.2.14 in \cite{KS1}), the process $\alpha_{i,j} (\cdot)$ takes values in $[-1, 1]$ for every $i,j=1, \cdots, d$.

We also consider  the sequence of stopping times 
\begin{equation}
\label{eq:160726.1}
\mathscr D^n :=  \inf \left\{ t \ge 0 :\, \min_{1 \leq i \leq d}  \mu_i (t) < \frac{1}{n} \right\}, \qquad n \in \N.
\end{equation}

\begin{thm}[A strict nondegeneracy condition]  \label{T:5.2}
Suppose that there exists a deflator for the process $\mu (\cdot) = \big( \mu_1 (\cdot), \cdots,   \mu_d (\cdot) \big)'$ of relative market weights, as well as a regular function $\bm G$ which satisfies \eqref{eq: 4.3} for some real constant $ \eta >0$. 
Moreover, assume that the $d-1$ largest eigenvalues of the matrix-valued process $\alpha(\cdot)$ in \eqref{eq:160524.1} are bounded away from zero  on $\lc 0, \mathscr D^n\lc$  uniformly in $(t, \omega)$, for each $n \in \N$, in the notation of \eqref{eq:160726.1}.
Then  relative arbitrage with respect to the market  exists over   $[0,T]$, for every real number $T>0$. 
\end{thm}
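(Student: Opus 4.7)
We apply Theorem~\ref{T:5.1} to $\bm G$; after adding a constant we may assume $\bm G \ge 0$ without disturbing \eqref{eq: 4.3}. Define
\begin{equation*}
h := \sup\{ r \ge 0 : \P(\bm G(\mu(t)) \ge r \text{ for all } t \ge 0) = 1\}.
\end{equation*}
Then \eqref{10c} holds by construction and, by definition of the supremum, for every $\e > 0$ the event $\{\bm G(\mu(t)) < h + \e/2 \text{ for some } t \ge 0\}$ has positive $\P$-probability. Combined with the continuity of $\bm G$ near $\bm\Delta^d_+$ (from the $C^2$ extension postulated in Section~\ref{S:3}), this yields an interior point $x^\star \in \bm\Delta^d_+$ with $\bm G(x^\star) < h + \e/2$. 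Pick a small open ball $B \subset \bm\Delta^d_+$ around $x^\star$ on which $\bm G < h + \e$. The entire argument then reduces to the time-bounded support claim
\begin{equation*}
\P\bigl(\mu(t) \in B \text{ for some } t \in [0,T]\bigr) > 0 \qquad \text{for every } T > 0,
\end{equation*}
from which \eqref{10b} is immediate and Theorem~\ref{T:5.1} delivers the desired arbitrage.

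The support claim is where the deflator and the nondegeneracy hypothesis meet. Since $\sum_i \mu_i \equiv 1$, the deflator $Z = \sum_i Z\mu_i$ is itself a positive $\P$-local martingale. Fix $n \in \N$ so large that both $\mu(0)$ and $B$ lie in $\{x : \min_i x_i > 1/n\}$, and let $\tau_k \uparrow \infty$ be a localising sequence rendering each $Z^{\tau_k}$ a true $\P$-martingale. For every $k$, the measure $\Qu_k$ on $\sigalg F(T \wedge \tau_k \wedge \mathscr D^n)$ with density $Z(T \wedge \tau_k \wedge \mathscr D^n)/Z(0)$ is equivalent to $\P$, and by Bayes' formula together with Definition~\ref{D:deflator}, the stopped process $\mu(\cdot \wedge T \wedge \tau_k \wedge \mathscr D^n)$ is a bounded $\Qu_k$-martingale whose covariation matrix $\alpha(\cdot)$ is unchanged. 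The nondegeneracy hypothesis supplies a constant $\lambda_n > 0$ such that the $d-1$ largest eigenvalues of $\alpha(\cdot)$ exceed $\lambda_n$ on $\lc 0, \mathscr D^n \lc$, so $\mu$ is a uniformly non-degenerate $(d-1)$-dimensional continuous $\Qu_k$-martingale on the tangent plane $\mathbb H^d$. A Stroock--Varadhan type support theorem -- applied after pushing $\bm\Delta^d_+ \cap \{\min_i x_i > 1/n\}$ diffeomorphically onto an open subset of $\R^{d-1}$ -- then asserts that the $\Qu_k$-law of $\mu$ charges every continuous path issuing from $\mu(0)$ and staying in this open set. Choosing such a path that enters $B$ before time $T$ (possible because $\bm\Delta^d_+$ is path-connected and $x^\star$ is interior), and then $k$ so large that $\P(\tau_k > T) > 0$, we obtain $\Qu_k(\mu \text{ visits } B \text{ in } [0,T]) > 0$, and equivalence of $\Qu_k$ with $\P$ transfers this positivity.

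The technical heart, and main obstacle, is establishing the support statement in this generality: the matrix $\alpha(\cdot)$ is only progressively measurable rather than a function of the state $\mu$, so one must invoke a support result valid for continuous semimartingales with uniformly non-degenerate quadratic variation (in the spirit of Bichteler, or Millet--Sanz-Sol\'e). The uniform lower bound on the $d-1$ largest eigenvalues of $\alpha$ on $\lc 0, \mathscr D^n \lc$ is exactly the input such theorems require, but packaging it into a clean accessibility statement on the simplex -- with due care near the boundary $\bm\Delta^d \setminus \bm\Delta^d_+$ and in the choice of localising sequence $(\tau_k)$ -- is where the real work resides.
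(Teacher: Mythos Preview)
Your overall architecture matches the paper's: reduce to Theorem~\ref{T:5.1} by verifying \eqref{10b}, and obtain \eqref{10b} from an accessibility statement for $\mu(\cdot)$ under an equivalent measure in which it is a martingale. Two structural differences are worth noting. First, the paper argues by contradiction: assuming no arbitrage on some $[0,T_*]$, it invokes Remark~\ref{R:4.2} to produce a genuine equivalent martingale measure $\Q_*$ on all of $\sigalg F(T_*)$, avoiding the delicate localisation via $\tau_k$ you propose (your $\Q_k$ lives only on $\sigalg F(T\wedge\tau_k\wedge\mathscr D^n)$, and transferring positivity of a time-$T$ event back through this random horizon is not as clean as you suggest). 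Second, rather than citing an abstract support theorem for non-Markovian continuous martingales, the paper carries out the Stroock--Varadhan programme by hand: it builds an explicit Girsanov shift $M(\cdot)=\int \langle \alpha^\flat(t)\zeta,\dx\mu(t)\rangle$ using the pseudo-inverse of $\alpha(\cdot)$ to drift $\mu$ toward a fixed target $x\in\bm G^{-1}([h,h+\varepsilon))\cap(0,1)^d$, verifies Novikov, and then controls the residual under a second change of measure.

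The substantive gap in your plan is that ``uniform nondegeneracy of $\alpha(\cdot)$'' is not by itself enough to move $\mu(\cdot)$: since $\dx\langle\mu_i,\mu_j\rangle=\alpha_{i,j}\,\dx\Gamma^{\bm Q}$, the process is stuck unless the clock $\Gamma^{\bm Q}(\cdot)$ advances. The paper uses Lemma~\ref{L:160711} --- a consequence of \eqref{eq: 4.3} which you never invoke in the support step --- to obtain a linear lower bound $\Gamma^{\bm Q}(t)\ge Ct$ on $\lc0,\mathscr D^n\lc$, and simultaneously controls the upper end by stopping at $\bm\varrho=\inf\{t:\Gamma^{\bm Q}(t)>CT\}$ so that Novikov applies. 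Without this two-sided control of $\Gamma^{\bm Q}$ your appeal to a support theorem fails: the hypothesis you feed it (eigenvalues of $\alpha$ bounded below) does not translate into uniform ellipticity in $\dx t$, and the black-box results you allude to require precisely that. The paper's hands-on Girsanov computation is what makes the argument go through in the stated generality.
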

The proof of Theorem~\ref{T:5.2} is   at the end of this subsection. Proposition~\ref{P:160712.2} below shows that, in this theorem, it is not sufficient that the $d-1$ largest eigenvalues of the matrix-valued process $\alpha(\cdot)$ be strictly positive. If they are not additionally 
bounded away from zero, an example can be constructed in which relative arbitrage with respect to the market  does not exist over the time horizon $[0,T]$ for some real number $T>0$.

It is   important to stress that Theorem~\ref{T:5.2} establishes only the {\it existence} of a  trading strategy, which effects the claimed relative arbitrage. Moreover, unlike the trading strategy of Theorem~\ref{T:4.3}  which is  strong arbitrage, explicit, model-free, and independent of the time horizon, the trading strategy whose existence is claimed in Theorem~\ref{T:5.2} may be none of these things.

\begin{rem}[It\^o-process covariation structure] \label{R:Ito}
	If 
	$\langle \mu_i, \mu_j \rangle(\cdot) =  \int_0^\cdot\beta_{i,j}(t) \dx t$  holds for all $\i,j = 1, \cdots, d$,  then 
	\begin{align*}
		\Gamma^{\bm Q}(\cdot) &=  \sum_{j = 1}^d \int_0^\cdot \beta_{j,j}(t) \dx t;\\
		 \alpha_{i,j}(\cdot) \1_{\{\sum_{k = 1}^d \beta_{k,k}(\cdot) > 0\}} &= \frac{\beta_{i,j}(\cdot)}{ \sum_{k = 1}^d \beta_{k,k}(\cdot)} \1_{\{ \sum_{k = 1}^d \beta_{k,k}(\cdot) > 0\}}, \qquad i,j = 1, \cdots, d.
	\end{align*}
	Hence, in this case, a sufficient (though not necessary) condition for the nondegeneracy assumption in Theorem~\ref{T:5.2} to hold,  is that  the $d-1$ largest eigenvalues of the matrix-valued process $\beta(\cdot)$ be bounded away from zero and from infinity on $\lc 0, \mathscr D^n\lc$,  for each $n \in \N$.
\end{rem}

The proof of Theorem~\ref{T:5.2} uses the following lemma.

\begin{lem}[Sum of quadratic variations bounded from below]  
\label{L:160711}
Assume that there exist  a generating  function $\bm G$ and a constant  $ \eta >0$ such that    \eqref{eq: 4.3} is satisfied.  Then, for each $n \in \N$, there exists a real constant $C=C(n, \eta, d, \bm G)$ such that the mapping  $t \mapsto  \Gamma^{\bm Q}(t) - C t$ is nondecreasing on $\lc 0, \mathscr D^n\lc$. 
\end{lem}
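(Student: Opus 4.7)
The plan is to bound $\Gamma^{\bm G}(\cdot)$ from above in terms of $\Gamma^{\bm Q}(\cdot)$ on $\lc 0, \mathscr D^n\lc$, and then combine this with the lower bound $\Gamma^{\bm G}(t) - \Gamma^{\bm G}(s) \geq \eta(t-s)$ (for $s\le t$) supplied by \eqref{eq: 4.3}.

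First, I would use the explicit representation \eqref{eq: Gamma} of $\Gamma^{\bm G}(\cdot)$ together with the representation \eqref{eq:160524.1} of the covariations $\langle \mu_i,\mu_j\rangle(\cdot)$ as absolutely continuous with respect to $\Gamma^{\bm Q}(\cdot)$. Substituting \eqref{eq:160524.1} into \eqref{eq: Gamma} and changing the order of summation and integration yields
\begin{equation*}
\Gamma^{\bm G}(\cdot) \,=\, \int_0^{\cdot} h(t)\, \dx \Gamma^{\bm Q}(t), \qquad h(t) \,:=\, -\frac{1}{2}\sum_{i=1}^d\sum_{j=1}^d D^2_{i,j}\bm G(\mu(t))\,\alpha_{i,j}(t).
\end{equation*}

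Second, on the stochastic interval $\lc 0, \mathscr D^n\lc$ the vector process $\mu(\cdot)$ takes values in the compact set $K_n := \{x \in \bm\Delta^d : \min_{1\le i\le d} x_i \ge 1/n\}$, which lies inside $\bm\Delta^d_+$. By the smoothness assumption made on $\bm G$ in Section~\ref{S:3}, the Hessian $D^2\bm G$ is continuous on an open neighborhood of $\bm\Delta^d_+$, hence bounded on the compactum $K_n$ by some constant $M_n = M_n(\bm G,d,n)$. Combined with the Kunita--Watanabe bound $|\alpha_{i,j}(\cdot)|\le 1$ recalled after \eqref{eq:160524.1}, this yields the uniform pathwise estimate $|h(t)|\le B_n := d^2 M_n/2$ on $\lc 0, \mathscr D^n\lc$.

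Third, I would combine the two ingredients. For any $0\le s<t$ with $t\le \mathscr D^n$, the hypothesis \eqref{eq: 4.3} gives $\Gamma^{\bm G}(t) - \Gamma^{\bm G}(s) \ge \eta(t-s)$, while the second step together with the nondecrease of $\Gamma^{\bm Q}(\cdot)$ (from \eqref{eq:GQ}) gives
\begin{equation*}
\Gamma^{\bm G}(t) - \Gamma^{\bm G}(s) \,=\, \int_s^t h(u)\,\dx \Gamma^{\bm Q}(u) \,\le\, B_n \bigl(\Gamma^{\bm Q}(t)-\Gamma^{\bm Q}(s)\bigr).
\end{equation*}
Chaining these two inequalities gives $\Gamma^{\bm Q}(t)-\Gamma^{\bm Q}(s)\ge (\eta/B_n)(t-s)$, which is exactly the claim with $C := \eta/B_n$.

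The only step with any real content is the second one: namely, that the ``local curvature'' coefficient $h(\cdot)$ is pathwise bounded on $\lc 0, \mathscr D^n\lc$. This reduces to a compactness argument, using crucially both the standing smoothness hypothesis on $\bm G$ from Section~\ref{S:3} and the fact that stopping at $\mathscr D^n$ keeps $\mu(\cdot)$ uniformly away from the boundary of $\bm\Delta^d$. The dependence of $C$ on $n$ is unavoidable since $M_n$ may blow up as the process approaches $\partial\bm\Delta^d$ (e.g.\ for $\bm G = \bm H$, $D^2\bm H$ has entries $-1/x_i$).
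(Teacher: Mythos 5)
Your proof is correct and follows essentially the same route as the paper: substitute the representation \eqref{eq:160524.1} into \eqref{eq: Gamma}, bound the Hessian on the compactum $\{x \in \bm\Delta^d : \min_i x_i \ge 1/n\}$ using the standing smoothness hypothesis, invoke $|\alpha_{i,j}(\cdot)| \le 1$, and chain the resulting upper bound on increments of $\Gamma^{\bm G}$ against the lower bound from \eqref{eq: 4.3}. The paper phrases the conclusion as ``$K_n\Gamma^{\bm Q}(\cdot) - 2\Gamma^{\bm G}(\cdot)$ is nondecreasing'' rather than chaining increment inequalities, and uses a single constant $K_n$ for $\sum_{i,j}|D^2_{i,j}\bm G(\mu(\cdot))|$ rather than your $d^2 M_n$, but these are cosmetic differences only.
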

\begin{proof}
Let us fix $n \in \N$. Thanks to \eqref{eq: Gamma} we get
	\begin{equation*}
  \Gamma^{\bm G} (\cdot)= -
\frac{1}{2}  \sum_{i=1}^d \sum_{j=1}^d \int_0^{ \cdot}    D^2_{i,j} {\bm G}\big(\mu (t)  \big)  \alpha_{i,j} (t) \,\dx \Gamma^{\bm Q}(t) \qquad \text{  on $\lc 0, \mathscr D^n\lc$.}
     \end{equation*}
  Next, we observe that $\sum_{i=1}^d \sum_{j=1}^d \big|D^2_{i,j} {\bm G}\big(\mu (\cdot)  \big) \big| $ is bounded from above  by a real constant $K_n>0$ on the stochastic interval $\lc 0, \mathscr D^n\lc$. In light of the inequality $|\alpha_{i,j}(\cdot)| \leq 1$ for all $1 \le i,j  \le d$, we have that the difference 
     \begin{equation}
     \label{eq: Gamma_QG}
    K_n    \Gamma^{\bm Q}(\cdot) - 2 \Gamma^{\bm G} (\cdot) \qquad \text{is nondecreasing on ~~$\lc 0, \mathscr D^n\lc$}.
      \end{equation}
Hence,  \eqref{eq: 4.3} yields the statement with $C = 2 \eta/ K_n$.	
\end{proof}

\begin{prop}[The case of two assets]  \label{C:160610.1} 	Assume that $d=2$ and that there exist a regular   function $\bm G$ and a real constant $ \eta >0$ such that    \eqref{eq: 4.3} is satisfied. 
	Then  strong arbitrage relative to the market can be realized by a long-only trading strategy over the time horizon $[0, T]$,  for any given real number $T>0$.
\end{prop}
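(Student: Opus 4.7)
The plan is to verify the variance hypothesis of Theorem~\ref{P:160715.1} via Lemma~\ref{L:160711} and then exploit the special structure of $d = 2$. First, apply Lemma~\ref{L:160711} to the regular function $\bm G$ provided by the hypothesis: for every $n \in \N$ there exists a constant $C_n > 0$ such that $\Gamma^{\bm Q}(t) - C_n t$ is nondecreasing on the stochastic interval $\lc 0, \mathscr D^n\lc$. In the case $d = 2$, the identity $\mu_2 \equiv 1 - \mu_1$ forces $\langle \mu_2 \rangle = \langle \mu_1 \rangle$, so \eqref{eq:GQ} reduces to $\Gamma^{\bm Q}(\cdot) = 2\langle \mu_1 \rangle(\cdot)$. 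Therefore $\langle \mu_1 \rangle(t) \geq \eta' t$ on $\lc 0, \mathscr D^n\lc$ with $\eta' := C_n/2$.

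Next, I would fix $n \in \N$ large enough that $1/n < \min(\mu_1(0),\mu_2(0))/2$; for $d = 2$ this gives $\mathscr D^n = \inf\{t \geq 0 : \mu_1(t) \notin (1/n, 1-1/n)\}$. Writing $\mathscr D^*_i := \inf\{t \geq 0 : \mu_i(t) \leq \mu_i(0)/2\}$ for $i = 1, 2$, the choice of $n$ together with continuity of $\mu(\cdot)$ yields $\mathscr D^*_1 \wedge \mathscr D^*_2 \leq \mathscr D^n$, and hence $\lc 0, \mathscr D^*_1 \wedge \mathscr D^*_2\lc \subseteq \lc 0, \mathscr D^n\lc$. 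On the sub-event $\{\mathscr D^*_1 \leq \mathscr D^*_2\}$ we then have $\lc 0, \mathscr D^*_1\lc \subseteq \lc 0, \mathscr D^n\lc$, so the variance hypothesis of Theorem~\ref{P:160715.1} (applied to asset $1$ with $\eta = \eta'$) is verified, and the theorem delivers a long-only trading strategy $\varphi^{(1)}$ that is strong relative arbitrage with respect to the market over $[0,T]$. The symmetric argument applied to asset $2$ yields a long-only strategy $\varphi^{(2)}$ with the analogous property on the complementary event $\{\mathscr D^*_2 < \mathscr D^*_1\}$.

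The final — and, I expect, the principal — technical step is to amalgamate $\varphi^{(1)}$ and $\varphi^{(2)}$ into a single long-only trading strategy that is strong relative arbitrage on all of $\Omega$. My intended construction is a concatenation in the spirit of Remark~\ref{R:2.4}: hold the market portfolio (keeping wealth at $1$) up to the stopping time $\tau := \mathscr D^*_1 \wedge \mathscr D^*_2$, and at $\tau$ switch to whichever of $\varphi^{(i)}$ corresponds to the asset that has just dropped to half its initial weight, suitably rescaled to preserve wealth continuity. On the remaining sub-event $\{\tau > T\}$, $\mu(\cdot)$ stays deep inside the interior of $\bm \Delta^2$ throughout $[0,T]$, so $\lc 0, T\rc \subseteq \lc 0, \mathscr D^n \lc$ and a single direct application of Theorem~\ref{P:160715.1} (to asset $1$ using the already-verified bound $\langle \mu_1 \rangle(T) \geq \eta' T$) closes this case. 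The delicate bookkeeping is to verify that the concatenated process remains long-only, is self-financing in $\mu$, and ends with wealth strictly above $1$ at time $T$ on every one of the three events $\{\mathscr D^*_1 \leq \mathscr D^*_2 \leq T\}$, $\{\mathscr D^*_2 < \mathscr D^*_1 \leq T\}$, and $\{\tau > T\}$.
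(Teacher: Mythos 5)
Your opening move — invoking Lemma~\ref{L:160711} together with the $d=2$ identity $\Gamma^{\bm Q}(\cdot)=2\langle\mu_1\rangle(\cdot)$ — is exactly the paper's proof, which stops there and feeds this into Theorem~\ref{P:160715.1} directly. You, however, then build an elaborate case-split-and-concatenate machine, and this is where the argument breaks. A conceptual problem runs through the second and third paragraphs: Theorem~\ref{P:160715.1} cannot be ``applied on a sub-event.'' Its hypothesis is that $\langle\mu_1\rangle(t)\geq\eta t$ holds almost surely on $\lc 0,\mathscr D^*\lc$, and its conclusion is the existence of a single strategy that is a strong arbitrage on all of $\Omega$; verifying the hypothesis merely on $\{\mathscr D^*_1\leq\mathscr D^*_2\}$ (or on $\{\tau>T\}$ at the end) does not license the conclusion restricted to that event.

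The decisive gap is the concatenation. You hold the market on $\lc 0,\tau\rc$ with $\tau=\mathscr D^*_1\wedge\mathscr D^*_2$, arriving at $\tau$ with wealth exactly $1$, and then switch to a rescaled $\varphi^{(i)}$. But in the proof of Theorem~\ref{P:160715.1} the profit is accumulated \emph{during} $\lc 0,\tau\rc$, by holding the short $Z^{\bm F}$ position while asset $i$ drops to half its starting weight; by tracking the market over that window you forgo the entire gain. After $\tau$ you are back at wealth $1$ and must now produce arbitrage on $[\tau,T]$. If $\varphi^{(i)}$ operates on the stopped market $\mu(\cdot\wedge\tau)$, nothing moves on $[\tau,T]$ and the concatenated wealth stays at $1$; if it operates on the unstopped market, the path can subsequently leave $\lc 0,\mathscr D^n\lc$ and the variance lower bound you need is no longer available, so Theorem~\ref{P:160715.1} gives you nothing. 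Either way the concatenated strategy fails to satisfy $\P(V(T)>1)=1$. The paper's intended argument is that the hypothesis of Theorem~\ref{P:160715.1} is met outright for the two-asset market, so a single application of that theorem produces the strategy; the extra scaffolding you introduce is not only unnecessary but, as written, does not yield a valid strong arbitrage.
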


\begin{proof}
This follows directly from Lemma~\ref{L:160711} and  Theorem~\ref{P:160715.1}, as in this case $\Gamma^{\bm Q}(\cdot)= 2  \langle \mu_1 \rangle (\cdot)$.
\end{proof}

Alternatively, a weaker formulation of Proposition~\ref{C:160610.1},  which guarantees the existence of relative arbitrage over any given time horizon, but not the fact that this  relative arbitrage is strong,  can  also be proved via Theorem~\ref{T:5.2}. To apply this result, it suffices  to check that the largest eigenvalue of $\alpha(\cdot)$ equals $1$. However, this is easy to see here: we have $\mu_2(\cdot) = 1 - \mu_1(\cdot)$; hence $\alpha_{1,1}(\cdot) = \alpha_{2,2}(\cdot) = 1/2$ and  $\alpha_{1,2}(\cdot) = \alpha_{2,1}(\cdot) = -1/2$, so the eigenvalues of the matrix $\alpha(\cdot)$ are then indeed $0$ and $1$.

\begin{proof}[Proof of Theorem~\ref{T:5.2}]
First,   we may assume without loss of generality that $\bm G$ is nonnegative.
We shall argue by contradiction, assuming  that for some real number $T_* >0$ no relative arbitrage  is possible with respect to the market on the time horizon $[0, T_*]$.  Remark~\ref{R:4.2} gives then the existence of an  equivalent probability measure $\Qu_* \sim \P$ on on $\sigalg F(T_*)$,  under which the relative weights $\mu_1 (\cdot \wedge T_*), \cdots, \mu_d (\cdot \wedge T_*)$ are   martingales. 

We shall show next that this leads to the property \eqref{10b} with $h = \min_{x \in \bm \Delta^d}\bm G(x)$
and hence, on the strength of Theorem~\ref{T:5.1}, to the desired contradiction. 
In order to make headway with this approach we fix $\varepsilon > 0$ and $T \in (0,T_*]$, define  
\[
	\mathcal U := G^{-1}\big([h, h+\varepsilon)\big)  \cap (0,1)^d \subset \bm \Delta^d_+,
\]
 choose  a point $x \in \mathcal U$, and fix an integer $N \in \N$ large enough so that 
\begin{align*}
	\min_{1 \leq i \leq d} x_i > \frac{2}{N}; \qquad \min_{1 \leq i \leq d} \mu_i(0) > \frac{2}{N}.\end{align*}
We recall the constant $C = C(N, \eta, d, \bm G)$ from Lemma~\ref{L:160711} and define the stopping time
\begin{equation} \label{eq: 5.13}
\bm \varrho :=  \inf \left\{ t \ge 0 : \Gamma^{\bm Q}(t) > C T\right\}.
\end{equation}
For future reference, we  note that Lemma~\ref{L:160711} yields the set inclusion
\begin{align}\label{eq:160712.7}
\{\mathscr D^N \geq T\} \subset \{\bm \varrho \leq T\}.
\end{align}

Now,  in order to obtain  \eqref{10b} , it suffices to show that the stopped process
\begin{equation}
\label{eq: 5.14}
\nu (\cdot) := \mu \big( \cdot  \wedge  {\bm \varrho} \big)
\end{equation}
satisfies   $\Q_* \big( \nu (T) \in {\cal U} \big) >0$. This, in turn, will follow as soon as we have shown 
  \begin{align}  \label{eq:160530.1}
 	\Q_*\left(\sum_{i = 1}^d (\nu_i( T) - x_i)^2 < \delta \right) > 0.
 \end{align}
 Here $\delta \in (0,1/N)$ is sufficiently small so that, for all $y \in \bm \Delta^d$, we have
 \begin{align*}
 	\sum_{i =1}^d (y_i - x_i)^2 < \delta\qquad &\text{implies} \qquad y \in \mathcal U ~~~~\text{ and } ~~\min_{1 \leq i \leq d} y_i > \frac{1}{N};\\
 	\sum_{i =1}^d (y_i - \mu_i(0))^2 < \delta \qquad &\text{implies} \qquad \min_{1 \leq i \leq d} y_i > \frac{1}{N}.
 \end{align*}
 Clearly, on account of \eqref{eq: 5.13}, we have for the stopped process $\nu (\cdot)$ in \eqref{eq: 5.14} the upper bound
\begin{equation}
\label{eq: 5.16}
\sum_{i=1}^d \langle \nu_i \rangle (\cdot) = \sum_{i=1}^d  \langle \mu_i \rangle (\cdot  \wedge {\bm \varrho}) = \Gamma^{\bm Q}  (\cdot \wedge {\bm \varrho})  \le CT.
\end{equation}

In order to establish \eqref{eq:160530.1}, we modify the arguments in \citet{SV_support} and  \citet{Stroock:1971}. We fix the vector
\begin{align} \label{eq: 5.17}
	\zeta :=  \frac{x-\mu(0)} {C T}
\end{align}
and let $\alpha^\flat (\cdot)$ denote the {Moore-Penrose} pseudo-inverse of the matrix $\alpha  (\cdot)$ in \eqref{eq:160524.1}.
Next, note that the vector $\zeta$ is in the range of $\alpha(\cdot)$ on $\lc 0, \mathscr D^N\lc$ since the matrix $\alpha(\cdot)$ has rank $d-1$ and satisfies $\alpha(\cdot) \mathfrak e = 0$, where $\mathfrak e = (1, \cdots, 1)'$. 
Thus, we have $\alpha(\cdot) \alpha^\flat(\cdot) \zeta = \zeta$  on $\lc 0, \mathscr D^N\lc$.
We now introduce the continuous $\Q_*$--local martingale
\begin{equation*}
M(\cdot) := \int_0^{\cdot \wedge \mathscr D^N} \big \langle \alpha^\flat (t) \zeta, \mathrm{d}\nu(t) \big \rangle  = \int_0^{\cdot \wedge \mathscr D^N \wedge \bm \varrho} \big \langle \alpha^\flat (t) \zeta, \mathrm{d}\mu(t) \big \rangle.
\end{equation*}
The quadratic variation of this local martingale is dominated by a real constant: namely, 
 \begin{align*}
 	\langle M\rangle (\cdot) &=  \int_0^{\cdot \wedge \mathscr D^N \wedge {\bm \varrho}}   \big( \zeta'   \alpha^\flat ( t) \big)  \alpha  ( t)  \big(     \alpha^\flat ( t) \zeta \big)    \dx \Gamma^{\bm Q} (t)
	= \int_0^{\cdot \wedge \mathscr D^N \wedge {\bm \varrho}}    \zeta'   \alpha^\flat ( t) \zeta    \dx \Gamma^{\bm Q} (t)\\
 &\leq \frac{\zeta' \zeta }{c_N}  \Gamma^{\bm Q} (\cdot \wedge \mathscr D^N \wedge {\bm \varrho})  \leq \frac{1}{CT c_N} \sum_{i = 1}^d (x_i - \mu_i(0))^2 \leq \frac{1} {CT  c_N},
 \end{align*} 
 on account of \eqref{eq: 5.16} and \eqref{eq: 5.17}, where the real constant $c_N>0$ stands for a lower bound on the smallest positive eigenvalue of $\alpha (\cdot)$ on the stochastic interval $\lc 0, \mathscr{D}^N\lc$. Likewise, we have
 \begin{align}
 	\langle \nu_i, M\rangle(\cdot) &= \sum_{j=1}^d \sum_{k=1}^d \int_0^{\cdot \wedge \mathscr D^N  \wedge {\bm \varrho} }  
	   \zeta_k \alpha^\flat_{k,j} (t) \alpha_{i,j} (t)  \dx \Gamma^{\bm Q} (t) 
	 = \zeta_i   \Gamma^{\bm Q} \big(\cdot \wedge \mathscr{D}^K \wedge {\bm \varrho}\big)  \nonumber\\
	 &= \frac{x_i - \mu_i(0)}{CT}  \Gamma^{\bm Q} \big(\cdot \wedge \mathscr{D}^K \wedge {\bm \varrho}\big) , \qquad i = 1, \cdots, d.   \label{eq:160712.6}
 \end{align}

On account of Novikov's theorem (Proposition~3.5.12  in \cite{KS1}), the stochastic exponential  $ \mathcal E(M(\cdot))$
is a uniformly integrable $\Qu_*$--martingale. Thus, this exponential martingale generates a new probability measure $\Qu$ on $\sigalg F(T_*)$, which is equivalent to $\Qu_*$. According to the {van\,Schuppen-Wong} extension of the {Girsanov} theorem (ibid.,  Exercise~3.5.20), we have then the  decomposition 
\begin{align*}
	\nu_i(\cdot) = \mu_i(0) + \langle \nu_i, M\rangle(\cdot)  + X_i(\cdot), \qquad i=1, \cdots, d,
\end{align*}
where $X_i(\cdot)$ is a $\Qu$--local martingale with $X_i(0) = 0$.

We now consider the event
\begin{align*}
	A := \left\{\max_{t \leq T} \sum_{ i = 1}^d X_i^2(t) < \delta\right\} .
\end{align*} 
Thanks to \eqref{eq:160712.6}, any vector with components $ \mu_i(0) + \langle\mu_i, M\rangle(\cdot)$ for each $i = 1, \cdots, d$ is a convex combination of $\mu(0)$ and $x$; this leads to $A \subset \{\mathscr D^N \geq T\}$.  Now, in conjunction with \eqref{eq:160712.7} and \eqref{eq:160712.6}, this set inclusion implies that
$$\langle\nu_i, M\rangle(T) = x_i - \mu_i(0) \qquad \text{on $A$}, \qquad i = 1, \cdots, d,$$ 
and therefore 
\begin{align*}
	\sum_{i = 1}^{d} (\nu_i(T) - x_i)^2 = \sum_{i = 1}^{d} (\nu_i(T) - \mu_i(0) - \langle\nu_i, M\rangle(T))^2 = \sum_{i = 1}^{d} X_i^2(T) < \delta \qquad \text{on $A$}.
\end{align*}
Consequently, the claim \eqref{eq:160530.1} will follow  from the equivalence $\Q_* \sim \Q$, as soon as we have established that  $\Q(A) > 0$.

In order to argue this positivity, we start by introducing the processes 
\begin{align*}
	R(\cdot) := \sum_{i = 1}^d X_i^2(\cdot) \qquad \text{and}  \qquad Y(\cdot) := \int_0^\cdot
		 \1_{\{R(t)\geq \delta / 4\}}   \dx R(\cdot) \geq R(t) - \frac{\delta}{4}
\end{align*}
and noting 
\begin{align}   \label{eq:160910.2}
	N(\cdot) := Y(\cdot) - \sum_{i = 1}^d \int_0^{\cdot \wedge {\bm \varrho}}  \1_{\{R(t)\geq \delta/ 4\}}  \alpha_{i,i}(t)  \dx \Gamma^{\bm Q}(t)
	= 2 \int_0^\cdot    \1_{\{R(t)\geq \delta / 4\}}  \langle X(t), \dx X(t) \rangle.
\end{align}
Hence, it suffices to argue that
\begin{align} \label{eq:160910.1}
	 \Qu\left(\max_{t \leq T} Y(t) < \frac{\delta}{2} \right) > 0.
\end{align}

To this end, define  the $\Qu$--local martingale
	\begin{align*}
		\widehat M(\cdot) =- \frac{1}{2} \int_0^\cdot  \1_{\{R(t)\geq \delta / 4\}}   
		 \frac{\sum_{i = 1}^d \alpha_{i,i}(t)}{X'(t) \alpha(t) X(t)} \big\langle X(t) , \dx X(t) \big \rangle,
	\end{align*}
	whose quadratic variation is dominated again by a real constant: namely,
	\begin{align*}
		\langle \widehat M\rangle(\cdot) &= \frac{1}{4} \int_0^{\cdot \wedge {\bm \varrho}}  \1_{\{R(t)\geq \delta / 4\}}   
		 \frac{\big(\sum_{i = 1}^d \alpha_{i,i}(t)\big)^2}{X' (t) \alpha(t) X(t)} \,\dx \Gamma^{\bm Q}(t)  \leq  \frac{4}{c_N \delta^2} \int_0^\cdot \Big(\sum_{i = 1}^d \alpha_{i,i}(t)\Big)^2\, \dx \Gamma^{\bm Q}(t)\\
		 & = \frac{4 d^2 \Gamma^{\bm Q}(\cdot \wedge {\bm \varrho})}{c_N \delta^2} \leq \frac{4 d^2 C T}{c_N \delta^2}
	\end{align*}
	on account of \eqref{eq: 5.16}. Here the real constant $c_N>0$ stands again for a lower bound on the smallest positive eigenvalue of $\alpha (\cdot)$ on the stochastic interval $\lc 0, \mathscr{D}^N\lc$. 
Recalling the $\Q$--local martingale  $N(\cdot)$ in \eqref{eq:160910.2}, we obtain  
 \begin{align*}
	\big \langle N, \widehat M  \,\big \rangle(\cdot) = -   \sum_{i = 1}^d \int_0^{\cdot \wedge {\bm \varrho}}  \1_{\{R(t)\geq \delta / 4\}}  \alpha_{i,i}(t) \dx \Gamma^{\bm Q}(t).
\end{align*}
 
Another application of Novikov's theorem  yields that the stochastic exponential $\mathcal E(\widehat M(\cdot))$  is a uniformly integrable $\Qu$--martingale and generates a probability measure $\widehat \Qu$ on $\sigalg F(T_*)$, which is equivalent to $\Q$.
Under this new probability measure $\widehat \Q$, the process $Y(\cdot) = N(\cdot) - \langle N, \widehat M\rangle(\cdot)$ is a local martingale.
 However, this continuous $\widehat\Q$--local martingale   $Y(\cdot)$ is bounded from below by $- \delta /4$ and satisfies $Y(0)=0$;  thus the property 
\begin{align*}
	\widehat \Qu\left(\max_{t \leq T} Y(t) < \frac{\delta}{2} \right) > 0 
\end{align*}
holds, and \eqref{eq:160910.1} follows on the strength of the equivalence $\Q \sim \widehat \Q$. This concludes the proof.
 \end{proof}
 
 In the absence of strict nondegeneracy conditions as in Theorem~\ref{T:5.2},
 the controllability approach of \cite{Kunita:1974, Kunita:1978} yields conditions which guarantee that the assumptions of Theorem~\ref{T:5.1} hold when the vector market weight process $\mu (\cdot) = (\mu_1(\cdot) , \cdots, \mu_d (\cdot))'$ is an It\^o diffusion.  In the same spirit,  suitable H\"ormander-type hypoellipticity conditions on the covariations of the components of this diffusion, along with additional technical conditions on the drifts,   yield good ``tube estimates" which then again avoid the need to impose  strict nondegeneracy conditions; see  \cite{Bally:II:2016}, and the literature cited there.

\section{Lack of short-term relative arbitrage opportunities} 
 \label{S:6}

In Remark~\ref{R:old} we raised  the question, whether the condition of \eqref{eq: 4.3} yields the existence of relative arbitrage  over sufficiently short time horizons. In Section~\ref{S:5} we saw that, under appropriate {\it additional conditions} on the covariation  structure of the market weights, the answer to this question is affirmative. 
In general, however, the answer to the question of Remark~\ref{R:old} is negative, as we shall see in the present section. Specific counterexamples of market models will be constructed in a systematic way, to illustrate that arbitrage opportunities over arbitrarily short time horizons do not necessarily exist in   models which satisfy \eqref{eq: 4.3}.
 
In Subsections~\ref{SS:6.1}, \ref{SS:6.2}, and \ref{SS:6.3} we shall focus on the the quadratic function $  \bm Q$ of  \eqref{eq:Q}. More precisely, we   construct there variations of market models $\mu(\cdot)$ that satisfy \eqref{eq: 4.3} with $\bm G = \bm Q$, but do not  admit  relative arbitrage over any time horizon.  We recall that $ 2\Gamma^{\bm H}(\cdot) - \Gamma^{\bm Q} (\cdot) $ is nondecreasing for the cumulative excess growth $\Gamma^{\bm H}(\cdot)$ of \eqref{eq: 3.1}; thus, if \eqref{eq: 4.3} is satisfied   by the quadratic function $  \bm Q$, it is automatically also satisfied by the entropy function $
\bm H$   of \eqref{eq: 3.13}.  This then also yields a negative answer to the question posed in Remark~\ref{R:old}. 
In Subsection~\ref{SS:6.4}, market weight models $\mu(\cdot)$ are constructed, such that $\bm G(\mu(\cdot))$ moves along the level sets of a general Lyapunov function $\bm G$ at unit speed, namely, with $\bm G (\mu (t)) = \bm G (\mu (0)) - t$ and $\Gamma^{\bm G} (t)=t$, but which do not admit relative arbitrage over any time horizon. 

\begin{rem}[Some simplifications for notational convenience] \label{R:6.1}
Throughout this Section we shall make certain assumptions, mostly for notational convenience.
\begin{itemize}
\item We shall focus on  the case $d=3$.  Indeed,  Proposition~\ref{C:160610.1} shows that it would be impossible to find a counterexample to the question of Remark~\ref{R:old} when $d=2$.  The counterexamples below can be generalized to more than three assets,  but at the cost of additional notation  and without any major additional insights.
\item We shall construct   market models that satisfy, for a certain Lyapunov function $\bm G$ the condition
\begin{equation}
\label{eq: 4.3mod}
\P \left( \text{the mapping} ~ [0,T] \ni t \mapsto \Gamma^{\bm G}(t)- \eta t \,\,\,\text{is nondecreasing} \right)= 1, \quad \text{for some  $\eta >0$ and $T>0$}.
\end{equation}
We may do this without loss of generality.  Indeed, let us assume we have a market model $\mu(\cdot)$ that satisfies \eqref{eq: 4.3mod} and does not allow relative arbitrage over any time horizon.  By appropriately adjusting the dynamics of $\mu(\cdot)$, say after time $T/2$, it is then always possible to construct a market model $\widehat \mu(\cdot)$  that satisfies \eqref{eq: 4.3}, and also does not allow for arbitrage over short time horizons (as it displays the same dynamics up to time $T/2$).
\end{itemize}
\end{rem}

\subsection{A first step for the quadratic generating function} 
\label{SS:6.1} 

Here is a first result on absence of relative arbitrage under the condition of \eqref{eq: 4.3mod}.

\begin{prop}[Counterexample with Lipschitz-continuous dispersion matrix]
\label{P:160712.1}
Assume that the filtered probability space $(\Omega, \sigalg{F},   \P), $ $ \filt{F} = (\sigalg{F} (t) )_{t \geq 0 }$ supports a Brownian motion $W(\cdot)$.  Then there exists an  It\^o diffusion  $\mu (\cdot)= ( \mu_1 (\cdot  ), \mu_2 (\cdot  ), \mu_3 (\cdot  ))'$  with values in ${\bm \Delta}^3$,     a time-homogeneous and Lipschitz-continuous dispersion matrix in  ${\bm \Delta}^3_+$,  and the following properties: 
\begin{enumerate}[label={\rm(\roman{enumi})}, ref={\rm(\roman{enumi})}]
\item  No relative  arbitrage exists with respect to the market with  relative weight process $\mu (\cdot)$,    over any time horizon $[0,T]$ with  $T > 0$. 
\item The condition of \eqref{eq: 4.3mod} is satisfied by the  
quadratic   $\bm G = \bm Q$ of \eqref{eq:Q} 
with $\eta = 2/3 - \bm Q(\mu(0))$ and with $T =     T^* (\mu (0)) $   a strictly positive real number, provided that $\bm Q(\mu(0)) \in \big( 1/2, 2/3\big)$.  
\end{enumerate}
\end{prop}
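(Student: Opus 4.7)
The plan is to build $\mu$ as a bounded $\P$-martingale via an SDE with purely tangential dispersion on the simplex; this makes Remark~\ref{R:4.2}, applied with $\Qu_T = \P$, yield (i) on every horizon, while a deterministic pathwise ODE embedded in the dispersion produces the almost-sure lower bound \eqref{eq: 4.3mod} needed in (ii).

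Set $c := (1/3, 1/3, 1/3)'$. Since $\bm Q(\mu) = 2/3 - \|\mu - c\|^2$, the hypothesis $\bm Q(\mu(0)) \in (1/2, 2/3)$ is equivalent to $r_1 := \|\mu(0) - c\| \in (0, 1/\sqrt{6})$, with $r_1^2 = 2/3 - \bm Q(\mu(0)) = \eta$. Fix an orthonormal basis $(e_1, e_2)$ of the hyperplane $H := \{y \in \R^3 : \mathbf{1}^\prime y = 0\}$ and let $J \in \R^{3 \times 3}$ act as the $90^{\circ}$ rotation on $H$ (with $Je_1 = e_2$, $Je_2 = -e_1$) and as zero on $H^\perp$; in particular $\mathbf{1}^\prime J = 0$. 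Noting that $1/\sqrt{6}$ is the inradius of $\bm \Delta^3$ in this parametrization, I would fix $r^* \in (r_1, 1/\sqrt{6})$ and a Lipschitz function $\phi : [0, \infty) \to [0, 1]$ with $\phi \equiv 1$ on $[0, r^*]$, linear on $[r^*, 1/\sqrt{6}]$, and $\phi \equiv 0$ on $[1/\sqrt{6}, \infty)$. Set $\sigma(\mu) := \phi(\|\mu - c\|) \, J(\mu - c)$; this dispersion is Lipschitz on $\bm \Delta^3$ (and in particular on $\bm \Delta^3_+$), time-homogeneous, and satisfies $\mathbf{1}^\prime \sigma \equiv 0$, so the sum of the coordinates is preserved.

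Consider the SDE $d\mu = \sigma(\mu) \, dB$ with $B$ a scalar Brownian motion. In the 2D coordinates $\mu = c + Z_1 e_1 + Z_2 e_2$ this reads $dZ = \phi(\|Z\|) \, JZ \, dB$, and Itô's formula yields the pathwise deterministic identity $d\|Z\|^2 = \phi(\|Z\|)^2 \|Z\|^2 \, dt$. Because $\phi$ vanishes linearly at $1/\sqrt{6}$, the integral $\int^{1/\sqrt{6}} dr/(r \phi(r)^2)$ diverges, so $\|Z(t)\|$ stays strictly below $1/\sqrt{6}$ for every $t \geq 0$, i.e., $\mu(t)$ remains in the inscribed disk of $\bm \Delta^3$. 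As a bounded local martingale, $\mu(\cdot)$ is then a true $\P$-martingale, and Remark~\ref{R:4.2} applied with $\Qu_T = \P$ rules out relative arbitrage on any $[0, T]$, proving (i).

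For (ii), the orthonormality of $(e_1, e_2)$ gives $\Gamma^{\bm Q}(t) = \sum_i \langle \mu_i \rangle(t) = \int_0^t \phi(\|Z(s)\|)^2 \|Z(s)\|^2 \, ds$; with $T^* := \ln(r^{*2}/r_1^2) > 0$, the deterministic ODE yields $\|Z(t)\|^2 = r_1^2 e^t \leq r^{*2}$ on $[0, T^*]$, so $\phi \equiv 1$ along the path, and $\Gamma^{\bm Q}(t) - \eta t = r_1^2(e^t - 1 - t)$ has nonnegative derivative $r_1^2(e^t - 1)$ on $[0, T^*]$, giving the required monotonicity. The essential conceptual obstacle the construction bypasses is that \eqref{eq: 4.3mod} is an almost-sure, pathwise requirement on the instantaneous covariation: a diffusion with a nondegenerate radial component could, in arbitrarily small time and with positive probability, drift out of the region where the trace exceeds $\eta$. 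The tangential choice of $\sigma$ removes this difficulty by rendering $\|Z\|^2$ pathwise deterministic, thereby converting the a.s.\ condition into an ODE identity and simultaneously ensuring that $\mu$ never exits $\bm \Delta^3$.
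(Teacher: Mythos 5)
Your proof is correct, and it takes a genuinely different technical route from the paper's while sharing the same conceptual core. The paper solves the \emph{linear} Lipschitz system \eqref{eq:160512.4}--\eqref{eq:160512.6} --- which, in your coordinates, is precisely $\dx v = J(v-c)\,\dx W$ without any cutoff --- establishes the deterministic radial growth ${\bm r}(v(t)) = {\bm r}(v(0))\, e^t$ in Lemma~\ref{L:160509.1}, and then \emph{stops} the process at the first exit time $\tau$ from $\bm\Delta^3_+$; the stopped process $\mu(\cdot) := v(\cdot \wedge \tau)$ is a bounded martingale with the required properties, and one gets the sharp value $T^*(\mu(0)) = \log\bigl(1/(6\,{\bm r}(\mu(0)))\bigr)$. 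You instead multiply the dispersion by a radial cutoff $\phi$ vanishing linearly at the inradius $1/\sqrt{6}$; the divergence of $\int^{1/\sqrt{6}} \dx r / (r\,\phi(r)^2)$ then guarantees that the solution never reaches the boundary of the inscribed disk, so $\mu(\cdot)$ itself --- unstopped --- is a bounded martingale on $[0,\infty)$. What your variant buys is a dispersion that is Lipschitz on all of $\R^3$ (not only on $\bm\Delta^3_+$: the paper's stopped process has a dispersion that is zero on $\bm\Delta^3\setminus\bm\Delta^3_+$ and hence discontinuous across the boundary, which the proposition tolerates but your construction avoids), and a process that stays in the open inscribed disk forever rather than being absorbed at $\partial\bm\Delta^3_+$. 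What it gives up is the canonical horizon length: your $T^* = \log(r^{*2}/r_1^2)$ depends on the auxiliary parameter $r^* \in (r_1, 1/\sqrt{6})$ and is strictly smaller than the paper's $\log(1/(6\eta))$, although it can be made arbitrarily close to it. Both arguments are valid proofs of the proposition as stated.
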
 

We provide the proof of Proposition \ref{P:160712.1}  at the end of this subsection. The following system of stochastic differential equations will play a fundamental role when deriving the dynamics of the relative market weight process $\mu(\cdot)$ in this proof: 
 \begin{align}
	\dx v_1(t) &= \frac{1}{\sqrt 3} \big (v_2(t) - v_3(t)\big) \,\dx W(t),\qquad t \geq 0,   \label{eq:160512.4}\\
	\dx v_2(t) &= \frac{1}{\sqrt 3} \big(v_3(t) - v_1(t)\big) \, \dx W(t),\qquad t \geq 0,  \label{eq:160512.5}\\
	\dx v_3(t) &= \frac{1}{\sqrt 3}  \big(v_1(t) - v_2(t)\big)\, \dx W(t), \qquad t \geq 0, \label{eq:160512.6}
\end{align}
where  $W(\cdot)$ denotes a Brownian motion.  The Lipschitz continuity of the coefficients guarantees that this system has a pathwise unique strong solution for any initial point $v(0) \in \R^3$.  If, moreover, $v(0)$ is  point in the hyperplane $ \mathbb{H}^3$ of \eqref{eq:160510.1}, then we also have $v(t) \in \mathbb{H}^3$, for each $t \geq 0$.

\begin{rem}[Explicit solution]  
\label{R:160712.1}
Let us provide an explicit solution $v(\cdot)$ of the system in  \eqref{eq:160512.4}--\eqref{eq:160512.6} provided that $v(0) \in \mathbb{H}^3$. If $v(0) = (1/3,1/3,1/3)',$ then we have also   $v(t) = (1/3,1/3,1/3)'$ for all $t \geq 0$. More generally, some determined but fairly basic stochastic calculus shows that  the solution of \eqref{eq:160512.4}--\eqref{eq:160512.6} is given by 
\begin{align*}
	v_1(t)  =   \frac{1}{3}  +    \frac{e^{t/2}}{3} & \Big[ 2 v_1(0) \cos(W(t)) +
	v_2(0)  \left(-\cos(W(t)) + \sqrt{3} \sin(W(t)) \right)  \\
	&~~~~~~~+ v_3(0)
	 \left(-\cos(W(t)) - \sqrt{3} \sin(W(t)) \right)   \Big] ;\\
	v_2(t)  =  \frac{1}{3}  +     \frac{e^{t/2}}{3} & \Big[ v_1(0) 
		\left(-\cos(W(t)) - \sqrt{3} \sin(W(t)) \right) +
		2 v_2(0)  \cos(W(t))   \\
	&~~~~~~~+  v_3(0)
		 \left(-\cos(W(t)) + \sqrt{3} \sin(W(t)) \right)   \Big] ;\\
	v_3(t) =   \frac{1}{3} +    \frac{e^{t/2}}{3} & \Big[ v_1(0) 
		\left(-\cos(W(t)) + \sqrt{3} \sin(W(t)) \right) +
		v_2(0)   \left(-\cos(W(t)) - \sqrt{3} \sin(W(t)) \right)   \\
	&~~~~~~~+  2 v_3(0)
		 \cos \big(W(t)\big)  \Big]. 
\end{align*}
\end{rem}

\begin{rem}[Representation in a special case]  
\label{R:160725.1}
With the initial condition
	\begin{align*}
		v_i(0) = \frac{1}{3} + \delta \cos \left(2 \pi \left( u   + \frac{i-1}{3}  \right) \right), \qquad i  = 1,2,3 
	\end{align*}
for some  $\delta \in [0,1/3]$ and $u \in \R$,  we find another useful representation of the solution in Remark \ref{R:160712.1}.
 Indeed, a computation shows 
\begin{align}  \label{eq:160514.1}
	\sum_{i=1}^3 v_i(0) = 1 + \delta \left(\cos(2 \pi u) + \cos \left(2 \pi  u   + \frac{2 \pi}{3}  \right) +  \cos \left(2 \pi  u   + \frac{4 \pi}{3}  \right)  \right)  = 1,
\end{align}
hence $v(0) \in \mathbb{H}^3$.  We now claim that 
		\begin{align} \label{eq:160712.3}
		v_i(t) = \frac{1}{3} + \delta  e^{t/2} \cos \left(W(t) + 2 \pi \left( u   + \frac{i-1}{3}  \right) \right), \qquad i  = 1,2,3,\,\, t \geq 0
	\end{align}
		solves the system \eqref{eq:160512.4}--\eqref{eq:160512.6}.  To this end, note that
 It\^o's formula yields the dynamics
\begin{align*}
	\dx v_i(t) = - \delta  e^{t/2} \sin \left(W(t) + 2 \pi \left( u   + \frac{i-1}{3}  \right) \right) \dx W(t), \qquad i  = 1,2,3,\,\, t \geq 0.
\end{align*}
 Moreover, since $\sin(\pi/3) = \sqrt{3}/2$,  it suffices to argue that
\begin{align*}
	 2 \sin\left(\frac{\pi}{3}\right)  \sin \left(x   + \frac{2\pi (i-1)}{3}   \right)
	=  \cos \left(x   + \frac{2\pi (i+1)}{3}   \right) - \cos \left(x   + \frac{2\pi i}{3}   \right), \qquad i = 1,2,3,\,\,x \in \R.
\end{align*}
 This is a basic trigonometric identity, from which the claim follows.     
\end{rem}

To study the dynamics of  the solution $v(\cdot)$ for the system in  \eqref{eq:160512.4}--\eqref{eq:160512.6} further, we introduce the   function:
\begin{align} \label{160512.4}
	{\bm r}: \R^3 \rightarrow [0,\infty), \qquad x \mapsto {\bm r} (x) := {\frac{1}{3}  \bigg( \big(x_1 - x_2\big)^2 +
	\big(x_1 - x_3\big)^2 +  
	\big(x_2 - x_3\big)^2}  \bigg). 
\end{align}
The following result shows that $\sqrt{{\bm r} (x)}$ is the distance from the ``node" $( 1/3 , 1/3, 1/3)'$ on the lateral face of the unit simplex.  

\begin{lem}[Another representation for ${\bm r}$]  
\label{L:160509.1}
	We have the representation
	\begin{align}   
	\label{eq:160718.1}
		{\bm r}(x) =   \sum_{i = 1}^3 \left(x_i-\frac{1}{3}\right)^2  = \sum_{i =1}^3 x_i^2 - \frac{1}{3}, \qquad x \in  \mathbb{H}^3
	\end{align}
in the notation of \eqref{eq:160510.1}.	Moreover, if $v(\cdot)$ denotes a solution to  \eqref{eq:160512.4}--\eqref{eq:160512.6} with $v(0) \in \mathbb H^3$,  then 
	\begin{align} 
\label{eq:160712.2}
	{\bm r}\big(v(t)\big) = {\bm r}  
	\big(v(0)\big)   e^t, \qquad t \geq 0.
\end{align}
\end{lem}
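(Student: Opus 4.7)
The plan is to handle the two claims separately. For the algebraic identity \eqref{eq:160718.1}, I would start from the definition of $\bm r$, expand the squares, and use $\sum_{i=1}^3 x_i = 1$. Specifically,
\begin{align*}
3\, \bm r(x) &= (x_1-x_2)^2 + (x_1-x_3)^2 + (x_2-x_3)^2 \\
&= 2 \sum_{i=1}^3 x_i^2 - 2 \sum_{1 \le i < j \le 3} x_i x_j.
\end{align*}
Since $\big( \sum_{i=1}^3 x_i\big)^2 = 1$ on $\mathbb H^3$, we can write $2\sum_{i<j} x_i x_j = 1 - \sum_i x_i^2$, and substituting gives $\bm r(x) = \sum_i x_i^2 - 1/3$. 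The identity $\sum_i (x_i - 1/3)^2 = \sum_i x_i^2 - 1/3$ on $\mathbb H^3$ then follows by expansion.

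For the exponential growth \eqref{eq:160712.2}, I would use It\^o's formula on the representation just established. Since the trajectory stays in $\mathbb H^3$, it is enough to compute the dynamics of $F(t) := \sum_{i=1}^3 v_i^2(t) - 1/3 = \bm r(v(t))$. Applying It\^o,
\begin{equation*}
\dx F(t) = 2 \sum_{i=1}^3 v_i(t)\, \dx v_i(t) + \sum_{i=1}^3 \dx \langle v_i \rangle(t).
\end{equation*}
Reading off the diffusion coefficients from \eqref{eq:160512.4}--\eqref{eq:160512.6}, the quadratic variation contribution equals
\begin{equation*}
\sum_{i=1}^3 \dx \langle v_i\rangle(t) = \tfrac{1}{3}\big((v_2(t)-v_3(t))^2 + (v_3(t)-v_1(t))^2 + (v_1(t)-v_2(t))^2\big) \dx t = \bm r(v(t))\, \dx t.
\end{equation*}
The local-martingale part has coefficient proportional to $v_1(v_2-v_3) + v_2(v_3-v_1) + v_3(v_1-v_2)$, which one verifies to be identically zero; hence the stochastic integral vanishes. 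Therefore $\dx F(t) = F(t)\, \dx t$, a deterministic linear ODE whose unique solution is $F(t) = F(0)\, e^t$, yielding \eqref{eq:160712.2}.

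There is no real obstacle here; both steps are direct calculations. The only point worth noting is the cancellation of the martingale part, which is what makes the argument work so cleanly and explains why the function $\bm r$ was chosen: it is the invariant quadratic form naturally associated with the antisymmetric dispersion structure in \eqref{eq:160512.4}--\eqml{eq:160512.6}, so the stochastic fluctuations average out and the exponential growth is driven entirely by the quadratic variation.
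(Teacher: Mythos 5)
Your proposal is correct and follows essentially the same route as the paper: a direct algebraic expansion for \eqref{eq:160718.1} (the paper substitutes $y_i := x_i - 1/3$ and uses $y_3 = -y_1 - y_2$ repeatedly, whereas you use $\big(\sum_i x_i\big)^2 = 1$, but these are equivalent elementary computations), followed by an application of It\^o's formula for \eqref{eq:160712.2}, where you correctly observe that the local-martingale part cancels identically and the quadratic-variation part produces the deterministic dynamics $\dx\, \bm r(v(t)) = \bm r(v(t))\, \dx t$. The paper compresses this second step into the phrase ``basic stochastic calculus yields the very simple, deterministic dynamics,'' so your write-up is, if anything, more explicit about the cancellation that makes the argument work.
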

\begin{proof}

Fix $x \in \mathbb{H}^3$ and define $y_i := x_i - 1/3$ for each $i = 1,2,3$. Then we get
\begin{align*}
	{\bm r}(x) &= \frac{1}{3} \Big(\big(y_1 - y_2\big)^2 + \big(y_1 - y_3\big)^2 + \big(y_2 - y_3\big)^2\Big) = \frac{2}{3} \sum_{i =1}^3 y_i^2 - \frac{2}{3} \big (y_1 y_2 + y_1 y_3 + y_2 y_3\big)\\
	&= \frac{2}{3} \sum_{i =1}^3 y_i^2 + \frac{2}{3} \big (y_1^2 + y_2^2 + y_1 y_2\big) =  \frac{2}{3} \sum_{i =1}^3 y_i^2 + \frac{1}{3} \sum_{i =1}^3 y_i^2  = \sum_{i =1}^3 y_i^2
	= \sum_{i = 1}^3 \left(x_i-\frac{1}{3}\right)^2,
\end{align*}
 using      $y_3 = -y_1 - y_2$ repeatedly.     Basic stochastic calculus and \eqref{eq:160718.1}, yield now the very simple, deterministic dynamics $\dx {\bm r}(v(t)) = {\bm r}(v(t)) \,\dx t$ for all $t \geq 0$, provided that $v(0) \in \mathbb H^3$; and 
\eqref{eq:160712.2} follows.
\end{proof}

\begin{proof}[Proof of Proposition~\ref{P:160712.1}]
	We let $v(\cdot)$ denote the solution of the system of stochastic equations described in \eqref{eq:160512.4}--\eqref{eq:160512.6} for some $v(0) \in {\bm \Delta}^3_+$. Next, we define the stopping time 
		\begin{align*}
		\tau  := \inf \big\{t \geq 0: v(t) \notin \bm \Delta^3_+ \big\}
	\end{align*}
and the stopped  process $ \mu(\cdot) := v (\cdot \wedge \tau)$.
 This   is a vector of martingales, so relative arbitrage,  with respect to a market with the components of $\mu (\cdot)$ as its   relative weights,  is impossible, over any given time horizon $[0,T]$ with  $T>0$; see also Remark~\ref{R:4.2}.
 
 Now, the definition of the stopping time $\tau$ implies that   $\bm Q(\mu(\tau)) \leq 1/2$, thus also $\bm r(\mu(\tau)) \geq 1/6$, hold.  In conjunction with Lemma~\ref{L:160509.1}, this yields that $\tau$ is bounded away from zero, namely, that $ T^* (\mu (0)) := \log \big( 1 / ( 6 {\bm r} (\mu (0)) ) \big) \leq \tau$  holds,  since  $\bm Q(\mu(0)) >1/2 $.    Moreover,  with \eqref{160512.4} and \eqref{eq:160712.2} we have
 $$
 \frac{\partial}{\partial t} \Gamma^{\bm Q}(t) = {\bm r}  (\mu(t) )   \ge  {\bm r}  \big(\mu(0) \big), \qquad t \in [0, T^* (\mu (0))].
 $$  
Hence,  \eqref{eq: 4.3mod} is satisfied with $ \bm G=\bm Q$ and $\eta ={\bm r}  (\mu(0) ) = 2/3 - \bm Q (\mu(0))\in( 0, 1/6)$, thanks to \eqref{eq:160718.1}.
\end{proof}

\begin{rem}[A sanity check]
 \label{K:160813.1}
We can verify that  
 $T^* (\mu (0)) < \bm Q(\mu(0))/ \eta$
holds with the notation of the above proof, and in accordance with Corollary \ref{C:4.4}.
 \end{rem}
 
\begin{rem}[Expanding circle]
 \label{K:160813.2}
Let us observe from \eqref{eq:160712.2}, that the market weights constructed in Proposition~\ref{P:160712.1} live on an expanding circle.  More specifically, from \eqref{eq:160718.1}, \eqref{eq:160712.2}, and \eqref{eq:Q} we have  
$$
\mu_1^2 (t) + \mu_2^2 (t) +\mu_3^2 (t) = \frac{1}{3} + {\bm r}\big(\mu(t)\big) = \frac{1}{3} + {\bm r} \big(\mu(0)\big)   e^t, \qquad t \in [0, T^*];
$$
hence  the vector $  \mu (\cdot)   $ of relative market weights lies on the intersection of the hyperplane $\mathbb{H}^3$ with the sphere of radius  $  \sqrt{(1/3) + {\bm r}(\mu(0)) e^t }$ centered at the origin. This intersection is a circle of radius $\sqrt{ {\bm r}(\mu(0))  e^t}$ centered at the node $(1/3, 1/3, 1/3)'$.  
\end{rem}

\subsection{Starting away from the node, and ``moving slowly"}  
\label{SS:6.2}

As \eqref{eq:160712.2} shows, in the context of Proposition~\ref{P:160712.1}, the process $\mu (\cdot)$ starts out away from the node  on the lateral face of the unit simplex, then spins outward {\it very} {\it fast} (namely, exponentially fast),   until it reaches the boundary of the simplex at some time $  \tau \ge \log \big( 1 / ( 6 {\bm r} (\mu (0)) ) \big)$; this time is bounded away from zero.

We   construct here another   market model,   similar  to the one in Subsection~\ref{SS:6.1}, but in which the spinning motion of   $\mu (\cdot)$ is ``slowed down" quite considerably. More precisely, the diffusion process of Theorem~\ref{T:160517.1}  starts  away from the node $(1/3,1/3,1/3)',$   then moves  outwards along level sets of the quadratic  function $\bm Q$. This   takes time at least $T = \bm Q(\mu_0)-1/2$; on the interval $[0,T]$ the condition in \eqref{eq: 4.3mod} is  satisfied with $\bm G = \bm Q$ and $\eta =1$, but no arbitrage with respect to the market can exist.

\begin{thm}[Lack of short term relative arbitrage opportunities]
\label{T:160517.1}
Assume that the filtered probability space $(\Omega, \sigalg{F},   \P), $ $ \filt{F} = (\sigalg{F} (t) )_{t \geq 0 }$ supports a Brownian motion $W(\cdot)$.  Fix $\mu_0 \in \bm \Delta^3_+$  with $\bm Q(\mu_0) > 1/2$. Then there exists an  It\^o diffusion  $\mu (\cdot)= ( \mu_1 (\cdot  ), \mu_2 (\cdot  ), \mu_3 (\cdot  ))'$  with values in ${\bm \Delta}^3$,     a time-homogeneous  dispersion matrix,  starting point $\mu(0) = \mu_0$, and the following properties: 
\begin{enumerate}
[label={\rm(\roman{enumi})}, ref={\rm(\roman{enumi})}]
\item\label{T:160517.1:2} 
No relative  arbitrage exists with respect to the market with relative  weight process $\mu (\cdot)$,    over any time horizon $[0,T]$ with   $T > 0$.
\item\label{T:160517.1:1} The condition of \eqref{eq: 4.3mod} is satisfied for the quadratic function $\bm G=\bm Q$
with $\eta = 1$ and with $T = \bm Q(\mu_0)-1/2$.
\end{enumerate}
\end{thm}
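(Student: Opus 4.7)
The plan is to take $\mu(\cdot)$ as a deterministic time change of the martingale diffusion $v(\cdot)$ from Proposition~\ref{P:160712.1}. Lemma~\ref{L:160509.1} gives $\bm r(v(u)) = \bm r_0 \, e^u$ with $\bm r_0 := \bm r(\mu_0) = 2/3 - \bm Q(\mu_0) \in (0, 1/6)$; instead I want $\bm r(\mu(t)) = \bm r_0 + t$, which is linear and so matches the required $\Gamma^{\bm Q}(t) = t$ as long as $\mu(\cdot)$ stays inside the closed inscribed disk $\{x \in \bm \Delta^3 : \bm r(x) \leq 1/6\} \subset \bm \Delta^3$, that is, up to time $T_0 := \bm Q(\mu_0) - 1/2$.

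Concretely, letting $v(\cdot)$ denote the strong solution of \eqref{eq:160512.4}--\eqref{eq:160512.6} starting at $\mu_0$ and setting $u^* := \log(1/(6\bm r_0))$, I define
\[
\mu(t) \,:=\, v\bigl(u(t)\bigr), \qquad u(t) \,:=\, \log\bigl(1 + t/\bm r_0\bigr) \wedge u^*, \qquad t \ge 0.
\]
Since $u(\cdot)$ is a deterministic, bounded clock, an elementary change of variable in the covariations of $v(\cdot)$ yields
\[
\langle \mu_i, \mu_j \rangle(t) \,=\, \int_0^t \frac{(\mu_{i-1}(r) - \mu_{i+1}(r))(\mu_{j-1}(r) - \mu_{j+1}(r))}{3 \, \bm r(\mu(r))} \, \1_{\{\bm r(\mu(r)) < 1/6\}} \, \dx r,
\]
with cyclic indices mod $3$. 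The kernel is bounded and is a measurable function of the state alone (the inverse square root is harmless since $\bm r(\mu(\cdot)) \ge \bm r_0 > 0$), so $\mu(\cdot)$ is an It\^o diffusion with a time-homogeneous dispersion matrix, as required.

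Property~(ii) is then automatic from Lemma~\ref{L:160509.1}: $\bm r(\mu(t)) = \bm r_0 e^{u(t)} = (\bm r_0 + t) \wedge (1/6)$ deterministically, which keeps $\mu(\cdot)$ inside the closed inscribed disk (hence inside $\bm \Delta^3$); and summing the kernel above over $i = j$ via the identity $\sum_i (x_{i-1} - x_{i+1})^2 = 3\, \bm r(x)$ gives $\Gamma^{\bm Q}(t) = t \wedge T_0$, which is \eqref{eq: 4.3mod} with $\eta = 1$ and $T = T_0$. Property~(i) is equally immediate: each $\mu_i(\cdot)$ is a deterministic time change of the bounded martingale $v_i(\cdot \wedge u^*)$, hence itself a bounded $\P$-martingale, so Remark~\ref{R:4.2} (with $\Q_T := \P$) precludes relative arbitrage on every horizon $[0, T]$.

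The only step requiring genuine care is verifying that $\mu(\cdot)$ really is an It\^o diffusion in the sense of the statement — that it solves an SDE in $t$-time with a state-only dispersion, and not merely that it is a semimartingale. Because the clock $u(\cdot)$ is deterministic and $v(\cdot)$ has the explicit exponential $\bm r$-speed of Lemma~\ref{L:160509.1}, this reduces to the change-of-variable sketched above, so the apparent difficulty is cosmetic rather than deep.
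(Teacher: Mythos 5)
Your proposal is correct, and it reaches the same process (up to a benign discrepancy in the stopping rule) by a genuinely different route. The paper goes the opposite direction: Proposition~\ref{P:160514.1} \emph{directly} constructs the slowed-down SDE \eqref{eq:160512.1}--\eqref{eq:160512.3}, establishing pathwise uniqueness via a cutoff of the singular coefficient $1/\sqrt{3\bm r}$ near the umbilical point and observing, as a consequence of Lemma~\ref{L:160509.1}, that $\bm r(w(\cdot))$ is deterministic so the cutoff never activates; the process is then stopped at the exit time of $\bm\Delta^3_+$. You instead take the ``fast'' exponential diffusion $v(\cdot)$ of \eqref{eq:160512.4}--\eqref{eq:160512.6}, which already exists by Lipschitz theory, and apply the deterministic clock $u(t)=\log(1+t/\bm r_0)\wedge u^*$ chosen precisely so that $\bm r(v(u(t)))=\bm r_0 e^{u(t)}=(\bm r_0+t)\wedge(1/6)$. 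The covariation computation, the identity $\sum_i(x_{i-1}-x_{i+1})^2=3\bm r(x)$, the deduction $\Gamma^{\bm Q}(t)=t\wedge T_0$, and the martingale argument for~(i) are all correct; the constant $\bm r_0=2/3-\bm Q(\mu_0)$ and the time $T_0=1/6-\bm r_0=\bm Q(\mu_0)-1/2$ match the theorem. You stop at the deterministic time $T_0$ (when the circle of radius $\sqrt{\bm r}$ first touches $\partial\bm\Delta^3$), whereas the paper continues to the random first-exit time $\tau\geq T_0$ of $\bm\Delta^3_+$; both stopped processes live in $\bm\Delta^3$ and satisfy (i)--(ii), so the difference is immaterial. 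The ``cosmetic'' concern you flag at the end is exactly the technical content of Proposition~\ref{P:160514.1}, and your dismissal of it is justified: one simply checks, via Lévy's theorem, that $\widetilde W(t):=\int_0^{t\wedge T_0}\sqrt{\bm r(\mu(r))}\,\dx W(u(r))$ is a Brownian motion on $[0,T_0]$ (its quadratic variation is $\int_0^{t\wedge T_0}\bm r(\mu(r))\,u'(r)\,\dx r=t\wedge T_0$) and that $\mu(\cdot)$ solves the slowed-down SDE driven by $\widetilde W$. The trade-off is that your time-change argument is shorter and avoids any well-posedness work, but it leans on the availability of an explicit globally defined ``fast'' solution $v(\cdot)$ — which does not carry over to the general Lyapunov setting of Theorem~\ref{thm: No_Arb_Suff_Short}, where only the direct SDE construction is available.
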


We prove   this theorem at the end of the subsection, after a remark and a preliminary result.

\begin{rem}[An open question] \label{R:160719.1}
Suppose that the condition in \eqref{eq: 4.3} is satisfied by a market model with relative weight process $ \mu (\cdot)$, for the quadratic function $\bm Q$
with $\eta = 1$. 
	Theorem~\ref{T:4.3} yields   then the  existence of a strong relative arbitrage with respect to this market, over any time horizon $[0,T]$ with $T >\bm  Q(\mu(0))$.  
	On the other hand, Theorem~\ref{T:160517.1} shows that, for time horizons $[0,T]$ with $T \leq \bm Q(\mu(0)) - 1/2$, there exist  market models with respect to which no relative arbitrage is possible, even if \eqref{eq: 4.3} holds for them. 
	 We do not know what happens for time horizons $[0,T]$ with $T \in \big(\bm Q(\mu(0)) - 1/2,\bm Q(\mu(0)\big]$. We conjecture that relative arbitrage is possible for those time horizons, but that it need not be strong.
\end{rem}

For the next result, we recall the function ${\bm r}$ from \eqref{160512.4}.

\begin{prop}[Time-changed, slowed-down  version of \eqref{eq:160512.4}--\eqref{eq:160512.6}]  
\label{P:160514.1}
Assume that the filtered probability space $(\Omega, \sigalg{F},   \P)$, $\filt{F} = (\sigalg{F} (t) )_{t \geq 0 }$ supports a Brownian motion $W(\cdot)$.  
Then, for any initial condition $w(0) \in \mathbb{H}^3$ with $w(0) \neq(1/3,1/3,1/3)',$ the following system of stochastic differential equations has a pathwise unique strong solution $w(\cdot)$, taking values in $\,\mathbb H^3:$  
 \begin{align} 
	\dx w_1(t) &= \frac{1}{\sqrt{3   {\bm r}(w(t))}}  \big(w_2(t) - w_3(t)\big)\, \dx W(t);  \label{eq:160512.1}\\
	\dx w_2(t) &= \frac{1}{\sqrt{3   {\bm r}(w(t))}}  \big(w_3(t) -w_1(t)\big)\, \dx W(t);  \label{eq:160512.2}\\
	\dx w_3(t) &= \frac{1} {\sqrt{3  {\bm r}(w(t))}}   \big(w_1(t) -w_2(t)\big)\, \dx W(t).  \label{eq:160512.3}
\end{align}
\end{prop}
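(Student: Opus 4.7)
The plan exploits a remarkable feature of this singular system: along any solution $\tilde w(\cdot)$, the quantity $\bm r(\tilde w(\cdot))$ evolves deterministically at unit rate, which prevents the factor $1/\sqrt{\bm r(w)}$ from ever blowing up. On $\mathbb H^3 \setminus \{(1/3,1/3,1/3)'\}$ the function $\bm r$ is a strictly positive smooth polynomial (Lemma~\ref{L:160509.1}), so the coefficients $(w_j - w_k)/\sqrt{3 \bm r(w)}$ are smooth, hence locally Lipschitz, on this open set; the bound $\bm r(w) \geq (w_j - w_k)^2/3$ read off from \eqref{160512.4} moreover shows that each coefficient is bounded by $1$ in absolute value.

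The central a priori computation applies It\^o's formula to $\bm r(\tilde w(\cdot))$ for a candidate solution. Using Lemma~\ref{L:160509.1} to write $\bm r(x) = \sum_{i=1}^3 (x_i - 1/3)^2$, the martingale part involves the cyclic combination $\sum_i (\tilde w_i - 1/3)(\tilde w_j - \tilde w_k)$, which vanishes identically; meanwhile the quadratic-variation contribution reduces to
$$\frac{1}{3 \bm r(\tilde w)} \bigl[(\tilde w_1 - \tilde w_2)^2 + (\tilde w_2 - \tilde w_3)^2 + (\tilde w_1 - \tilde w_3)^2\bigr]\, dt = dt.$$
Hence $\bm r(\tilde w(t)) = \bm r(w(0)) + t$ along any solution, which stays uniformly bounded away from zero. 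The analogous cyclic cancellation $\sum_i(w_j - w_k) = 0$ gives $d(\tilde w_1 + \tilde w_2 + \tilde w_3) = 0$, so any solution remains in $\mathbb H^3$.

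With this identity in hand, existence and pathwise uniqueness follow by a standard truncation scheme. For each integer $n$ with $\bm r(w(0)) > 1/n$, replace $\bm r$ by $\bm r_n(w) := \max(\bm r(w), 1/n)$ in the denominators; the modified coefficients are locally Lipschitz and uniformly bounded on all of $\mathbb H^3$, so the truncated SDE admits a pathwise unique strong solution $w^n(\cdot)$ globally by classical theory. Repeating the It\^o computation gives $d\bm r(w^n)/dt = \bm r(w^n)/\bm r_n(w^n)$, which equals $1$ whenever $\bm r(w^n) \geq 1/n$; a continuity argument then forces $\bm r(w^n(t)) = \bm r(w(0)) + t \geq 1/n$ for every $t \geq 0$, so $w^n$ in fact solves the original untruncated system \eqref{eq:160512.1}--\eqref{eq:160512.3}. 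Pathwise uniqueness for the original system follows at once: any strong solution $\tilde w$ must, by the a priori identity, satisfy $\bm r(\tilde w(t)) > 1/n$ for $n$ sufficiently large, so it solves the same truncated SDE as $w^n$ and therefore coincides with it.

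The main delicate point I anticipate is the \emph{no-exit} step: rigorously combining the a priori evolution $d\bm r(w^n)/dt = \bm r(w^n)/\bm r_n(w^n)$ with a continuity bootstrap to confirm that the truncated solution never wanders into the region where the truncation is actually active. Once this is cleanly established, everything else reduces to classical locally-Lipschitz SDE theory and the cyclic algebraic identities already exhibited above.
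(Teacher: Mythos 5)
Your proof is correct and follows essentially the same strategy as the paper's: establish via It\^o that $\bm r(w(t)) = \bm r(w(0)) + t$ is deterministic along any solution, truncate the singular denominator to obtain a Lipschitz system (your $\max(\bm r,1/n)$ plays exactly the role of the paper's $\varepsilon \vee 3\bm r$), and then transfer existence and uniqueness from the truncated system by showing that the truncated solution never enters the region where the cutoff is active. The "no-exit" step you flag as delicate is handled the same way in the paper — by the observation that the stopped/truncated version of $\bm r(w(\cdot))$ increases at unit rate whenever it starts above the threshold — so it does not pose an additional difficulty beyond what you have already written.
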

\begin{proof}
Let $w(\cdot)$ denote any solution to the system \eqref{eq:160512.1}--\eqref{eq:160512.3} with  $w(0) \in \mathbb{H}^3 \setminus \{(1/3,1/3,1/3)'\}$. Then it is clear that  $w(t) \in \mathbb H^3$ holds   for all $t \ge 0$. Next, we define the stopping time 
\begin{align*}
	\sigma = \inf \left\{t \geq 0: {\bm r}(w(t)) < \frac{{\bm r}(w(0))}{2}\right\} 
\end{align*}
and note as in Lemma~\ref{L:160509.1}, via an application of It\^o's formula,  that
\begin{align*}
	\dx {\bm r} (w(\sigma \wedge t)) =  \1_{\{\sigma>t\}} \,\dx t, \qquad t \geq 0. 
\end{align*}
Hence, if $\sigma>0$,     the process ${\bm r}(w(\cdot))$ is nondecreasing and deterministic; indeed,  we then have $ {\bm r}(w(t)) = {\bm r}(w(0)) + t$ for all $t \geq 0$, and $\sigma = \infty$. For this reason, given any $\varepsilon \in (0, 3 {\bm r}(w(0))$, any solution to the system \eqref{eq:160512.1}--\eqref{eq:160512.3} solves also the system
\begin{align} 
	\dx w^\varepsilon_1(t) &= \frac{1}{\sqrt{\varepsilon \vee 3 {\bm r}(w^\varepsilon(t))}}  \big(w^\varepsilon_2(t) - w^\varepsilon_3(t)\big) \, \dx W(t),   \label{eq:160512.1'}\\
	\dx w^\varepsilon_2(t) &= \frac{1}{\sqrt{\varepsilon \vee 3 {\bm r}(w^\varepsilon(t))}}  \big(w^\varepsilon_3(t) - w^\varepsilon_1(t)\big) \, \dx W(t),  \label{eq:160512.2'}\\
	\dx w^\varepsilon_3(t) &= \frac{1} {\sqrt{\varepsilon \vee 3 {\bm r}(w^\varepsilon(t))}} \big (w^\varepsilon_1(t) - w^\varepsilon_2(t)\big)\, \dx W(t).  \label{eq:160512.3'}
\end{align}
Since the system \eqref{eq:160512.1'}--\eqref{eq:160512.3'} has Lipschitz-continuous coefficients, its solution is unique. This yields uniqueness of the solution to the system \eqref{eq:160512.1}--\eqref{eq:160512.3}. Existence of a solution to the system \eqref{eq:160512.1}--\eqref{eq:160512.3} follows by checking that any solution to \eqref{eq:160512.1'}--\eqref{eq:160512.3'}  is also a solution to  \eqref{eq:160512.1}--\eqref{eq:160512.3}.
\end{proof}

\begin{proof}[Proof of Theorem~\ref{T:160517.1}]
We proceed exactly as in the proof of Proposition~\ref{P:160712.1}. With $w(0)  = \mu_0$, we recall the process $w(\cdot)$ of Proposition~\ref{P:160514.1}, define the stopping time 
\begin{align*}
		\tau   :=   \inf \big\{t \geq 0: w(t) \notin \bm \Delta^3_+ \big\},
	\end{align*}
	 	 and set $\mu(\cdot):= w (\cdot \wedge \tau)$. 
		 This process $\mu(\cdot)$ is a vector of continuous martingales; hence, no relative arbitrage is possible with respect to this market, over {\it any} given time horizon.
Finally, we note that $\Gamma^{\bm Q}(t) = t \wedge \tau $ holds for all $t \geq 0$, and that $ \bm Q(\mu_0)- \tau =\bm Q(\mu(\tau)) \leq 1/2$.  This then yields the statement.
\end{proof}

\subsection{Modifications}   \label{SS:6.3} 

We provide here modifications of the examples in the previous subsections. More precisely, let  us recall the first system of stochastic differential equations \eqref{eq:160512.4}--\eqref{eq:160512.6} along with the solution given in \eqref{eq:160712.3}. That is, let us consider, for some $\delta \in (0,1/3)$ and $u\in \R$, the corresponding market model 
\begin{align*} 
		\mu_i(t) = \frac{1}{3} + \delta  e^{t/2} \cos \left(W(t) + 2 \pi \left( u   + \frac{i-1}{3}  \right) \right), \qquad i  = 1,2,3,\,\, t \in [0, -2 \log(3\delta)],
	\end{align*}
	where $W(\cdot)$ is Brownian motion and $u$ a real number. 
 
The first modification perturbs this model radially in the plane of ${\bm \Delta}^3,$ so that the resulting new model has a covariation matrix-valued process   $\alpha(\cdot)$ as in \eqref{eq:160524.1}  with two strictly positive eigenvalues. 

\begin{prop}[Nondegeneracy and absence of arbitrage]
\label{P:160712.2}
Assume that the filtered probability space $(\Omega, \sigalg{F},   \P)$, $\filt{F} = (\sigalg{F} (t) )_{t \geq 0 }$  supports two independent Brownian motions $W(\cdot)$ and $B(\cdot)$.  
Then there exist a real number $T^*>0$ and an It\^o diffusion  $\mu (\cdot)= ( \mu_1 (\cdot  ), \mu_2 (\cdot  ), \mu_3 (\cdot  ))',$ with the following properties:
\begin{enumerate}[label={\rm(\roman{enumi})}, ref={\rm(\roman{enumi})}]
\item  No relative arbitrage with respect to the market $\mu(\cdot)$  exists,      over any time horizon $[0,T]$ with   $T > 0$. 
\item\label{P:160712.2:2}  The condition of \eqref{eq: 4.3mod} is satisfied with  $\bm G=  \bm Q$, given in \eqref{eq:Q}, 
  $\eta = \bm r(\mu(0))/4$,  and  $T=T^*$. 
\item  The  covariation matrix of $\mu (\cdot)$ has two strictly positive eigenvalues on $[0,T^*];$ that is, for each $t \in [0,T^*]$ the matrix $\alpha(t)$ of \eqref{eq:160524.1} has two strictly positive eigenvalues.
\end{enumerate}
\end{prop}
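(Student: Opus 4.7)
The plan is to extend the expanding-circle construction of Proposition~\ref{P:160712.1} by superimposing on it a radial perturbation driven by the independent Brownian motion $B(\cdot)$, thereby lifting the covariation matrix from rank one to rank two while preserving the martingale property. Concretely, I would consider the pathwise unique strong solution of
\begin{equation*}
d\mu_i(t) = \frac{1}{\sqrt 3}\bigl(\mu_{i+1}(t) - \mu_{i-1}(t)\bigr)\,dW(t) + \varepsilon\,\bigl(\mu_i(t) - \tfrac{1}{3}\bigr)\,dB(t), \qquad i = 1,2,3
\end{equation*}
(indices modulo $3$) for a small constant $\varepsilon > 0$ and starting point $\mu(0) \in \bm\Delta^3_+$ with $\bm r(\mu(0)) < 1/6$, stopped at the first exit time $\tau$ from $\bm\Delta^3_+$. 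The coefficient sums in $i$ vanish for both drivers, so $\sum_i \mu_i \equiv 1$; each $\mu_i$ is then a bounded local, hence true, $\P$-martingale, which yields property~(i) via Remark~\ref{R:4.2}.

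Using the identity $\sum_i(\mu_i - 1/3)(\mu_{i+1} - \mu_{i-1}) = 0$ from the proof of Lemma~\ref{L:160509.1}, the covariation decomposes as
\begin{equation*}
\frac{d\langle\mu_i,\mu_j\rangle}{dt} = \tfrac{1}{3}(\mu_{i+1} - \mu_{i-1})(\mu_{j+1} - \mu_{j-1}) + \varepsilon^2(\mu_i - \tfrac{1}{3})(\mu_j - \tfrac{1}{3}),
\end{equation*}
i.e., as the sum of two rank-one matrices along the mutually orthogonal tangential and radial directions in the $2$D plane $\mathbb H^3 - (1/3)\mathbf{1}$. Whenever $\bm r(\mu(t)) > 0$ both summands are nonzero, so $\alpha(t)$ has two strictly positive eigenvalues, giving property~(iii). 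Summing the diagonal yields $d\Gamma^{\bm Q}/dt = (1+\varepsilon^2)\bm r(\mu(t))$, and It\^o's formula together with the dynamics $d\bm r = 2\varepsilon\bm r\,dB + (1+\varepsilon^2)\bm r\,dt$ produces the explicit expression $\bm r(\mu(t)) = \bm r(\mu(0))\exp\bigl(2\varepsilon B(t) + (1-\varepsilon^2)t\bigr)$ on $\lc 0, \tau\rc$.

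Property~(ii) is the crux: the required bound $(1+\varepsilon^2)\bm r(\mu(t)) \geq \bm r(\mu(0))/4$ on $[0, T^*]$ translates into the pathwise inequality $2\varepsilon B(t) + (1-\varepsilon^2)t \geq -\log\bigl(4(1+\varepsilon^2)\bigr)$, which cannot hold almost surely for ordinary Brownian paths. I would therefore also stop at the first time $\sigma_K$ at which $B$ leaves $(-K, K)$, choosing $\varepsilon$ and $K$ so that $2\varepsilon K \leq \log 2$ --- under which the above pathwise inequality holds automatically --- and then pick a deterministic $T^*>0$ so small that the noexit condition $\bm r(\mu(t)) < 1/6$ on $[0, T^*]$ is likewise forced on $[0, \sigma_K \wedge T^*]$. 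Beyond $\sigma_K$ I would concatenate, in the spirit of Remark~\ref{R:2.4}, with a continuation governed by the time-changed dispersion of Proposition~\ref{P:160514.1} augmented by its own small independent radial perturbation; there the tangential dispersion alone contributes $1$ to $d\Gamma^{\bm Q}/dt$, easily dominating $\eta = \bm r(\mu(0))/4 < 1$, while the small radial augmentation maintains rank-two covariation.

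I expect the main obstacle to be ensuring that this concatenation produces a bona fide It\^o diffusion satisfying (i)--(iii) on a \emph{deterministic} $T^*$: the martingale property is inherited piece by piece, but one must verify carefully that the second regime does not exit $\bm\Delta^3_+$ before $T^*$ almost surely --- an iterated stopping-and-concatenation argument will likely be required, exploiting that the time-changed tangential dispersion gives $d\bm r \geq dt$ and hence a deterministic lower bound on the exit time once the stochastic fluctuations of $\bm r$ coming from each radial perturbation are kept pathwise small by repeated truncation of the driving Brownian motions.
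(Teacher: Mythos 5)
Your first regime is correctly computed: the tangential and radial dispersions are indeed orthogonal in $\mathbb H^3$, the process is a bounded local (hence true) martingale before exiting $\bm\Delta^3_+$, $d\Gamma^{\bm Q}/dt = (1+\varepsilon^2)\bm r(\mu(t))$, and $\bm r(\mu(t)) = \bm r(\mu(0))\exp\bigl(2\varepsilon B(t) + (1-\varepsilon^2)t\bigr)$. You also correctly identify that the unboundedness of $B$ destroys the pathwise lower bound needed for~(ii). But the workaround you propose -- truncating $B$ at $\sigma_K$ and concatenating with a slowed-down regime, then iterating -- is a genuine gap, not just a technicality. The problem is that $\sigma_K$ is a bona fide stopping time that can be arbitrarily small with positive probability; the ``second regime'' you graft on also carries an unbounded radial noise $B''$, so the same difficulty recurs, leading to an unresolved iterated concatenation. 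You acknowledge this, but the argument is not carried through, and it is not clear the iteration terminates or produces a well-defined It\^o process satisfying (i)--(iii) on a \emph{deterministic} $[0,T^*]$.

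There is also a conceptual mismatch you should notice. In your first regime, the two nonzero eigenvalues of the normalized covariation matrix $\alpha(\cdot)$ are the \emph{constants} $1/(1+\varepsilon^2)$ and $\varepsilon^2/(1+\varepsilon^2)$, hence bounded away from zero. But the entire purpose of Proposition~\ref{P:160712.2} (compare Remark~\ref{R:P:160712.2}) is to exhibit a martingale market in which the second-largest eigenvalue of $\alpha(\cdot)$ is strictly positive yet \emph{not} bounded away from zero -- this is precisely what keeps the example from colliding with Theorem~\ref{T:5.2}. Your construction is pushing toward the regime where Theorem~\ref{T:5.2} would bite, which is exactly why the stopping time $\sigma_K$ cannot be pushed out to a deterministic horizon.

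The paper sidesteps all of this by making the radial noise bounded from the start: it takes a martingale diffusion $\Phi(\cdot) = 2\delta + \Psi(\cdot)$ driven by $B(\cdot)$ and \emph{confined} to the interval $(\delta, 3\delta)$, and then defines $\mu_i(t) = \tfrac{1}{3} + \Phi(t\wedge T^*)\, e^{(t\wedge T^*)/2}\cos\bigl(W(t\wedge T^*) + \tfrac{2\pi}{3}(i-1)\bigr)$ with $T^* = -2\log(9\delta)$. Because $\Phi$ never exceeds $3\delta$, the radius $\Phi(t)e^{t/2}$ stays below $1/3$ on $[0,T^*]$, so the process stays in $\bm\Delta^3_+$ up to the \emph{deterministic} time $T^*$ with no truncation and no concatenation; the product of the two independent martingales is a martingale, giving (i) immediately; the slope bound (ii) follows from $\Phi \geq \delta$; and the second eigenvalue of $\alpha(\cdot)$ is proportional to $(\Phi-\delta)^2(\Phi+\delta)^2$, which is strictly positive but tends to zero as $\Phi \downarrow \delta$, giving (iii) in exactly the form needed to be consistent with Theorem~\ref{T:5.2}. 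The key idea you are missing is thus to make the radial perturbation \emph{multiplicative and driven by a compactly-supported martingale}, rather than additive and driven by Brownian motion itself.
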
 

\begin{proof}
To describe the model, let us fix a real constant  $\delta \in (0, 1/9)$. We consider also  a martingale $\Phi (\cdot) $ of the filtration generated by the Brownian motion $B(\cdot)$ with values in the interval $(\delta, 3\delta)$ and starting point  $\Phi(0) = 2\delta$. More precisely, we introduce the It\^o diffusion
 $$
 \Psi (\cdot) := \int_0^\cdot \big(\Psi (t) - \delta \big)  \big(\Psi (t) + \delta \big)\, \dx B(t)
 $$ 
 with state space  $(-\delta, \delta);$
 we note that $\langle \Psi\rangle(\cdot)$ is strictly increasing thanks to Feller's test of explosions; and  define the martingale $\Phi (\cdot) := 2 \delta + \Psi (\cdot)$  with   takes values in the interval $(\delta, 3\delta)$.

We now define $T^* :=  -2 \log(9\delta)>0$ and the market weights as positive It\^o processes 
\begin{equation} \label{eq:160804}
\mu_i (t) :=  \frac{1}{3} + \Phi ( t \wedge T^*) e^{(t \wedge T^*)/2}  \cos \Big(  W (t \wedge T^*)  + \frac{2 \pi}{3}   \big(i -1\big) \Big), \qquad  i  = 1,2,3,\,\, t \geq 0.
\end{equation}
Since $\Phi(\cdot)$ and $W(\cdot)$ are independent, $\mu_i(\cdot)$ is a martingale for each $i = 1, \cdots, 3$. Indeed, we have the dynamics
 $$
 \dx \mu_i (t) = - \Phi ( t)  e^{t/2} \sin   \Big(  W (t)  + \frac{ 2 \pi}{3}  \big(i -1\big) \Big) \, \dx W (t) + e^{ t/2} \cos   \Big(  W (t)  + \frac{2 \pi}{3}  \big(i -1\big) \Big) \, \dx \Phi (t)  
 $$
 for all $i  = 1,2,3$ and $t \in [0, T^*]$.
As a result, no relative arbitrage can exist with respect to the market. Moreover, we note 
\begin{align*}
	\langle \mu_i \rangle (\cdot) \geq \int_0^{\cdot \wedge T^*} \Phi^2 ( t)  e^t \sin^2   \Big(  W (t)  + \frac{ 2 \pi}{3}  \big(i -1\big) \Big) \dx t \geq \frac{\delta^2}{4}    \int_0^{\cdot \wedge T^*}  e^t \sin^2   \Big(  W (t)  + \frac{ 2 \pi}{3}  \big(i -1\big) \Big) \dx t.
\end{align*}
Recall that $\Gamma^{\bm Q}(\cdot) = \sum_{i = 1}^3 \langle	\mu_i \rangle (\cdot)$; hence we get
\begin{align*}
	\frac{\partial}{\partial t} \Gamma^{\bm Q}(t) \geq \frac{\delta^2}{4} \sum_{i=1}^3
		  \sin^2   \Big(  W (t)  + \frac{ 2 \pi}{3}  \big(i -1\big) \Big)  = \frac{{\bm r}(\mu(0))}{4}, \qquad t \in [0, T^*].
\end{align*}
yielding \ref{P:160712.2:2}. Here, the last equality follows from the same type of computations as the ones in Remark~\ref{R:160725.1}.

Of the two Brownian motions that drive this market model, $W (\cdot)$ generates circular motion in the plane of ${\bm \Delta}^3$ about  the point $(1/3, 1/3, 1/3)'$; while $B(\cdot) $  generates the martingale $\Phi (\cdot)$, whose quadratic variation has strictly positive time-derivative. Thus these two independent random motions span the two-dimensional space, and the covariation  process $\alpha (\cdot)$ of the market weight process $\mu (\cdot)$ has rank   $2$. 
\end{proof}

\begin{rem}[Contrasting Theorem~\ref{T:5.2} to Proposition~\ref{P:160712.2}]  \label{R:P:160712.2}
Theorem~\ref{T:5.2} yields the existence of short-term arbitrage, if the $(d-1)$ largest eigenvalues of the covariance matrix $\alpha(\cdot)$ are  uniformly bounded away from zero.  According to  Proposition~\ref{P:160712.2}      the weaker requirement, that  all these   $(d-1)$ largest eigenvalues be strictly positive, is {\it not} sufficient to guarantee short-term relative arbitrage.
Indeed, the slope of $\langle \Psi\rangle(\cdot)$ in the proof of Proposition~\ref{P:160712.2}  can get arbitrarily close to zero over any given time horizon, with positive probability. This prevents the second-largest eigenvalue of the covariance matrix $\alpha(\cdot)$ from being uniformly bounded from below, away from zero.   
 \end{rem} 
     
\begin{rem}[Spiral expansion]  \label{R:P:010816}
 In the market model of \eqref{eq:160804}, the relative weights take values on a circle that is allowed to expand at the rate $\Phi (t) e^{t/2}$;  
 more precisely, we have   
$$
\mu_1^2 (t) + \mu_2^2 (t) +\mu_3^2 (t)  = \frac{1}{3} + \frac{3 \Phi^2 (t) e^{t}}{2}  < \frac{1}{3}, \qquad t \in [0,T^*].
$$
Therefore, at any given time $ t \in [0,T^*]$  the vector $ \mu ( t) $ of relative market weights lies on the intersection of the hyperplane $\mathbb{H}^3$ with the sphere of radius  $  \sqrt{(1/3) + (3/2)  \Phi^2 (t) e^t }$ centered at the origin. This intersection is a circle of radius   $  \sqrt{   3/2    } \, \Phi (t) \,e^{t/2}< \sqrt{1/6}\,,$     contained in ${\bm \Delta}^3_+$ and centered at its node $(1/3, 1/3,1/3)'$; see also Remark~\ref{K:160813.2}.  Thus, a more precise description of the current situation might be that  the market weights live on a spiral. The rate of this ``spiral expansion" is exactly   that one, for which the market weights become  martingales. 
  \end{rem} 
 
 This remark raises the following question: {\it What happens if   the market weights are confined to a stationary circle in  ${\bm \Delta}^3$?} Such a diffusion, confined to a circle, is incompatible with a martingale structure.  This is the subject of the   example that follows.

 \begin{example}[Immediate arbitrage]  \label{Ex:immediate}
 Let $W (\cdot)$ denote a  scalar Brownian motion,  fix a real constant $\delta \in (0, 1/3)$, and    define the positive  market weight processes
 \begin{align*}
 \mu_i (\cdot) :=   \frac{1}{3} + \delta \cos \Big(  W (\cdot)  + \frac{2 \pi}{3}  \big(i -1\big) \Big), \qquad i = 1, 2, 3.
 \end{align*}
These   market weights 
take values in the interval $(0, 2/3)$; in fact they live on a circle, namely 
$$
\mu_1^2 (t) + \mu_2^2 (t) +\mu_3^2 (t)= \frac{1}{3} + \frac{3  \delta^2}{2}  < \frac{1}{2},\qquad \text{so} \qquad {\bm r}\big(\mu(\cdot)\big)\equiv  \frac{3 \delta^2}{2} 
$$
in the notation of \eqref{160512.4},  and have dynamics
 $$
 \dx \mu_i (t) =  - \delta \sin   \Big(  W (t)  + \frac{2 \pi}{3}  \big(i -1\big) \Big) \dx W (t) - \frac{\delta}{2}  \cos \Big(  W (t)  + \frac{ 2 \pi}{3}   \big(i -1\big) \Big)  \dx t, \qquad t \geq 0.
 $$
 As above,   we have
 \begin{align*}
 	\Gamma^{\bm Q}(t) = \int_0^t {\bm r}\big(\mu(s)\big) \,\dx s =   {\bm r}\big(\mu(0)\big) t =  \frac{3  \delta^2 t}{2},  \qquad t \geq 0.
 \end{align*}
 
Let us introduce now the normalized quadratic function ${\bm Q}^\star:= {\bm Q} /  \bm Q \big( \mu (0) \big)$,  where   $\bm Q \big( \mu (0) \big) = (2/3) - \big( 3 \delta^2 / 2\big) >0$.  For the   trading strategy ${\bm \varphi}^{{\bm Q}^\star}( \cdot)$ generated additively by this function  ${\bm Q}^\star$ as in \eqref{eq:3.2}, and the associated wealth process $V^{{\bm \varphi}^{{\bm Q}^\star}}(\cdot)$ of  \eqref{eq: valphiG}, we   get 
\begin{equation*}
V^{{\bm \varphi}^{{\bm Q}^\star}} (t ) = {\bm Q}^\star \big( \mu (t) \big) + \Gamma^{{\bm Q}^\star} (t)
= 1 + \frac{3 \delta^2}{2 \bm Q  ( \mu (0)  )} \,t, \qquad t \geq 0.
\end{equation*}
Hence, the additively-generated strategy ${\bm \varphi}^{{\bm Q}^\star}( \cdot)$ yields a strong relative arbitrage over any given time horizon $[0,T]$.  Investing according to this  strategy  ${\bm \varphi}^{{\bm Q}^\star}( \cdot)$  is a sure way   to do better than    the market right away, and to keep doing better and better as time goes on.

Indeed, the existence of such ``egregious", or ``immediate", arbitrage, should not come here as a surprise,  since the so called ``structure equation'' is not satisfied;   
in particular, no deflator as in Definition~\ref{D:deflator} exists for $\mu(\cdot)$; see, for example,  \cite{Schweizer_1992}, or Theorem~1.4.2 in \cite{KS2}.
 \qed
 \end{example}

\begin{rem}[General submanifolds]\label{rem: 6.3} The diffusion constructed In Example~\ref{Ex:immediate}  lives on a submanifold of $\R^3$, which is incompatible with a martingale structure. By contrast, in Subsection~\ref{SS:6.1} the submanifold was allowed to evolve as an expanding circle.
 The diffusions in these examples could probably be generalized to   diffusions with support on an arbitrary submanifold of $\R^d$, for $d\ge 2$, and then the submanifold could  be allowed to evolve through $\R^d$ in the manner of our expanding circle in  $\R^3$. 
 
 In this case,  a natural question would be to characterize the evolution that would cause the diffusion on the evolving submanifold to become a martingale. How would this martingale-compatible evolution depend on the diffusion? What would become of this evolving submanifold over time? Would the evolving submanifold develop singularities? Et cetera. 
 We provide some partial answers to such questions in the following Subsection~\ref{SS:6.4}, but the picture that emerges  seems to be far from complete.
\end{rem}

\subsection{General Lyapunov functions} 
\label{SS:6.4}

We   extend now the construction of market models, for which  \eqref{eq: 4.3mod} is satisfied  but no short-term arbitrage is possible, to a general class of generating functions. Throughout this section we fix a  Lyapunov function $\bm G: \bm \Delta^3 \to [0, \infty)$; this function is assumed to be strictly concave, and twice continuously differentiable in a neighborhood of $\bm \Delta^3_+$. We    assume moreover that  its Hessian $D^2{\bm G}$ is locally Lipschitz continuous. Next, we introduce  the nonnegative   number 
\begin{align*}
	\mathfrak{ g} := \sup_{x \in \bm \Delta^3 \setminus \bm \Delta^3_+}\bm G(x).
\end{align*}
If  $\bm G$  attains an interior local (hence also global) maximum at a point $c \in \bm \Delta^3_+$, we call this  $c$ the ``navel'', or ``umbilical point", of the function $\bm G$.  

\begin{thm}[General Lyapunov functions, lack of short-term relative arbitrage]
\label{thm: No_Arb_Suff_Short}
 Assume that the filtered probability space $(\Omega, \sigalg{F},   \P)$, $\filt{F} = (\sigalg{F} (t) )_{t \geq 0 }$ supports a Brownian motion $W(\cdot)$.     Suppose also that we are given a Lyapunov function $\bm G$ with the properties and notation just stated, along with a vector  $\mu_0 \in {\bm \Delta}^3_+ $ such that $\bm G(\mu_0) \in \big(\mathfrak g,  \max_{x \in \bm \Delta^3}\bm G(x)\big)$. 
Then there exists an {It\^o} diffusion  $\mu (\cdot) = ( \mu_1 (\cdot), \mu_2(\cdot), \mu_3 (\cdot) )'$ with starting point $\mu (0) =\mu_0$, values in ${\bm \Delta}^3$, and the following properties:
\begin{enumerate}[label={\rm(\roman{enumi})}, ref={\rm(\roman{enumi})}]
\item No relative  arbitrage is  possible with respect to $\mu(\cdot)$  over any time horizon $[0,T]$  with  $T > 0$.
\item The condition of \eqref{eq: 4.3mod} is satisfied with $\eta = 1$ and $T = \bm G(\mu_0)-\mathfrak{ g}$.    
\end{enumerate}
Moreover, we have
$$
{\bm G}\big(\mu (t) \big) = \bm G(\mu_0) -  t; \qquad \Gamma^{\bm G} (t) = t  \qquad t \in [0,T].
$$
\end{thm}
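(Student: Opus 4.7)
The idea is to construct $\mu(\cdot)$ as a continuous bounded martingale whose sample paths trace out the level sets of $\bm G$ inside $\bm \Delta^3_+$, moving at unit speed with respect to $\Gamma^{\bm G}$. The martingale property yields (i) immediately via Remark~\ref{R:4.2}, and the deterministic identity $\bm G(\mu(t)) = \bm G(\mu_0) - t$ will give the last displayed equality, the formula $\Gamma^{\bm G}(t)=t$ via \eqref{eq: Gamma}, and hence (ii).

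Write $\mathfrak e := (1,1,1)^\prime$ and let $\mathcal T := \{y \in \R^3 : \langle \mathfrak e, y\rangle = 0\}$ be the tangent space of $\mathbb H^3$. Exploiting $d=3$, set
\[
v(x) := \mathfrak e \times D\bm G(x), \qquad x \in \bm \Delta^3_+,
\]
so that automatically $v(x) \in \mathcal T$ and $v(x) \perp D\bm G(x)$. The hypothesis $\bm G(\mu_0) > \mathfrak g$ forces the maximum of $\bm G$ on $\bm \Delta^3$ to be attained in the interior, hence by strict concavity at a unique navel $c \in \bm \Delta^3_+$; moreover $v(x) = 0$ exactly at $c$, and $v(x)^\top D^2\bm G(x)\,v(x) < 0$ on $\bm \Delta^3_+ \setminus \{c\}$. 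Define
\[
\sigma(x) := \sqrt{\frac{-2}{v(x)^\top D^2\bm G(x)\,v(x)}}, \qquad x \in \bm \Delta^3_+ \setminus \{c\},
\]
and consider the SDE
\[
\dx \mu(t) = \sigma(\mu(t))\, v(\mu(t))\, \dx W(t), \qquad \mu(0) = \mu_0.
\]
Since $D^2\bm G$ is locally Lipschitz, $\sigma v$ is locally Lipschitz on $\bm \Delta^3_+ \setminus \{c\}$, so standard SDE theory produces a pathwise unique strong solution up to exit from any compact subset of this set.

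Two applications of It\^o's formula then do all the work. First, $\langle \mathfrak e, v(\mu)\rangle = 0$ preserves the constraint $\sum_i \mu_i(t) = 1$. Second, applying It\^o to $\bm G(\mu(\cdot))$: the martingale piece $\sigma(\mu)\langle D\bm G(\mu), v(\mu)\rangle \dx W$ vanishes because $v \perp D\bm G$, and the drift reduces to
\[
\tfrac{1}{2}\,\sigma(\mu)^2\, v(\mu)^\top D^2\bm G(\mu)\,v(\mu)\, \dx t \;=\; -\,\dx t
\]
by the very choice of $\sigma$; hence $\bm G(\mu(t)) = \bm G(\mu_0) - t$, confining $\mu(t)$ to $\bm G^{-1}(\bm G(\mu_0)-t) \cap \bm \Delta^3_+$ while it is alive. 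Setting $\tau := \inf\{t \geq 0 : \mu(t) \notin \bm \Delta^3_+\}$ and freezing $\mu$ at $\mu(\tau)$ thereafter gives a bounded continuous local (hence true) martingale with $\tau \geq T = \bm G(\mu_0) - \mathfrak g$. Finally, the representation \eqref{eq: Gamma} combined with the drift computation above yields
\[
\Gamma^{\bm G}(t) \;=\; -\tfrac{1}{2} \int_0^t \sigma(\mu)^2\, v(\mu)^\top D^2\bm G(\mu)\, v(\mu)\, \dx s \;=\; t, \qquad t \in [0, T],
\]
which establishes (ii) with $\eta = 1$ and the final \emph{moreover} statement.

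The main technical obstacle is the localization needed to ensure the solution survives all the way to time $T$ without blowing up or leaving $\bm \Delta^3_+$, since $\sigma$ is only locally Lipschitz and degenerates at the navel. The plan is: for each $\varepsilon \in (0, T)$, work on an open neighborhood of the compact \emph{band} $\{x \in \bm \Delta^3 : \mathfrak g + \varepsilon \leq \bm G(x) \leq \bm G(\mu_0)\}$, which by strict concavity stays bounded away from both $\partial \bm \Delta^3$ and $c$; the SDE has Lipschitz coefficients there, and the deterministic identity $\bm G(\mu(t)) = \bm G(\mu_0) - t$ (valid while the solution is alive) prevents exit from the band before time $T - \varepsilon$. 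Patching these local solutions and letting $\varepsilon \downarrow 0$ yields a solution on $[0, T)$, which boundedness extends to $[0, T]$.
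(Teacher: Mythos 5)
Your construction is the same as the paper's, modulo notation: the paper's vector field $\sigma = (D_3\bm G - D_2 \bm G,\, D_1 \bm G - D_3\bm G,\, D_2\bm G - D_1\bm G)'$ is precisely your $v = \mathfrak e \times D\bm G$, and the paper's normalization $1/\sqrt{L(x)}$ with $L(x) = -\tfrac12 \sigma'(x) D^2\bm G(x)\sigma(x)$ is precisely your $\sigma(x)$, so the two SDEs coincide. The cross-product notation is a nice way to make the tangency $\langle\mathfrak e, v\rangle = 0$ and the orthogonality $\langle D\bm G, v\rangle = 0$ transparent by antisymmetry — the paper instead verifies the identities $\sum_i \sigma_i = 0$ and $\sum_i D_i\bm G\,\sigma_i = 0$ by inspection. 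Your remarks on well-posedness (existence of the navel under $\bm G(\mu_0) > \mathfrak g$, localizing to compact bands $\{\mathfrak g + \varepsilon \leq \bm G \leq \bm G(\mu_0)\}$ bounded away from both $\partial\bm\Delta^3$ and $c$, and then letting $\varepsilon \downarrow 0$) flesh out what the paper disposes of by analogy to Proposition~\ref{P:160514.1}. Everything else — the It\^o computation yielding $\dx\bm G(\mu(t)) = -\dx t$, the conclusion $\Gamma^{\bm G}(t) = t$ from \eqref{eq: Gamma}, and absence of arbitrage from the martingale property via Remark~\ref{R:4.2} — proceeds identically. In short: correct, and essentially the paper's own argument dressed in slightly different clothing.
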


\begin{proof}
We introduce the vector function $\sigma  = ( \sigma_1, \sigma_2, \sigma_3)'$ with components  
\begin{equation*}
\sigma_1 (x) :=  D_3 {\bm G}(x) - D_2 {\bm G}(x); \quad \sigma_2 (x) :=  D_1 {\bm G}(x) - D_3 {\bm G}(x);  \quad \sigma_3 (x) :=  D_2 {\bm G}(x) - D_1 {\bm G}(x), \quad x \in {\bm \Delta}^3_+.
\end{equation*}
If $\sigma_1 (x) = \sigma_2(x) = \sigma_3(x) = 0$, then $x = c$ is the umbilical point.   Indeed, if for some $x \in \bm \Delta^3_+$ we have $D_1 {\bm G}(x) = D_2 {\bm G}(x) = D_3 {\bm G}(x)$,  then the strict concavity of $\bm G$ yields
$$
0 =  \sum_{i=1}^3 D_i {\bm G}(x)  \big( x_i - y_i \big) > \bm G(y) - \bm G(x), \qquad  y \in {\bm \Delta}^3_+ .
$$
Next, we introduce the function 
\begin{equation}
\label{eq: Non_Deg_Sigma}
L(x) := - \frac{1}{2}  \sigma^\prime (x)  D^2{\bm G} (x) \sigma  (x), \qquad x \in {\bm \Delta}^3_+ 
\end{equation}
and note that   $L(x) > 0$ holds, as long as $x$ is not the umbilical point $c$.

Let us now consider the {It\^o} diffusion process   $ \mu (\cdot) =  ( \mu_1 (\cdot), \mu_2(\cdot), \mu_3 (\cdot)  )'$ with initial condition $\mu (0) = \mu_0$ and dynamics
\begin{equation}
\label{eq: circ_sym}
\dx \mu_i (t) = \frac{\sigma_i \big( \mu (t) \big)}{\sqrt{L (\mu (t))}} \, \dx W(t), \qquad i=1, 2,3.
\end{equation}
 It is clear from this construction and the property $ \sum_{i=1}^3 \sigma_i (x)= 0$ for all $x \in  {\bm \Delta}^n_+$ that, as long as the process $\mu (\cdot)$ is well defined by the above system, it satisfies $\mu_1 (\cdot)+  \mu_2(\cdot) + \mu_3 (\cdot) = 1$. Furthermore, if the process $\mu (\cdot)$ is well defined, elementary stochastic calculus and the property $$ 
\sum_{i=1}^3 D_i {\bm G} (x)   \sigma_i (x) = 0, \qquad x \in  {\bm \Delta}^n_+ 
$$ 
lead to the dynamics
\begin{align*}
\dx {\bm G}\big(\mu (t) \big) = \frac{1}{2} \sum_{i=1}^3 \sum_{j=1}^3 D^2_{ij}{\bm G}\big(\mu (t) \big)\, \dx \langle \mu_i , \mu_j \rangle (t) = - \dx \Gamma^{\bm G}(t).
\end{align*}
This last double summation is identically equal to $-1$, by virtue of \eqref{eq: Non_Deg_Sigma}.  
Thus, $\bm G(\mu(\cdot))$ is decreasing, and stays clear of the navel $c$ (if  this exists). Hence, by analogy with the proof of Proposition~\ref{P:160514.1}, we get that the process $\mu(\cdot)$ is well defined since the system of stochastic differential equations in \eqref{eq: circ_sym} has a pathwise unique, strong solution, up until the time $ \mathscr{D}$. This stopping time was defined in \eqref{eq: 6} and describes the first time when $\mu(\cdot)$ hits the boundary of $\bm \Delta^3$; in particular, we have for it
\begin{equation}
\label{eq: bounds_Delta}
\bm G(\mu_0)-\mathfrak{ g}  \le  \mathscr{D} = G(\mu_0) - G\big(\mu(\mathscr{D})\big)  \le \bm G(\mu_0).
\end{equation}
The market weight processes $ \mu_i (\cdot)$  for $i = 1,2,3$ are continuous martingales with values in the unit interval $[0,1]$. As a result, no relative arbitrage can exist with respect to the resulting market, over any given time horizon. Since $ \mathscr{D} \geq \bm G(\mu_0)-\mathfrak{ g}$,  we can conclude the proof.
\end{proof}

Theorem~\ref{T:160517.1} is a special case of Theorem~\ref{thm: No_Arb_Suff_Short}. We do not know whether it is possible to remove the assumption $\bm G(\mu_0) \neq \max_{x \in \bm \Delta^3} \bm G(x)$ from Theorem~\ref{thm: No_Arb_Suff_Short}, but conjecture that  this  should be possible.

\begin{rem}[Presence of a gap] 
\label{R:gap}
We continue here the discussion started in Remark~\ref{R:160719.1}. It is very instructive to compare the interval $ \big( \bm G(\mu (0) , \infty \big) $ of \eqref{eq: 4.4},    which provides the lengths of time horizons $[0,T]$ over which strong arbitrage is possible with respect to {\it any} market  that satisfies the condition in \eqref{eq: 4.3} with $\eta =1$; and the interval $\big[0, \bm G(\mu (0) )-  \mathfrak{ g}  \big] $ of Theorem~\ref{thm: No_Arb_Suff_Short},      giving the lengths of time horizons $[0,T]$ over which examples of markets  can be constructed that do not admit  relative arbitrage but satisfy the condition in \eqref{eq: 4.3}.

There is a  gap in these two intervals, when $  \mathfrak{ g}$ is positive. For instance, in the case of the entropy function $\bm G=\bm H$ 
of \eqref{eq: 3.13}, the gap is very much there, as we have $\mathfrak{ g} =2  \log  2 $, but  $\max_{x \in \bm \Delta^3} \bm H(x) = 3 \log 3$.  In this ``entropic" case, the dynamics of \eqref{eq: circ_sym} and \eqref{eq: Non_Deg_Sigma} take the form
\begin{align*}
\dx \mu_i (t) &= \log \bigg( \frac{\mu_{i+1} (t)}{\mu_{i-1} (t)}  \bigg) \frac{\dx W(t)}{ \sqrt{ L (\mu (t)) } }  ,\quad i = 1, 2,3 ,\,\,t \geq 0, \quad \text{with } \mu_0(\cdot) := \mu_3(\cdot),  \, \mu_4(\cdot) := \mu_1(\cdot);\\
	L(x) &= \sum_{i=1}^3  \frac{1}{2 x_i} \left( \log \Big( \frac{x_{i+1}}{x_{i-1}}\Big) \right)^2,  \qquad x \in {\bm \Delta}^3_+ , \quad \text{with } ~x_0 := x_3, \, x_4 := x_1.
\end{align*}
\end{rem}
 
\begin{rem}[Absence of a gap] 
\label{R:nogap}
The gap in Remark \ref{R:gap} disappears, of course, when $\mathfrak{ g} =0$;   most eminently,  when the  function $\bm G$ vanishes on the boundary  $\bm \Delta^d \setminus \bm \Delta^d_+$,  but is strictly positive on $\Delta^d_+$. 

For such a function $\bm G$ and for the market model constructed in the proof of Theorem~\ref{thm: No_Arb_Suff_Short}, we are then led from \eqref{eq: bounds_Delta} and with   the notation of \eqref{eq: 6},  to the rather remarkable identity 
$$
 \mathscr D = \mathscr D_1 \wedge \mathscr D_2 \wedge \mathscr D_3= \bm G(\mu (0)).
$$
To wit: at exactly the time $T = \bm G(\mu_0)$, one of the components of the diffusion $ \mu (\cdot)$ vanishes for the first time.

A prominent example of such a function is the geometric mean $\bm R$ from \eqref{eq:R};
 in this case, and   with $\mu_0(\cdot)$, $\mu_4(\cdot)$, $x_0$, and $x_4$ as above, the dynamics corresponding to \eqref{eq: circ_sym} and \eqref{eq: Non_Deg_Sigma} are given as 
 $$
 \dx \mu_i (t) =   \bigg( \frac{1}{\mu_{i-1} (t)} - \frac{1}{\mu_{i+1} (t)}   \bigg) \frac{ \dx W(t)}{ \sqrt{ L_* (\mu (t)) } } ,\qquad i = 1, 2,3,\,\, t \geq 0 .
 $$
Here we have set 
\begin{align*}
L_*(x) &=  \frac{\bm R(x)}{18} \left(
\sum_{i=1}^3 \frac{3}{  x_i^2} \left(     \frac{1}{x_{i-1}} -  \frac{1}{x_{i+1}}  \right)^2 - \sum_{i=1}^3  \sum_{j=1}^3   \frac{1}{ x_i  x_j } \left(     \frac{1}{x_{i-1}} -  \frac{1}{x_{i+1}}  \right) \left(     \frac{1}{x_{j-1}} -  \frac{1}{x_{j+1}}  \right)     \right)\\
	&= \frac{\bm R(x)}{6 x_1^2 x_2^2 x_3^2} 
\sum_{i=1}^3  \big( {x_{i+1}} -  {x_{i-1}}  \big)^2 = \frac{1}{2 \big( \bm R(x)\big)^5} 
 \sum_{i=1}^3  \bigg( x_i  -\frac{1}{3} \bigg)^2= \frac{\bm r (x)}{2 \big( \bm R(x)\big)^5}, \qquad x \in \bm \Delta^3_+;
\end{align*}
 the next-to-last equality follows  from \eqref{eq:160718.1}, and the last from \eqref{eq:160718.1}. 
\end{rem}

\section{Summary}\label{S:7}
In this paper we   place  ourselves in the context of continuous semimartingales $\mu(\cdot)$ taking values  in the $d$--dimensional simplex $\bm \Delta^d$. Each component of $\mu(\cdot)$ is interpreted as the relative capitalization of a company in an equity market, with respect to the whole market capitalization. We then study conditions on the volatility structure of $\mu(\cdot)$ that guarantee the existence of relative arbitrage opportunities with respect to the market. More precisely, we consider conditions that bound a cumulative  aggregation $\Gamma^{\bm G}(\cdot)$ of the volatilities of the individual components of $\mu(\cdot)$ from below. Here, the aggregation is done according to a so-called generating function $\bm G: \bm \Delta^d \to \R$, assumed to be sufficiently smooth. Then $\Gamma^{\bm G}(\cdot)$ is given by \eqref{eq: Gamma}, namely, 
 \begin{align*}
 	\Gamma^{\bm G} (\cdot)= -
\frac{1}{2}  \sum_{i=1}^d \sum_{j=1}^d \int_0^{ \cdot}    D^2_{i,j}
{\bm G}\big(\mu (t)  \big)   \,  \mathrm{d} \big \langle \mu_i, \mu_j  \big \rangle (t),
     \end{align*}
 and the condition on the cumulative aggregated market volatility   by \eqref{eq: 4.3}:
\begin{equation*}
\P \left( \text{the mapping} ~ [0,\infty) \ni t \mapsto \Gamma^{\bm G}(t)- \eta t \,\,\,\text{is nondecreasing} \right)= 1, \quad \text{for some  $\eta >0$}.
\end{equation*}
Section~\ref{S:4} recalls the trading strategy ${\bm \varphi}^{\bm G}(\cdot)$, which is a strong relative arbitrage   on the time horizon $[0,T]$ for all $T > \bm G(\mu(0)) / \eta$, provided that    \eqref{eq: 4.3} holds. It is important to note that this strategy ${\bm \varphi}^{\bm G}(\cdot)$ is ``model-free": it does not depend on the   specifications of a  particular model, and works for any continuous semimartingale model that satisfies \eqref{eq: 4.3}. 
 
 Section~\ref{S:5} provides several sufficient conditions   guaranteeing the existence of short-term relative arbitrage.  First, Subsection~\ref{SS:5.1} studies the question of strong short-term relative arbitrage.
 If a specific stock contributes to the overall market volatility, then Theorem~\ref{P:160715.1} yields the existence of such a strong arbitrage opportunity. The corresponding trading strategy turns out to be independent of the model specification but will be dependent on the choice of time horizon.  A similar construction is the underlying idea of \cite{Banner_Fernholz}. There, not a specific stock, but always the smallest one in terms of capitalization, contributes to the overall market volatility; see also Remark~\ref{R:160718.1}. 

Subsection~\ref{SS:5.2} yields sufficient conditions for the existence of short-term relative arbitrage, not necessarily strong. 
 The first sufficient condition concerns the support of $\mu(\cdot)$ and assumes that it is, in a certain weak sense, time-homogeneous (Theorem~\ref{T:5.1}). The second sufficient condition is on the strict nondegeneracy of the covariance process of $\mu(\cdot)$ (Theorem~\ref{T:5.2}). Any of the two conditions yields, in conjunction with \eqref{eq: 4.3}, the existence of a relative arbitrage opportunity for the time horizon $[0,T]$, for any $T>0$. However, the corresponding trading strategies usually depend on the model specification (including the specification on drifts) and also  on the time horizon $T$. Moreover, these sufficient conditions usually do not yield strong relative arbitrage. Nevertheless, if the market is complete, the strategies can be chosen to be strong relative arbitrages; see Remark~\ref{R:completeness}.

Section~\ref{S:6} answers negatively the long-standing question, whether the condition in \eqref{eq: 4.3} yields indeed the existence of relative arbitrage opportunities with respect to the market on $[0,T],$ for any given real number $T>0$. Specific counterexamples are constructed. In this section we assume that $d=3$, namely, that we are in a market of only three stocks. This is done   for notational convenience, as  a lower-dimensional market can always be embedded in a higher-dimensional one. A smaller choice for $d$   is not possible, as Proposition~\ref{C:160610.1} yields that in the case $d=2$ of two stocks the condition of \eqref{eq: 4.3} always yields the existence of relative arbitrage opportunities over any time horizon.

Section~\ref{S:6}  differs from the earlier  ones. In Section~\ref{S:5} we considered a fixed market model $\mu(\cdot)$  and formulated conditions on the market model that yield (possibly strong) relative arbitrage opportunities for short-time horizons. These trading strategies might or might not depend on the specific characteristics of $\mu(\cdot)$. Now, in Section~\ref{S:6}, we fix the generating function $\bm G$ and then construct market models $\mu(\cdot)$ for which there exist no relative arbitrage opportunities over time horizon $[0,T]$ for some $T>0$, but \eqref{eq: 4.3} holds. To be precise, the models we construct do not satisfy exactly \eqref{eq: 4.3}  but only the local (in time) version \eqref{eq: 4.3mod}. As pointed out in Remark~\ref{R:6.1}, this is done for notational convenience only. Some additional technical assumptions are made on the generating function $\bm G$ in Section~\ref{S:6}, most importantly that $\bm G$ is strictly concave and its second derivative is locally Lipschitz continuous. 

The constructed market models $\mu(\cdot)$ not only prevent relative arbitrage  and satisfy \eqref{eq: 4.3mod}, but also yield that $\bm G(\mu(\cdot))$ is a deterministic function of time. That is, $\mu(\cdot)$ {\it flows along the level sets of the Lyapunov function} $\bm G$.  

Moreover, Proposition~\ref{P:160712.2} yields the existence of a market model, which satisfies the nondegeneracy conditions of Theorem~\ref{T:5.2}  but {\it not   strictly,} and does {\it not} allow for relative arbitrage.   This shows that the conditions on $\mu(\cdot)$ in Theorem~\ref{T:5.2} are tight; see also Remark~\ref{R:P:160712.2}.

\medskip
While this paper answers some old open questions, it suggests several new ones. The three most important, in our opinion, are the following:
\begin{enumerate}
	\item Under \eqref{eq: 4.3} with $\eta =1$,  strong  relative arbitrage opportunities exist on $[0,T]$ for all $T> \bm G(\mu(0))$. In Section~\ref{S:6}, market models $\mu(\cdot)$ are constructed that satisfy \eqref{eq: 4.3} but do not admit relative arbitrage opportunities on $[0,T^*]$, for some real number $T^* = T^* (\mu (0))\in \big(0, \bm G(\mu(0)) \big) $. What can be said for time horizons $[0,T]$ with  $T \in (T^*, \bm G(\mu(0)]$? In this connection, see also Remarks~\ref{K:160813.1}, \ref{R:160719.1}, and \ref{R:gap}.
	\item The following question arises from the methodology used in Section~\ref{S:6} to construct the counterexamples. Assume a diffusion lives on a submanifold of $\R^d$, which is incompatible with a martingale structure (e.g., on a sphere).  If we now want to turn the diffusion into a martingale, how will the submanifold evolve through $\R^d$ (e.g., it could turn into an expanding sphere)? In this connection, see also Remark~\ref{rem: 6.3}.
	\item 
	Section~\ref{S:6} contains examples for diffusions, which could be turned into market models where short-term relative arbitrage is impossible but long-term relative arbitrage is possible. There are no additional frictions (e.g., trading costs, etc.) necessary to achieve this. Are there any interesting economic implications and, if so,  what are they? For instance, can such models arise from equilibrium theory in an economy where agents have preferences with respect to different time horizons?
\end{enumerate} 
 
 \medskip

\bibliography{aa_bib}{}
\bibliographystyle{apalike}

 \end{document}